\definecolor{mypink1}{rgb}{0.858, 0.188, 0.478}
\definecolor{mypink2}{RGB}{219, 48, 122}
\definecolor{mypink3}{cmyk}{0, 0.7808, 0.4429, 0.1412}
\definecolor{mygray}{gray}{0.6}
\theoremstyle{plain}
\newtheorem{theorem}{Theorem}[section]
\newtheorem{lemma}[theorem]{Lemma}
\newtheorem{corollary}[theorem]{Corollary}
\newtheorem{proposition}[theorem]{Proposition}
\theoremstyle{definition}
\theoremstyle{remark}
\newcommand{\modt}{|{\rm T}|}
\newcommand{\modn}{|{\rm N}|}
\numberwithin{equation}{section}
\title{Outlier detection and a tail-adjusted boxplot based on \\ extreme value theory}
\author{S. Bhattacharya and J. Beirlant}
\author{Shrijita Bhattacharya \footnote{Department of Statistics and Probability, Michigan State University,  C419 Wells Hall,  619 Red Cedar Rd East Lansing, MI 48824, {\tt bhatta61@msu.edu}} and \quad Jan Beirlant\footnote{Department of Mathematics, KULeuven, Department of Mathematical Statistics and Actuarial Science, University of the Free State}}
\begin{document}

	\maketitle
	\begin{abstract}
		{\noindent Whether an extreme observation is an outlier or not, depends strongly on the corresponding tail behaviour of the underlying distribution. We develop an automatic, data-driven method to identify extreme tail behaviour that  deviates from the intermediate and central characteristics. This allows for detecting extreme outliers or  sets of extreme data that show less spread than the bulk of the data.  To this end we extend a testing method proposed in \cite{bhatt2019} for the specific case of heavy tailed models, to all max-domains of attraction. Consequently we propose  a tail-adjusted boxplot which yields a more accurate representation of possible outliers. Several examples and simulation results illustrate the finite sample behaviour of this approach.}
	\end{abstract}
	
	

	\section{Introduction}
\label{sec:intro}
The identification of outliers has become an important topic in several statistical methods and fields of application, such as  the general field of novelty detection, next to for instance climatology and cyber crime. It then is important to judge if an observation or a set of observations can be explained from  the model fitted to the bulk of the data or from a deviating model.   

The classical boxplot introduced by \cite
{tukey1977} is the prime classical tool indicating {\it potential} outliers, 	marking individual data with an asterisk or dot when it has a distance  larger than 1.5 interquartile range ($IQR$) from the corresponding quartile $Q_1$ or $Q_3$. In fact, when  the underlying distribution of the data is normal, $X \sim \mathcal{N}(\mu,\sigma^2)$, the probability for an extreme normal observation to be wrongly indicated as an outlier in a classical boxplot is very small: at the right hand side tail of the distribution for large enough sample sizes it is approximately given by $$\mathbb{P}(X > Q_3+ 1.5 IQR)=\mathbb{P}(X > [\mu+ 0.67\, \sigma ] + 1.5 [1.34 \, \sigma])=
\mathbb{P}(Z > 2.68) = 0.0037 .$$ 
However, for distributions with a tail heavier than the normal model  the probability for a data point to be indicated by a dot is growing. The probability $\mathbb{P}(X > Q_R+ 1.5 IQR)$ equals 0.05 for an exponential distribution, while for a lognormal distribution, i.e. $\log X \sim \mathcal{N}(\mu,\sigma^2)$, this probability raises up to 0.08 when $\sigma=1$ and 0.16 when $\sigma=2$. 
For the Pareto distribution  with distribution function $F(x)=1-x^{-\alpha}$ ($x>1$, $\alpha >0$), this probability mounts to 0.073 with $\alpha =4$ and 0.16  when $\alpha =1$. In such cases the boxplot grossly overestimates the number of {\it real} outliers. 
\\
At the other side of the spectrum, considering tails with a finite right endpoint, the probability of indicating an observation as a potential outlier is typically lower, for instance  0 for the uniform distribution on (0,1), since then real outliers can  well be situated below $Q_3+1.5 IQR$.
\\	From this we can conclude that the classical boxplot can be a misleading tool for outlier detection, while it is often erroneously  used for that purpose.	In \cite{hubert2008} the classical boxplot was adjusted for skewness using some robust skewness estimators. However, in general,  tail heaviness  cannot be determined on the basis of  a skewness measure. Here extreme value \\ methodology is used  to detect outliers making direct use of tail heaviness indicators. \\


\noindent
Extreme value theory (EVT) is a  natural methodology  to describe and estimate tail heaviness. EVT starts from considering the limit distribution   of the maximum  of sequences of independent and identically   observations $X_1, X_2, \cdots, X_n$:  
\begin{equation}
\label{e:gpd}
\mathbb{P}(a_n^{-1}(\max_{i=1, \cdots,n}X_i-b_n)\leq y) \rightarrow G_{\xi}(y)=\exp(-(1+\xi y)^{-1/\xi}),\hspace{3mm} 1+\xi y>0, \hspace{3mm} \mbox{as }
n \to \infty,
\end{equation}
for some sequences $(a_n>0)$ and $(b_n)$, where the generalized extreme value distributions (GEV) with distribution function (df) $G_{\xi}$ are the only possible limit distributions. The GEV family is parametrized by the extreme value index (EVI) $\xi \in \mathbb{R}$. In case $\xi=0$ the limit GEV is given by $G_0 (y)= \exp (-\exp (-y)),\; y \in \mathbb{R}$.
The EVI is a measure of the tail-heaviness of the distribution of $X$ with a larger value of $\xi$ implying a heavier tail described by the right tail function (RTF) $\bar{F}(x)=1-F (x) = \mathbb{P}(X > x)$. 
\\

\noindent In the specific case $\xi >0$ the distributions $F$ satisfying the limit result \eqref{e:gpd}, composing the Fr\'echet domain of attraction, are given by the set of Pareto-type distributions with RTF given by
\begin{equation}
\bar{F}(x)  = \mathbb{P}(X>x) = x^{-\frac{1}{\xi}} \, \ell (x),
\label{Patype}
\end{equation}
where $\ell$ is a slowly varying function at infinity,  i.e.
\begin{equation}
{\ell (ty) \over \ell (t)}  \to  1  \text{ as } t\to\infty, \text{ for every  } y>1.
\label{ell}
\end{equation}
The estimation of $\xi$ under \eqref{Patype} has received most attention starting with the Hill (1975) estimator
\begin{equation}
H_{k} = {1 \over k}\sum_{j=1}^k \log X_{n-j+1,n} - \log X_{n-k,n} = {1 \over k}\sum_{j=1}^k V_j,
\label{e:Hill}
\end{equation}
where $X_{1,n} \leq X_{2,n} \leq \ldots \leq X_{n,n}$ denote the ordered observations and   $V_j$  denote the weighted log-spacings
\begin{equation}
V_j = j \, \log \frac{X_{n-j+1,n}}{X_{n-j,n}},\; j=1,\ldots, n-1. 
\label{def:V}
\end{equation}

\vspace{0.3cm}	\noindent
The set of distributions $F$ for which \eqref{e:gpd} holds with $\xi =0$, the Gumbel domain, mainly contains exponentially decreasing tails such as normal, gamma, Weibull and lognormal models. The Weibull domain corresponding to  $\xi<0$ consists of  distributions with a finite endpoint. Estimation of the EVI with $\xi \in \mathbb{R}$ has also been studied in detail, and here we can refer to  \cite{beir2004} and  \cite{de2006} for general reviews. In this paper we make use of the  generalized Hill estimator of $\xi \in \mathbb{R}$ given by
\begin{equation}
GH_{k} = {1 \over k}\sum_{j=1}^k 
\log (X_{n-j+1,n}H_{j}) - \log (X_{n-k,n}H_{k+1}).
\label{e:GHill}
\end{equation}	
This estimator can be visualized through linear regression  on the  extreme right $k$  points of the {\bf generalized QQ-plot} defined as   $$\Big(\log \Big(\frac{j+1}{n+1}\Big),\log (X_{n-j+1,n}H_j)\Big), \:\: j=1, \cdots,n-1.$$  Indeed this plot can be shown to be  ultimately linear and the slope can be measured with $GH_k$ (see section 5.2.3 in \cite{beir2004}).

While the EVT methods as described above concentrate on the right hand side tail, the method can be extended to the left hand side tail by applying it for instance to the transformed variable $1/X$   in case of positive data, or to $-X$ in case of negative values to the left of the median.\\

In this paper we generalize the approach from  \cite{bhatt2019}  and show that  the trimmed Hill statistic 
\begin{eqnarray}
H_{k_0,k}& =&\frac{k_0}{k-k_0}\log \Big(\frac{X_{n-k_0,n}}{X_{n-k,n}}\Big)+\frac{1}{k-k_0}\sum_{i=k_0+1}^k\log \Big(\frac{X_{n-i+1,n}}{X_{n-k,n}}\Big) \nonumber \\
&=& \frac{1}{k-k_0}\sum_{j=k_0+1}^{k} V_j, \; 0\leq k_0 < k \leq n,
\label{e:trim}
\end{eqnarray}
and  the corresponding test for outliers based on the ratio
\begin{equation}
T_{k_0,k} = \frac{(k-k_0-1)H_{k_0+1,k}}{(k-k_0)H_{k_0,k}}, \; 0\leq k_0 < k,
\label{e:stat-trim}
\end{equation}
can  still be used to develop an outlier detection method in the general case  $\xi \in \mathbb{R}$, while \cite{bhatt2019} considered $\xi >0$. 	 An estimate of the number of outliers is then obtained by a sequential method identifying the largest value of $k_0$ for which $T_{k_0,k}$ yields a significant value. Using this outlier detection mechanism, we can construct a {\em tail-adjusted boxplot} (see section \ref{sec:aut-trim} below)  where the whiskers extend from the upper quartile $Q_3$ up to the largest value among all non-outlying observations, and similarly for the left whisker below $Q_1$.
\\

\noindent Next to obtaining a value of $k_0$, an estimate of $\xi \in \mathbb{R}$ is of course very informative as it will offer a first guess concerning the type of the underlying tail. While estimation of the EVI in the presence of outliers is not the main goal of this paper, we will need a trimmed version of the generalized Hill estimator $GH_k$ as a starting value in the outlier detection procedure. Given a  value of $k_0$,  let
\begin{eqnarray}
GH_{k_0,k}& =& \frac{1}{k-k_0}\sum_{j=k_0+1}^{k}	\log (X_{n-j+1,n}H_{k_0,j}) - \log (X_{n-k,n}H_{k_0,k+1})  , \; 0\leq k_0 < k \leq n,
\label{e:gen-hill-trim}
\end{eqnarray}
where, for $k_0 <j$,
\begin{equation}
H_{k_0,j} = {1 \over j-k_0}\sum_{i=k_0+1}^j \log X_{n-i+1,n} - \log X_{n-j,n} \, , 
\label{e:hill-trim}
\end{equation}
which is a Hill-type estimator based on the observations $X_{n-j,n}\leq X_{n-j+1,n} \leq \cdots \leq X_{n-k_0,n}$.
\\
Plotting the  estimates $GH_{k_0,k}$ as a function of $k_0$ will be used as a visual device in identifying change points (such as outliers or shifts of regime) in the data characteristics.  The plot of $GH_{k_0,k}$ as a function of $k_0$ for some particular  values of $k$ will be referred to as the {\bf diagnostic $k_0$ plot}.\\


In order to illustrate the importance of tail behaviour in the determination of outliers, we consider precipitation data from France as considered in \cite{fre-clim}. Weekly maxima of hourly precipitation at 92 French stations during the fall season from 1993 to 2011 are considered, as provided by the French meteorological service M\'et\'eo-France.
The stations were  chosen in function of their quality and to have a fairly homogeneous coverage of France. Here we consider the data from Chamonix and Uzein, respectively situated in that South-East and South-West of France. \cite{fre-clim} situated the Chamonix case in the Gumbel domain while the Uzein data appeared to be Pareto distributed.  In order to avoid issues with ties a small uniform noise $U(-0.01, 0.01)$ was added to the data. The diagnostic $k_0$ plots for both stations will be given in section \ref{sec:case}. 

Concerning the  Chamonix precipitation data (N $45.93^\circ$, E $6.88^\circ$) \cite{fre-clim} proposed the estimate 0.01 for $\xi$, so that this right hand tail is situated near the Gumbel domain.   The linear fit on the exponential QQ-plot of the Chamonix data given in Figure \ref{fig:precip_1_intro} indeed shows that the tail behavior is close to exponential. While the classical boxplot indicates 4 potential outliers, the method presented here  indicates  two deviating points which arises from the fact that the second and third largest points are almost equal. When enlarging the random noise the top two values are no longer flagged as outliers. 

In contrast to the Chamonix data, the precipitation at  Uzein (N $43.38^\circ$ E $-0.42^\circ$) appears to have a Pareto-type behaviour as concluded already in \cite{fre-clim} with an EVI estimate of 0.35. This is also confirmed by a linear regression fit on the top 64 points  of the Pareto QQ-plot in Figure \ref{fig:precip_2_intro}.  Here the classical boxplot indicates 8 possible outliers. The present tail-adjusted approach indicates 13 possible outliers when using the full Pareto-type character in the top 64 data points. 
While the top 13 observations can be stated to be in line with the Pareto fit as indicated in the Pareto QQ-plot in Figure \ref{fig:precip_2_intro}, these data are indeed separated from the bulk of the data as can be observed from the time plot in Figure \ref{fig:precip_2_intro} .

\begin{figure}[H]
	\centering
	\hspace{-3mm}\includegraphics[width=0.33\textwidth]{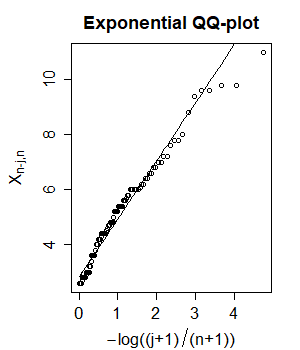}
	\hspace{-5mm}\includegraphics[width=0.33\textwidth]{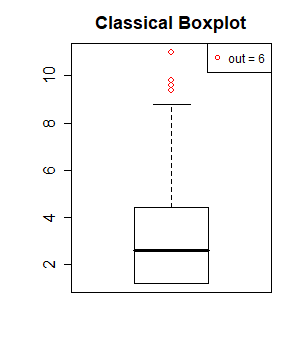}
	\hspace{-5mm}\includegraphics[width=0.33\textwidth]{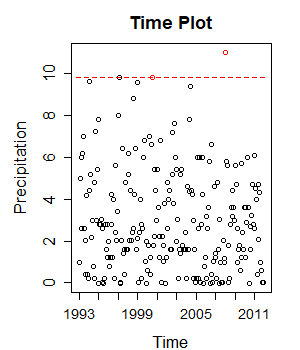}
	\caption{ {\em Precipitation} data from Chamonix station. {\em Left:} Exponential QQ-plot. {\em Middle:} Classical boxplot. {\em Right:}  Time plot with indication of the outlier threshold for the tail-adjusted method when using the top 114 data.}
	\label{fig:precip_1_intro}
\end{figure}

\begin{figure}[H]
	\centering
	\hspace{-3mm}\includegraphics[width=0.33\textwidth]{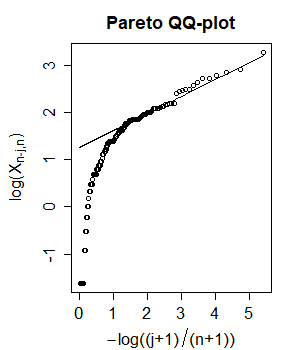}
	\hspace{-5mm}\includegraphics[width=0.33\textwidth]{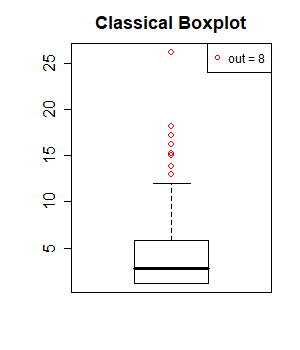}
	\hspace{-5mm}\includegraphics[width=0.33\textwidth]{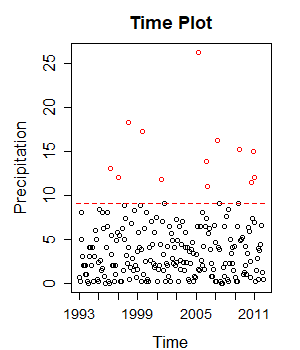}
	\caption{{\em Precipitation} data from Uzein station. {\em Left:} Pareto QQ-plot. {\em Middle:} Classical boxplot. {\em Right:}  Time plot with indication of the outlier threshold for the tail-adjusted method when using the top 64 data. }
	\label{fig:precip_2_intro}
\end{figure}

\noindent
The rest of the paper is organized as follows. In section \ref{sec:test} we develop tests for outliers based on extreme value methodology  whatever the tail behaviour of the underlying distribution is. We provide mathematical justification for the proposed procedure, the proof of which are deferred to the Appendix. In section \ref{sec:sim}  we report on a simulation study, while in section \ref{sec:case} we  discuss several case studies in order to illustrate the practical use of the proposed methods.
Proofs are deferred to the Appendix, as well as some parts of the simulation study.


	\section{Testing for outliers under extreme value conditions}
\label{sec:test}
\noindent
The main goal of this paper is to generalize the approach from \cite{bhatt2019} based on $T_{k_0,k}$ from $\xi >0$ to $\xi \in \mathbb{R}$. To this end we derive distributional properties of the trimmed Hill statistic $H_{k_0,k}$ and the test statistic $T_{k_0,k}$ for all three domains of attraction. This then leads to an algorithm for automated selection of the number of outliers $k_0$.	

\subsection{On the distribution of $H_{k_0,k}$ and $T_{k_0,k}$}

We here consider asymptotic properties as $k,n \to \infty$ with $k/n \to 0$ and $k_0/k \to 0$. 
We use conditions under which \eqref{e:gpd} holds in terms of the tail quantile function $U(x)=F^{\leftarrow}(1-x^{-1})$ where $F^{\leftarrow}$ denotes the quantile function which is defined as the left continuous inverse of the df $F$. Given that the Hill statistic is defined in terms of the logarithm of the empirical quantiles, we use the equivalent condition of  \eqref{e:gpd} in terms of $\log U$, as can be found for instance in section 3.5 in \cite{de2006}:
for some positive function $q$ 
$$
{\log U(tx) -\log U(t) \over q(t)}\to \Psi_{\gamma_{-}}(x) :={x^{\xi_-}-1 \over \xi_-} \text{ as } t\to \infty, \text{ for every } x>1,
$$
where $\xi_- = \min (\xi,0)$ and where $\Psi_{\gamma_{-}}(x)$ reads as $\log x$ in case $\xi \geq 0$ or $\xi_- =0$. \\

\noindent
In order to study the bias of the trimmed Hill $H_{k_0,k}$ and $T_{k_0,k}$ statistic we will in fact need a second order condition as can be found in Chapter 3 of  \cite{de2006}: as $t \to \infty$
\begin{equation}
\frac{{\log U(tx) -\log U(t) \over q(t)}- {x^{\xi_-}-1 \over \xi_-}}{Q(t)} \to \int_1^x s^{\xi_- -1}\int_1^s
u^{\rho -1}du \, ds,\; \text{ for all } x>0,
\label{2ndorder}
\end{equation}
with $Q$ not changing sign eventually and $Q(t) \to 0$ as $t \to \infty$, and $\rho \leq 0$. \\

\noindent
By Theorem 2.3.6 in \cite{de2006},  functions $q_0(t)$ and $Q_0(t)$ can be constructed following (3.5.12) and (3.5.13) in that reference, so that for every $\epsilon, \delta>0$, there exists $t_0=t_0(\epsilon,\delta)$ such that for all $t$, $t\geq t_0$ and $x>1$:
\begin{equation}
\Bigg|\frac{\frac{\log U(tx)-\log U(t)}{q_0(t)}-\Psi_{\xi_{-}}(x)}{Q_0(t)}-\Phi_{\xi_-,\rho}(x) \Bigg|\leq \epsilon x^{\xi_{-}+\rho +\delta},
\label{e:bounds}
\end{equation}
with $\Phi_{\xi_{-}, \rho}(x)$ defined as
$$ \Phi_{\xi_{-}, \rho}(x)=\begin{cases}
\frac{x^{\xi_{-}+\rho}-1}{\xi_{-}+\rho},&\xi_{-}+\rho < 0, \rho<0,\\
\frac{x^{\xi_{-}}}{\xi_{-}}\log x,& \xi_{-} < 0, \rho=0,\\
\frac{1}{2}(\log x)^2,& \gamma_{-}=0, \rho=0.\\
\end{cases}$$

To obtain an approximation of the distribution of the test statistic $T_{k_0,k}$, we first state a general representation theorem for the trimmed Hill statistics $H_{k_0,k}$ after which 	we can provide an asymptotic representation of $T_{k_0,k}$.	
\begin{theorem}
	\label{thm:xi-dist}
	Suppose \eqref{2ndorder} holds for some $\xi \in \mathbb{R}$, $\rho \leq 0$. Then for $k/n \rightarrow 0$, $k_0=o(k)$, and \begin{equation}
	\label{e:q-asy}
	\sqrt{k} Q(n/k) \to \lambda
	\end{equation}
	with $\lambda$ finite, we have
	\begin{itemize}
		\item when $\xi \geq 0$,
		$$\frac{k-k_0}{k}\frac{H_{k_0,k}}{q_0(Y_{n-k,n})}\stackrel{d}{=}
		\frac{1}{k}\sum_{i=k_0+1}^{k}Z_i+\frac{\lambda c_{0,\rho}}{\sqrt{k}}+o_{\mathbb{P}}(k^{-1/2})
		$$
		\item when $\xi<0$,
		$$\frac{k-k_0}{k}\frac{H_{k_0,k}}{q_0(Y_{n-k,n})}\stackrel{d}{=}
		\frac{k_0}{k\xi}(\exp(\xi \sum_{j=k_0+1}^{k}Z_j/j)-1)+ \frac{1}{k\xi}\sum_{i=k_0}^{k-1}(\exp(\xi \sum_{j=i+1}^{k}Z_j/j)-1)+\frac{\lambda c_{\xi,\rho}}{\sqrt{k}}+o_{\mathbb{P}}(k^{-1/2})$$
	\end{itemize}
	where the constants $c_{0,\rho}$ and $c_{\xi,\rho}$ satisfy
	\begin{align}
	\label{e:bias-def}
	c_{0,\rho}=\begin{cases}
	\frac{1}{\rho(1-\rho)}\Big(1-(\frac{k_0}{k})^{1-\rho}\Big), & \rho<0\\
	\Big(1-\frac{k_0}{k}\Big(1+\log \frac{k_0}{k}\Big)\Big), & \rho=0
	\end{cases}; \;\;
	c_{\xi,\rho}=\begin{cases}
	\frac{1}{\rho(1-\xi-\rho)}\Big(1-{(\frac{k_0}{k})}^{1-\xi-\rho}\Big), &\rho<0,\\
	\frac{1}{\xi(1-\xi)^2}\Big(1-{(\frac{k_0}{k})}^{1-\xi}\Big(1+\xi(1-\xi)\log\frac{k_0}{k}\Big)\Big), &\rho=0.
	\end{cases}
	\end{align}
\end{theorem}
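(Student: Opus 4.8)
The plan is to reduce to i.i.d.\ standard exponential spacings via the quantile transform and R\'enyi's representation, insert the uniform second-order bound \eqref{e:bounds}, and then treat the leading term and the bias term separately. Concretely I would write $X_{n-i,n}\stackrel{d}{=}U(Y_{n-i,n})$ with $Y_{1,n}\le\cdots\le Y_{n,n}$ the order statistics of a standard Pareto sample and $E_{i,n}:=\log Y_{i,n}$ the corresponding exponential order statistics; by R\'enyi's representation the weighted spacings $Z_j:=j(E_{n-j+1,n}-E_{n-j,n})$, $j=1,\dots,n-1$, are i.i.d.\ standard exponential, and for $j\le k$ one has $\log(Y_{n-j+1,n}/Y_{n-k,n})=\sum_{i=j}^{k}Z_i/i$. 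Setting $r_j:=Y_{n-j,n}/Y_{n-k,n}\ge1$ and expanding each increment $\log U(\cdot)-\log U(Y_{n-k,n})$ by means of \eqref{e:bounds} with base point $t=Y_{n-k,n}$ (legitimate since $\mathbb{P}(Y_{n-k,n}\ge t_0)\to1$) gives
$$\frac{V_j}{q_0(Y_{n-k,n})}=j\bigl[\Psi_{\xi_-}(r_{j-1})-\Psi_{\xi_-}(r_j)\bigr]+Q_0(Y_{n-k,n})\,j\bigl[\Phi_{\xi_-,\rho}(r_{j-1})-\Phi_{\xi_-,\rho}(r_j)\bigr]+R_j,$$
with $|R_j|\le\epsilon\,Q_0(Y_{n-k,n})\,j\,(r_{j-1}^{\xi_-+\rho+\delta}+r_j^{\xi_-+\rho+\delta})$; summing over $k_0<j\le k$ and dividing by $k$ reproduces the left-hand side of the assertion.

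For the leading (stochastic) term I distinguish the two cases. When $\xi\ge0$ we have $\xi_-=0$ and $\Psi_0=\log$, so $j[\Psi_0(r_{j-1})-\Psi_0(r_j)]=j\log(Y_{n-j+1,n}/Y_{n-j,n})=Z_j$, and the leading contribution is exactly $k^{-1}\sum_{i=k_0+1}^{k}Z_i$. When $\xi<0$ we have $\Psi_\xi(r)=(r^\xi-1)/\xi$, and since $r_j^\xi=\exp(\xi\sum_{i=j+1}^{k}Z_i/i)=:b_j$ (with $b_k=1$ and $r_{j-1}^\xi=b_{j-1}$), the leading contribution is $(k\xi)^{-1}\sum_{j=k_0+1}^{k}j(b_{j-1}-b_j)$; a summation by parts rewrites it as
$$\frac1{k\xi}\Bigl[(k_0+1)b_{k_0}+\sum_{j=k_0+1}^{k-1}b_j-k\Bigr]=\frac{k_0}{k\xi}\bigl(b_{k_0}-1\bigr)+\frac1{k\xi}\sum_{i=k_0}^{k-1}\bigl(b_i-1\bigr),$$
which is precisely the stochastic term in the statement.

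For the bias term $Q_0(Y_{n-k,n})\,k^{-1}\sum_{j=k_0+1}^{k}j[\Phi_{\xi_-,\rho}(r_{j-1})-\Phi_{\xi_-,\rho}(r_j)]$ I Taylor-expand $\Phi_{\xi_-,\rho}$ to first order in the increment $\log r_{j-1}-\log r_j=Z_j/j$, which reduces the $j$-th summand, up to lower-order terms, to $r_j^{\xi_-+\rho}Z_j$ when $\rho<0$ (and to the analogous expression involving $r_j^{\xi_-}\log r_j$ when $\rho=0$). Since $r_j^{\xi_-+\rho}$ is $\sigma(Z_{j+1},\dots,Z_k)$-measurable with $\mathbb{E}[r_j^{\xi_-+\rho}]\to(j/k)^{-(\xi_-+\rho)}$, a law of large numbers for such reverse-martingale-type sums shows the normalized sum converges in probability to the elementary integral $\int_{k_0/k}^{1}t^{-(\xi_-+\rho)}\,dt$ (resp.\ its $\rho=0$ analogue). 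Combining this with the facts that $Q_0$ is regularly varying of index $\rho$ with $Q_0\sim\rho^{-1}Q$ when $\rho<0$ and $Q_0\sim Q$ when $\rho=0$ (the construction (3.5.12)--(3.5.13) in \cite{de2006}), that $(k/n)Y_{n-k,n}\stackrel{\mathbb{P}}{\to}1$, and with assumption \eqref{e:q-asy}, which together give $\sqrt k\,Q_0(Y_{n-k,n})\stackrel{\mathbb{P}}{\to}\lambda/\rho$ (resp.\ $\lambda$), one obtains the bias contributions $\lambda c_{0,\rho}/\sqrt k$ and $\lambda c_{\xi,\rho}/\sqrt k$ with the constants in \eqref{e:bias-def}, up to $o_{\mathbb{P}}(k^{-1/2})$.

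The hard part is the error control: showing $k^{-1}\sum_{j=k_0+1}^{k}R_j=o_{\mathbb{P}}(k^{-1/2})$, and that the first-order and LLN replacements used in the bias term cost only $o_{\mathbb{P}}(1)\cdot Q_0(Y_{n-k,n})=o_{\mathbb{P}}(k^{-1/2})$. The remainder bound carries the factor $j\,r_j^{\xi_-+\rho+\delta}$, and the extreme ratios $r_j$ for small $j$ --- in particular $r_1=Y_{n,n}/Y_{n-k,n}$ --- are of stochastic order $k$ (or larger in the heavy-tailed regime), so one must play off the arbitrarily small $\epsilon$ in \eqref{e:bounds} against the decay $Q_0(Y_{n-k,n})=O_{\mathbb{P}}(k^{-1/2})$, using Potter-type/moment bounds for ratios of Pareto order statistics and a sufficiently small choice of $\delta>0$ (exploiting $\xi_-+\rho\le0$). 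Carefully propagating the nonlinear leading term through the summation by parts when $\xi<0$ is the other point requiring attention. I would relegate these routine but lengthy estimates to the Appendix.
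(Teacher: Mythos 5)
Your overall architecture (quantile transform, R\'enyi representation, insertion of the uniform second-order bound \eqref{e:bounds} at the base point $t=Y_{n-k,n}$, and the split into a stochastic leading term, a bias term of order $Q_0(Y_{n-k,n})$, and a remainder) is the same as the paper's, and your leading-term computation is correct: the Abel summation for $\xi<0$ reproduces exactly the expression the paper obtains in Proposition \ref{prop:a-def} by interchanging the order of summation, and the $\xi\ge0$ case is immediate. The difference is that you start from the spacings representation $H_{k_0,k}=(k-k_0)^{-1}\sum_j V_j$ and expand each spacing as a \emph{difference} of two increments to the base point, whereas the paper starts from the base-point representation \eqref{e:xi-trim-u} in which each $\log U(Y_{n-i,n})-\log U(Y_{n-k,n})$ appears once with weight $1/k$.

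That choice creates a genuine gap in your remainder control. Your term-by-term bound $|R_j|\le\epsilon\,Q_0(Y_{n-k,n})\,j\,(r_{j-1}^{\xi_-+\rho+\delta}+r_j^{\xi_-+\rho+\delta})$ destroys the cancellation in the difference $j[R(t,r_{j-1})-R(t,r_j)]$, and since $\xi_-+\rho+\delta$ can be taken $\le\delta$ so that $r_j^{\xi_-+\rho+\delta}=O_{\mathbb{P}}(1)$ uniformly, the resulting bound is $k^{-1}\sum_{j=k_0+1}^k|R_j|\le\epsilon\,Q_0(Y_{n-k,n})\,O_{\mathbb{P}}(k)=\epsilon\,O_{\mathbb{P}}(k^{1/2})$. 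No choice of the fixed $\epsilon$ in \eqref{e:bounds}, nor any play-off against $Q_0(Y_{n-k,n})=O_{\mathbb{P}}(k^{-1/2})$, can bring this down to the required $o_{\mathbb{P}}(k^{-1/2})$; the strategy you announce for the ``hard part'' therefore cannot close the argument as stated. The fix is to Abel-sum the remainder contribution \emph{before} bounding (equivalently, to adopt the paper's base-point decomposition from the outset), which leaves $\frac{k_0+1}{k}R(t,r_{k_0})+\frac1k\sum_i R(t,r_i)$ and hence the bound $\epsilon\,O_{\mathbb{P}}(1)$ of Proposition \ref{lem:c-conv}, where the sum $\frac1k\sum_i r_i^{\xi_-+\rho+\delta}$ is controlled by a law of large numbers for uniform (or Gamma-ratio) order statistics. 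The same remark applies to your bias term: rather than Taylor-expanding $\Phi_{\xi_-,\rho}$ in the increments $Z_j/j$ and invoking a reverse-martingale law of large numbers (which would require yet another layer of error control that you have not supplied), the Abel-summed form $\frac{k_0}{k}\Phi_{\xi_-,\rho}(r_{k_0})+\frac1k\sum_i\Phi_{\xi_-,\rho}(r_i)$ is evaluated directly in Proposition \ref{lem:b-conv} via the Gamma-ratio limit lemmas, yielding the constants in \eqref{e:bias-def} without any linearization.
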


\noindent	Next, we consider the asymptotic distribution of $T_{k_0,k}$.
\begin{theorem}
	\label{thm:t-dist}
	Suppose \eqref{2ndorder} holds for some $\xi \in \mathbb{R}$, $\rho \leq 0$. Then for $k/n \rightarrow 0$, $k_0=o(k)$, and $$
	\sqrt{k} Q(n/k) \to \lambda
	$$
	with $\lambda$ finite,   we have for any $\delta >0$ that
	\begin{itemize}
		\item
		when $\xi \geq 0$		
		$$	1-T_{k_0,k} \stackrel{d}{=} \frac{Z_{k_0+1}	\left(1+O_\mathbb{P}(k^{-1/2})+O_{\mathbb{P}}(k^{\rho+\delta-1/2}k_0^{1-\rho-\delta}) \right)}{\sum_{j=k_0+1}^k Z_j}$$
		\item when $\xi <0$
		\begin{align*} 
		1-T_{k_0,k}\stackrel{d}{=}\Big(\frac{k_0+1}{k}\Big)^{\xi}\:\:\frac{(k_0+1)(\exp(\frac{\xi Z_{k_0+1}}{k_0+1})-1)\left(1+O_\mathbb{P}(k^{-1/2})+O_{\mathbb{P}}(k^{\rho+\delta-1/2}k_0^{1-\rho-\delta})\right)}{k_0(\exp{(\sum_{j=k_0+1}^k \frac{\xi Z_j}{j})}-1)+\sum_{i=k_0}^{k-1}(\exp({\sum_{j=i+1}^k}\frac{\xi Z_j}{j})-1)}
		\end{align*}
	\end{itemize}
	where $Z_{1}, Z_2, \cdots$ are independent standard exponentially distributed.
\end{theorem}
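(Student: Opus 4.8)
The plan is to reduce the statistic to a ratio of one weighted log-spacing over a trimmed Hill sum, treat the denominator with Theorem~\ref{thm:xi-dist}, and derive a sufficiently sharp representation of the numerator directly from the second-order bound~\eqref{e:bounds}. For the reduction, note that $(k-k_0)H_{k_0,k}=\sum_{j=k_0+1}^{k}V_j$ and $(k-k_0-1)H_{k_0+1,k}=\sum_{j=k_0+2}^{k}V_j$, so subtracting leaves $V_{k_0+1}$ and
\[
1-T_{k_0,k}=\frac{(k-k_0)H_{k_0,k}-(k-k_0-1)H_{k_0+1,k}}{(k-k_0)H_{k_0,k}}=\frac{V_{k_0+1}}{\sum_{j=k_0+1}^{k}V_j};
\]
everything then comes down to a joint representation, on one probability space, of $V_{k_0+1}$ and of $\sum_{j=k_0+1}^{k}V_j=(k-k_0)H_{k_0,k}$. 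For the denominator I would quote Theorem~\ref{thm:xi-dist} on the probability space on which it is proved --- the quantile transform $X_{n-j+1,n}\stackrel{d}{=}U(Y_{n-j+1,n})$ with $Y$ standard Pareto, the $Z_j$ coming from the R\'enyi representation of the upper exponential spacings $\log Y_{n-j+1,n}-\log Y_{n-j,n}\stackrel{d}{=}Z_j/j$. For $\xi\ge0$ it gives $(k-k_0)H_{k_0,k}\stackrel{d}{=}q_0(Y_{n-k,n})\big(\sum_{j=k_0+1}^{k}Z_j\big)\big(1+O_{\mathbb{P}}(k^{-1/2})\big)$, the bias $\lambda c_{0,\rho}\sqrt k+o_{\mathbb{P}}(\sqrt k)$ being absorbed since $\sum_{j=k_0+1}^{k}Z_j\asymp k$; for $\xi<0$ it gives the analogue whose leading term is $\xi^{-1}q_0(Y_{n-k,n})\big(k_0(\exp(\xi\sum_{j=k_0+1}^{k}Z_j/j)-1)+\sum_{i=k_0}^{k-1}(\exp(\xi\sum_{j=i+1}^{k}Z_j/j)-1)\big)$, again up to a relative $1+O_{\mathbb{P}}(k^{-1/2})$.

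The heart of the argument is a matching, sufficiently sharp representation of the single spacing $V_{k_0+1}$ in terms of the same $Z_j$. I would \emph{not} obtain it by differencing two applications of Theorem~\ref{thm:xi-dist}: each of those pins down $(k-k_0)H_{k_0,k}/q_0(Y_{n-k,n})$ only up to an additive $o_{\mathbb{P}}(\sqrt k)$, whereas $V_{k_0+1}/q_0(Y_{n-k,n})$ is itself of order at most one, so differencing would destroy all precision. Instead I would write $V_{k_0+1}=(k_0+1)\big[(\log U(Y_{n-k_0,n})-\log U(Y_{n-k,n}))-(\log U(Y_{n-k_0-1,n})-\log U(Y_{n-k,n}))\big]$ and apply the uniform bound~\eqref{e:bounds} to each bracket with $t=Y_{n-k,n}$, at the points $x_1=Y_{n-k_0,n}/Y_{n-k,n}$ and $x_2=Y_{n-k_0-1,n}/Y_{n-k,n}$ (both exceed $1$, and exceed $t_0$ with probability tending to one since $Y_{n-k,n}\to\infty$). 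Subtracting yields
\[
V_{k_0+1}\stackrel{d}{=}(k_0+1)\,q_0(Y_{n-k,n})\Big[\big(\Psi_{\xi_-}(x_1)-\Psi_{\xi_-}(x_2)\big)+Q_0(Y_{n-k,n})\big(\Phi_{\xi_-,\rho}(x_1)-\Phi_{\xi_-,\rho}(x_2)+O(\epsilon(x_1^{\xi_-+\rho+\delta}+x_2^{\xi_-+\rho+\delta}))\big)\Big].
\]
The leading term uses $\log(x_1/x_2)=\log Y_{n-k_0,n}-\log Y_{n-k_0-1,n}\stackrel{d}{=}Z_{k_0+1}/(k_0+1)$, the upper exponential spacing of order $k_0+1$: when $\xi\ge0$ (so $\Psi_{\xi_-}=\log$) it equals exactly $q_0(Y_{n-k,n})Z_{k_0+1}$, and when $\xi<0$ it equals $\xi^{-1}(k_0+1)\,x_2^{\xi}\big(\exp(\xi Z_{k_0+1}/(k_0+1))-1\big)q_0(Y_{n-k,n})$, where $x_2$ concentrates around $k/(k_0+1)$, since $\log x_2=\sum_{j=k_0+2}^{k}Z_j/j=\log(k/k_0)+O_{\mathbb{P}}(k_0^{-1/2})$; hence $x_2^{\xi}$ may be replaced by the deterministic power of $k_0/k$ that appears in the statement. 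Crucially the prefactor $q_0(Y_{n-k,n})$ is exactly the one carried by the denominator, so it cancels in the ratio.

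It remains to bound the error of this representation relative to its leading term. The $Q_0$-bracket is controlled via $Q_0(Y_{n-k,n})=O_{\mathbb{P}}(k^{-1/2})$ (from~\eqref{e:q-asy}, $Y_{n-k,n}/(n/k)\to1$ and the regular variation of $Q_0$); via $(k_0+1)\big(\Phi_{\xi_-,\rho}(x_1)-\Phi_{\xi_-,\rho}(x_2)\big)$, which turns out to be of smaller order than the $\epsilon$-term below; and via $(k_0+1)\,\epsilon(x_1^{\xi_-+\rho+\delta}+x_2^{\xi_-+\rho+\delta})=O_{\mathbb{P}}(k_0^{1-\rho-\delta}k^{\rho+\delta})$, using $x_1,x_2\asymp k/k_0$ --- a two-sided concentration, needed because $\xi_-+\rho+\delta$ may have either sign. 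Multiplying the last quantity by $Q_0(Y_{n-k,n})$ and dividing through by the leading term --- whereupon $q_0(Y_{n-k,n})$ and, for $\xi<0$, the deterministic power of $k_0/k$ cancel, leaving a random factor bounded away from $0$ in probability --- gives a relative error $O_{\mathbb{P}}(k^{\rho+\delta-1/2}k_0^{1-\rho-\delta})$ for $V_{k_0+1}$. Forming the ratio with the denominator then cancels $q_0(Y_{n-k,n})$ completely, the denominator contributes its own relative $O_{\mathbb{P}}(k^{-1/2})$, and collecting both error factors onto the numerator yields the two displayed representations. The main obstacle will be precisely this step: the expansion must be re-derived at the level of a single spacing rather than borrowed from Theorem~\ref{thm:xi-dist}, while keeping the $Z_j$ coupled to those of the denominator --- which forces the whole argument onto the one R\'enyi/quantile-transform construction --- and the bookkeeping of $x_i^{\xi_-+\rho+\delta}$ against Potter-type control of $q_0$ and $Q_0$ near $Y_{n-k,n}$ (and, when $\xi<0$, the replacement of the random $x_2^{\xi}$ by its deterministic equivalent) is the delicate part.
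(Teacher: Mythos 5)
Your proposal is correct and reaches the stated error rates, but it handles the crucial step --- the second-order representation of the single spacing $V_{k_0+1}$ --- by a genuinely different device than the paper. You expand $\log U$ twice at the common threshold $t=Y_{n-k,n}$, at the points $x_1=Y_{n-k_0,n}/Y_{n-k,n}$ and $x_2=Y_{n-k_0-1,n}/Y_{n-k,n}$, and subtract; the paper instead applies \eqref{e:log-u} once, locally, with $t=Y_{n-k_0-1,n}$ and $x=Y_{n-k_0,n}/Y_{n-k_0-1,n}$ (its relation \eqref{e:Vk0}), which forces it to transport the scale functions back to the threshold: it needs Potter bounds to show $(k_0+1)Q_0(Y_{n-k_0-1,n})=O_{\mathbb{P}}(k^{-1/2}k_0(k/k_0)^{\rho+\delta})$ and, more substantially, the second-order relation (2.3.18) of de Haan and Ferreira for $q_0(tx)/q_0(t)$ to establish $\bigl(\tfrac{k_0+1}{k}\bigr)^{\xi_-}q_0(Y_{n-k_0-1,n})/q_0(Y_{n-k,n})=1+o_{\mathbb{P}}(k^{-1/2}(k/k_0)^{\rho+\delta})$ --- a step your route avoids entirely, since $q_0(Y_{n-k,n})$ is the common prefactor of numerator and denominator and cancels in the ratio. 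The trade-off is that in your version the two remainders $\epsilon x_i^{\xi_-+\rho+\delta}$ do not difference away and are each much larger than the leading difference $\Psi_{\xi_-}(x_1)-\Psi_{\xi_-}(x_2)\asymp x_2^{\xi_-}/k_0$; your bookkeeping correctly shows that after multiplication by $(k_0+1)Q_0(Y_{n-k,n})=O_{\mathbb{P}}(k_0k^{-1/2})$ this still yields exactly the claimed relative error $O_{\mathbb{P}}(k^{\rho+\delta-1/2}k_0^{1-\rho-\delta})$, so nothing is lost at the level of the stated rates. You also correctly identify, and correctly reject, the tempting shortcut of differencing two applications of Theorem \ref{thm:xi-dist}. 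One point to flag in either approach: the replacement of the random factor $x_2^{\xi}=\exp\bigl(\xi\sum_{j=k_0+2}^{k}Z_j/j\bigr)$ by a deterministic power of $(k_0+1)/k$ introduces a relative error of order $k_0^{-1/2}$, which is not absorbed by the displayed error terms when $k_0$ stays bounded; this affects the paper's passage from \eqref{e:fracq0} to \eqref{e:q0R} just as much as your final substitution, so it is an issue inherited from the statement rather than a defect of your argument.
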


\noindent
{\bf Remark 2.1.} Note that in case $\xi  \geq 0$ the statistic $1-T_{k_0,k}$ is approximated by a Beta$(k-k_0-1,1)$ distribution. This corresponds with Proposition 4.1 in \cite{bhatt2019} which was obtained for $\xi >0$ with $\ell$ in \eqref{Patype} constant.\\

\noindent
The next result follows  from Theorem 2.2 using the central limit theorem on $k^{-1}\sum_{j=k_0+1}^k Z_j$ and $k^{-1}(k_0\exp{(\sum_{j=k_0+1}^k \xi Z_j/j)}+\sum_{i=k_0}^{k-1}\exp({\sum_{j=i+1}^k}\xi Z_j/j)-k)$.

\begin{corollary}
	\label{cor:k0-k-est}
	Under the conditions of Theorem 2.3, for any $\delta >0$ we have
	$$
	\left({k \over {k_0+1}} \right)^{1-\xi_-}(1-T_{k_0,k})
	\stackrel{d}{=}(1-\xi_-) h_{\xi_-}\left( {Z \over k_0+1}\right)
	\left(1+O_{\mathbb{P}}(k^{-1/2})+ O_{\mathbb{P}}(k^{\rho+\delta-1/2}k_0^{1-\rho-\delta}) \right),
	$$
	with $Z$ standard exponentially distributed and $h_{\xi_-}(u)= (e^{u\xi_-}-1)/\xi_-$ if $\xi_-<0$, and $h_{\xi_-}(u)=u$ otherwise.
\end{corollary}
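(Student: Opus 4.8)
The plan is to read the statement off directly from Theorem~\ref{thm:t-dist}, in each case replacing only the random denominator there by its asymptotic value and then collecting the powers of $k_0+1$ and $k$. What makes this essentially mechanical is that the numerators in Theorem~\ref{thm:t-dist} already have the right form: for $\xi\ge 0$ the numerator is literally $Z_{k_0+1}=h_{\xi_-}(Z_{k_0+1})$ with $\xi_-=0$, while for $\xi<0$ one has, by the very definition of $h_{\xi_-}$, the exact identity $(k_0+1)(\exp(\xi Z_{k_0+1}/(k_0+1))-1)=\xi(k_0+1)h_{\xi_-}(Z_{k_0+1}/(k_0+1))$, so no Taylor expansion of the numerator is needed. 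For $\xi\ge 0$ the denominator is $\sum_{j=k_0+1}^{k}Z_j$, a sum of $k-k_0$ i.i.d.\ standard exponentials; by the law of large numbers together with the central limit theorem it equals $k(1+O(k_0/k)+O_{\mathbb{P}}(k^{-1/2}))$, using $k_0=o(k)$. Dividing $Z_{k_0+1}$ by it and multiplying by $(k/(k_0+1))^{1-\xi_-}=k/(k_0+1)$ yields, with $Z:=Z_{k_0+1}$, the quantity $\tfrac{Z}{k_0+1}$ times $(1+O(k_0/k)+O_{\mathbb{P}}(k^{-1/2}))$ and times the error inherited from Theorem~\ref{thm:t-dist}; since $\rho\le 0$ and $\delta>0$, the term $O(k_0/k)$ is at most $O_{\mathbb{P}}(k^{-1/2})$ when $k_0\lesssim\sqrt{k}$ and is dominated by $O_{\mathbb{P}}(k^{\rho+\delta-1/2}k_0^{1-\rho-\delta})$ otherwise, so the claim follows.

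\smallskip

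\noindent For $\xi<0$ I would instead use the hint recorded in the remark just before the statement: the denominator equals $k_0\exp(\sum_{j=k_0+1}^k\xi Z_j/j)+\sum_{i=k_0}^{k-1}\exp(\sum_{j=i+1}^k\xi Z_j/j)-k$, and a central limit theorem for $k^{-1}$ times this quantity yields $\tfrac{k\xi}{1-\xi}(1+O_{\mathbb{P}}(k^{-1/2}))$, up to a deterministic correction of relative order $(k_0/k)^{1-\xi}$ stemming from the gap between its mean and $\tfrac{k\xi}{1-\xi}$ (that mean being evaluated from $\mathbb{E}\exp(cZ_j)=(1-c)^{-1}$ and Stirling's formula). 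Substituting this and the numerator identity into Theorem~\ref{thm:t-dist}, the factor $\xi$ cancels, the reciprocal of $\tfrac{1}{1-\xi}$ produces the factor $1-\xi$, and a factor $(k_0+1)/k$ is released from the numerator; combined with the prefactor $((k_0+1)/k)^{\xi_-}$ of Theorem~\ref{thm:t-dist} this leaves a single power of $(k_0+1)/k$, so that after multiplying by the normalizing power of $k/(k_0+1)$ used in the statement one is left with $(1-\xi_-)h_{\xi_-}(Z/(k_0+1))$, $Z:=Z_{k_0+1}$ standard exponential, times $1$ plus errors. It then remains to check that the correction $(k_0/k)^{1-\xi}$, the central limit fluctuation of the denominator, and the second-order bias term already present in the error of Theorem~\ref{thm:t-dist} are all of order $O_{\mathbb{P}}(k^{-1/2})+O_{\mathbb{P}}(k^{\rho+\delta-1/2}k_0^{1-\rho-\delta})$.

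\smallskip

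\noindent The hard part will be the $\xi<0$ denominator. In contrast to the $\xi\ge 0$ case it is not a sum of independent summands --- the terms $\exp(\sum_{j=i+1}^k\xi Z_j/j)$ for different indices $i$ share tail blocks of the $Z_j$'s --- so obtaining the rate $O_{\mathbb{P}}(k^{-1/2})$ requires a genuine central limit argument for this array, essentially the one behind the asymptotic normality of the generalized Hill estimator for $\xi<0$, together with a careful estimate of how far the mean of the array lies from $\tfrac{k\xi}{1-\xi}$. The remaining, purely bookkeeping, difficulty is to confirm, for every admissible choice of $k_0$, $\rho$ and $\delta$, that all the remainders collected above are of order at most $k^{-1/2}$ or $k^{\rho+\delta-1/2}k_0^{1-\rho-\delta}$. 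Once these are in place the corollary follows by collecting terms; the rest is routine algebra with the factors $(k_0+1)/k$.
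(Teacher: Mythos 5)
Your overall route is exactly the paper's: the authors justify this corollary with a single sentence (``follows from Theorem~\ref{thm:t-dist} using the central limit theorem on $k^{-1}\sum_{j=k_0+1}^kZ_j$ and on $k^{-1}$ times the $\xi<0$ denominator''), and your proposal is that sentence fleshed out --- the numerator identities, the law of large numbers/CLT for the two denominators, and the bookkeeping of powers of $(k_0+1)/k$. Your guess about the centering of the $\xi<0$ denominator is in fact exactly right: using $\mathbb{E}\exp\bigl(\xi\sum_{j=i+1}^kZ_j/j\bigr)=\prod_{j=i+1}^k j/(j-\xi)$ and summing over $i$, the $O(k_0/k)$ contributions cancel and the mean of the denominator equals $\tfrac{k\xi}{1-\xi}\bigl(1-(k_0/k)^{1-\xi}\bigr)$ up to lower order, so the deterministic correction is of relative order $(k_0/k)^{1-\xi}$ as you claim; whether that correction and the $O(k_0/k)$ term in the $\xi\ge 0$ case are absorbed by the stated remainders for every admissible $k_0$ is a genuine loose end, but it is one the paper itself leaves open.

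The concrete gap is the final ``routine algebra'' in the $\xi<0$ case: it does not close if you use Theorem~\ref{thm:t-dist} exactly as printed. The numerator releases a factor $(k_0+1)/k$ and the printed prefactor is $\bigl((k_0+1)/k\bigr)^{\xi}$, so you are left with $\bigl((k_0+1)/k\bigr)^{1+\xi}(1-\xi)h_{\xi}\bigl(Z/(k_0+1)\bigr)$; multiplying by $\bigl(k/(k_0+1)\bigr)^{1-\xi}$ then leaves a residual factor $\bigl((k_0+1)/k\bigr)^{2\xi}\to\infty$, not $1$. The resolution is that the exponent of the prefactor in Theorem~\ref{thm:t-dist} carries a sign slip: from \eqref{e:q0R} in its proof one has $q_0(Y_{n-k_0-1,n})/q_0(Y_{n-k,n})=\bigl(k/(k_0+1)\bigr)^{\xi_-}\bigl(1+o_{\mathbb{P}}(\cdot)\bigr)$, so the prefactor should read $\bigl(k/(k_0+1)\bigr)^{\xi}$, after which the powers cancel exactly, your argument goes through, and Corollary~\ref{cor:k0-k-est-1} becomes consistent as well. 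Since you assert the cancellation without displaying it, as written you would either have hit this mismatch and stalled or made a compensating error; the write-up must identify and correct the prefactor before the corollary follows.
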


\begin{corollary}
	\label{cor:k0-k-est-1} Under the conditions of Theorem 2.3, for any $\delta >0$ we have that
	\begin{enumerate}
		\item for $\xi\geq 0$ and $k_0=o(k^{(1/2-\rho-\delta)/(1-\rho-\delta)})$, 
		$$k(1-T_{k_0,k})\stackrel{d}{\to}Z,$$
		\item    for $\xi<0$, $k_0=o(k^{(1/2-\rho-\delta)/(1-\rho-\delta)})$ and $k_0 \to \infty$,
		$$k\Big(\frac{k_0+1}{k}\Big)^{\xi}(1-T_{k_0,k})\stackrel{d}{\to}(1-\xi)Z,$$
		\item for $\xi<0$ and $k_0\to M <\infty$,
		$$\Big(\frac{k}{k_0+1}\Big)^{1-\xi}(1-T_{k_0,k})\stackrel{d}{\to}\frac{1-\xi}{\xi}\Big(e^{\frac{\xi Z}{M}}-1\Big),$$
	\end{enumerate}
	with $Z$ standard exponentially distributed.
\end{corollary}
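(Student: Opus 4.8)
The plan is to deduce all three statements from Corollary~\ref{cor:k0-k-est} by specialising $\xi_-$, multiplying through by a suitable power of $k_0+1$ to convert its normalising constant into the one demanded in each item, and then checking that the multiplicative remainder converges to $1$ in probability. Write $R_{k_0,k}:=1+O_{\mathbb{P}}(k^{-1/2})+O_{\mathbb{P}}(k^{\rho+\delta-1/2}k_0^{1-\rho-\delta})$ for the correction factor in Corollary~\ref{cor:k0-k-est}. The first step is the elementary equivalence, valid for $\delta<1$ so that $1-\rho-\delta>0$ (recall $\rho\le0$),
$$
k_0=o\!\left(k^{(1/2-\rho-\delta)/(1-\rho-\delta)}\right)\ \Longleftrightarrow\ k^{\rho+\delta-1/2}k_0^{1-\rho-\delta}\to0,
$$
which, together with $k^{-1/2}\to0$, gives $R_{k_0,k}=1+o_{\mathbb{P}}(1)$ under the hypotheses of items 1 and 2; in item 3, $k_0$ is bounded, so $R_{k_0,k}=1+O_{\mathbb{P}}(k^{\rho+\delta-1/2})=1+o_{\mathbb{P}}(1)$ as soon as $\delta$ is chosen with $\rho+\delta<1/2$.

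For item 1, $\xi\ge0$ forces $\xi_-=0$, so $h_{\xi_-}(u)=u$ and $1-\xi_-=1$, and Corollary~\ref{cor:k0-k-est} reads $\tfrac{k}{k_0+1}(1-T_{k_0,k})\stackrel{d}{=}\tfrac{Z}{k_0+1}R_{k_0,k}$; multiplying by $k_0+1$ gives $k(1-T_{k_0,k})\stackrel{d}{=}Z\,R_{k_0,k}\stackrel{d}{\to}Z$, and here no assumption $k_0\to\infty$ is needed since the right side is already in target form even for bounded $k_0$. For item 2, $\xi<0$ gives $\xi_-=\xi$ and $h_{\xi_-}(u)=(e^{\xi u}-1)/\xi$; since now $k_0\to\infty$ I would linearise
$$
\frac{e^{\xi Z/(k_0+1)}-1}{\xi}=\frac{Z}{k_0+1}\bigl(1+O_{\mathbb{P}}(k_0^{-1})\bigr),
$$
absorb the factor $1+O_{\mathbb{P}}(k_0^{-1})$ into $R_{k_0,k}$ (still $1+o_{\mathbb{P}}(1)$ because $k_0\to\infty$), and multiply both sides of Corollary~\ref{cor:k0-k-est} by $k_0+1$. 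Using $(k_0+1)\bigl(k/(k_0+1)\bigr)^{1-\xi}=(k_0+1)^{\xi}k^{1-\xi}=k\bigl((k_0+1)/k\bigr)^{\xi}$ turns this into $k\bigl((k_0+1)/k\bigr)^{\xi}(1-T_{k_0,k})\stackrel{d}{=}(1-\xi)Z\,(1+o_{\mathbb{P}}(1))\stackrel{d}{\to}(1-\xi)Z$. For item 3, $k_0$ eventually equals its finite limit, so no expansion is needed: inserting that constant value into Corollary~\ref{cor:k0-k-est} gives $\bigl(k/(k_0+1)\bigr)^{1-\xi}(1-T_{k_0,k})\stackrel{d}{=}\tfrac{1-\xi}{\xi}\bigl(e^{\xi Z/(k_0+1)}-1\bigr)R_{k_0,k}$, and letting $k\to\infty$ with $R_{k_0,k}\to1$ yields the stated limit (with $M$ read as the eventual value of $k_0+1$).

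There is no real obstacle here once Corollary~\ref{cor:k0-k-est} is in hand — the work is bookkeeping — but the one thing that genuinely matters is the clean separation of the two regimes $k_0\to\infty$ and $k_0$ bounded: in the first the exponential limit appears only after the linearisation of $h_{\xi_-}$, and one must verify that the extra error $O_{\mathbb{P}}(k_0^{-1})$ is negligible (it is, precisely because $k_0\to\infty$), while in the second the limit is a non-degenerate functional of a single exponential variable and one must check that $\delta$ can be taken small enough that $R_{k_0,k}\to1$. Assembling these observations, Slutsky's lemma (equivalently the continuous mapping theorem applied to $x\mapsto x(1+r)$) closes all three items.
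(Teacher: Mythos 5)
Your derivation is correct and is precisely the route the paper intends: Corollary \ref{cor:k0-k-est-1} is stated without a separate proof because it is exactly this specialisation of Corollary \ref{cor:k0-k-est}, with the rate condition on $k_0$ chosen to kill the $O_{\mathbb{P}}(k^{\rho+\delta-1/2}k_0^{1-\rho-\delta})$ term and, in the $\xi<0$, $k_0\to\infty$ case, the linearisation of $h_{\xi}$ supplying the extra $O_{\mathbb{P}}(k_0^{-1})$ that vanishes only because $k_0\to\infty$. Your remark on item 3 is also on point: a literal reading of $k_0\to M$ gives the limit $\frac{1-\xi}{\xi}\bigl(e^{\xi Z/(M+1)}-1\bigr)$, so the displayed $e^{\xi Z/M}$ matches only if $M$ denotes the eventual value of $k_0+1$ --- a small notational slip in the statement rather than a gap in your argument.
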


\subsection{Algorithm for automated selection}
\label{sec:aut-trim}

\hspace{5mm} Here we describe the formal methodology for  estimating  the number of outliers. \\
In view of Corollary \ref{cor:k0-k-est}, note that the distribution of $T_{k_0,k}$  depends on the true value of $\xi$.
Setting
\begin{equation*}
\label{e:Ek-def-g}
E^{\xi}_{k_0,k}:=\begin{cases}
k(1-T_{k_0,k}), & \xi>0\\
\frac{k_0+1}{\xi}\log\Big(1+ (\frac{k}{k_0+1})^{1-\xi}\frac{\xi}{1-\xi}(1-T_{k_0,k})\Big), &\xi\leq 0,
\end{cases}
\end{equation*}
with  $k_0=0,1, \cdots, k-2$,   we obtain from Corollary \ref{cor:k0-k-est} that $E^{\xi}_{k_0,k}$ is approximately standard exponentially distributed for large enough values of $k$ and $n$ with $k/n$ and $k_0/k$ sufficiently small, which implies that 
\begin{equation}
\label{e:U-i-k}
U^{\xi}_{k_0,k}=2|0.5-\exp(-E^{\xi}_{k_0,k})|,\:\: k_0=0,1, \cdots, k-2
\end{equation}
approximately follows the uniform (0,1) distribution.
The main items in the selection algorithm are given next.
\\ 

\noindent {\bf An initial estimator of $\xi$.} Assuming an upper bound $k_0^*$ for the number of outliers $k_0$,  one can construct a crude initial estimate of $\xi$ based on $GH_{k^*_0,k^*}$ as defined in \eqref{e:gen-hill-trim} using some appropriate values $k^*_0$ and $k^*$ where $k^*$ can be different from the value $k$ used in \eqref{e:U-i-k}. \\

\noindent Imputing $GH_{k_0^*,k^*}$ for $\xi$ in \eqref{e:U-i-k}, we can identify $k_0$ as the largest value $j$ for which $U_{j,k}^{\xi}$ fails a test for uniformity on significance level $\alpha_j$ for some sequence $(\alpha_j; \; j=0,\cdots, k-2$) discussed below.  Numerically, one can thus write 	 
\begin{equation}
\label{e:k0-1}
\widehat{k}_0^{\{0\}}=\underset{j =1, \cdots, k_0^*}{{\rm argmax}} \:\{j:U_{j,k}^{GH_{k_0^*,k^*}}>1-\alpha_j \}
\end{equation}
where the ${\rm argmax}$ of an empty set is set to 0. With this value  $\widehat{k}_0^{\{0\}}$ we define a revised estimate of $\xi$ as
\begin{equation}
\label{e:xi-1-def}
\widehat{\xi}=GH_{\widehat{k}^{\{0\}}_0,k^*}.
\end{equation}
Note that the estimator in \eqref{e:xi-1-def} is not an optimal estimator of $\xi$. This is because $GH_{i,k^*}$ is asymptotically suboptimal especially at large values of $i$. Thus, one must guard against choosing large values of $k_0^*$ in defining the initial estimate of $\xi$. On the other hand if $k_0^*$ is less than the true number of outliers, $GH_{k_0^*,k^*}$ will involve outlier observations, thereby resulting in a severely biased estimate of $\xi$.  Extensive analysis showed that $k_0^*=o({k^*}^{1/2})$ works well in practice. A more specific proposal is discussed in the simulation section. \\	

\noindent {\bf Estimation of $k_0$.} Based on the initial estimate $\widehat{\xi}$ from \eqref{e:xi-1-def}, we obtain an improved estimate of $k_0$:
\begin{equation}
\label{e:k0-def}
\widehat{k}_0^{\{1\}}=\underset{j =1, \cdots, k_0^*}{{\rm argmax}} \:\{j: U^{\widehat{\xi}}_{j,k}>1-\alpha_j \}.
\end{equation} 

\noindent
{\bf Groups of outliers.} One can extend the above approach to allow for different groups of outliers. In this direction, we define
$$\mathcal{V}=\{j \in 1,\cdots,  \widehat{k}_0^{\{1\}}: U^{\widehat{\xi}}_{j,k}>1-\alpha_j  \}$$
which are the possible points at which a significance is detected among the outliers. Let $0=v_0\leq v_1\leq v_2 \leq  \cdots \leq v_{|\mathcal{V}|}= \widehat{k}_0^{\{1\}}$ denote the entries of $|\mathcal{V}|$ in increasing order.

With an a priori assumption  that there are no more than $V\leq |\mathcal{V}|$ regime outliers, we next define $V$ regime of outliers as
\begin{equation}
\label{e:l-def}
\widetilde{\ell}_{\{r\}}=\begin{cases}
\{v_{r-1},\cdots, v_r\}, & r=1, \cdots, V-1\\
\{v_{r-1}, \cdots, v_{|\mathcal{V}|}\}, & r=V
\end{cases}
\end{equation}
with $\widetilde{\ell}_{\{r\}}=\phi$.\\

This then leads to the following detection algorithm and consequently a tail-adjusted boxplot.\\


\begin{algorithm}[H]
	\caption{Domain Adapted Sequential Testing (DAST)}
	\label{algo:dast}
	\small
	\begin{algorithmic}[1]
		\STATE Input $k$, $k^*$, $k_0^*$ and $V$.
		\STATE Set $\widetilde{V}=0$ and $\widehat{\ell}^{\{0\}}=0$.
		\STATE Set $\alpha_j \in (0,1), \:j=0, 1, \cdots, k-2$ satisfying
		\begin{equation}
		\label{e:alpha-j-def}
		\prod_{j=0}^{k-2}(1-\alpha_j)=1-q
		\end{equation}
		\STATE Compute $\widehat{k}^{\{0\}}_0$ and $\widehat{\xi}$  using Relations \eqref{e:k0-1} and \eqref{e:xi-1-def} respectively.
		\STATE Compute $U^{\widehat{\xi}}_{j,k}$ as in Relation \eqref{e:U-i-k}.
		\STATE Compute $\widehat{k}_0^{\{1\}}$ as in Relation \eqref{e:k0-def}.
		\STATE If $\widetilde{k}_0^{\{1\}}=0$ goto step 10, else define
		$$\mathcal{V}=\{j \in 1,\cdots,  \widehat{k}_0^{\{1\}}| U^{\widehat{\xi}}_{j,k}>1-\alpha_j\}.$$
		\STATE Set $\widetilde{V}=\min( {|\mathcal{V}|,V})$
		\STATE If $\widetilde{V}=1$, define $\ell_{\{1\}}=\widehat{k}_0^{\{1\}}$ and goto step 10, else define 
		\begin{equation*}
		\widetilde{\ell}_{\{r\}}=\begin{cases}
		\{v_{r-1},\cdots, v_r\}, & r=1, \cdots, \widetilde{V}-1\\
		\{v_{r-1}, \cdots, v_{|\mathcal{V}|}\}, & r=\widetilde{V}
		\end{cases}
		\end{equation*}
		where $0=v_0\leq v_1\leq v_2 \leq  \cdots \leq v_{|\mathcal{V}|}= \widehat{k}_0^{\{1\}}$ denote the entries of $|\mathcal{V}|$ in increasing order.
		\STATE Return $\{\widehat{\ell}_{\{0\}}, \cdots, \widehat{\ell}_{\{\widetilde{V}\}}\}$ 
	\end{algorithmic}
\end{algorithm}

\noindent
{\bf Choice of $\alpha_j$.} As in \cite{bhatt2019}, the levels $\alpha_j$ in the above algorithm are chosen as 
\begin{equation}
\label{e:alpha-j}
\alpha_j=1-(1-q)^{ca^{k-j-1}}, \hspace{5mm} j=0, \cdots, k-2,
\end{equation}
with $a>1$ and  $c=1/\sum_{j=0}^{k-2}a^{k-j-1}$. This choice of $\alpha_j$ satisfies  \eqref{e:alpha-j-def}, which implies 
$$P_{k_0=0}[\widehat{k}_0>0]=q,$$
so that the algorithm is well calibrated (see Proposition 2.9 in \cite{bhatt2019}). In addition, this choice puts less weight on large values of $j$  which implies that large values of $j$ are less likely to be chosen over smaller ones. This guards against encountering spurious values of $\widehat{k}_0$ close to $k$. Extensive sensitivity analysis with a variety of sequential tests indicate that the choice of levels as in  \eqref{e:alpha-j} with $a=1.2$ works well in practice.\\

\noindent	
{\bf Tail-adjusted boxplot. }
We propose to apply the DAST algorithm to obtain the right tail outliers, while the bottom outliers are determined  by applying the DAST algorithm to the right tail of the transformed random variable  $1/X$ in case of positive data, and $-X$ in case of negative left tail data. One can indicate the different sets of outliers as detected by the above algorithm using different symbols such as +, $\circ$, $\cdots$ . The whiskers of the tail-adjusted boxplot extends from the upper, respectively lower, quartile to the extreme observations just below, respectively above, the set of identified outliers. In the case studies below, we also report the p-values of the different sets of outliers. For each regime $\widetilde{\ell}_r$, these p-values are equal to $U^{\widehat{\xi}}_{v_r,k}$ with $\widehat{\xi}$ and $v_r$ as in \eqref{e:xi-1-def} and \eqref{e:l-def} respectively.

	\section{Simulations.}
\label{sec:sim}

In this section, we study the  accuracy of the DAST algorithm (see Algorithm \ref{algo:dast} of Section \ref{sec:aut-trim}) as an estimator of the true number of outliers $k_0$. The parameters $a$ and $q$ are set at 1.2 and  0.05 respectively. We concentrate only on one regime of outliers, i.e. $V$ is set equal to 1 in Algorithm \ref{algo:dast}. 

{\bf \vspace{2mm}\noindent Measures of Performance:} The performance of $\widehat{k}_0$ as generated from the DAST algorithm is evaluated in terms of its expected value,	${\rm E}(\widehat{k}_0)$ and the standard deviation $\sqrt{{\rm Var}(\widehat{k}_0)}$.
These computations are based on 2500 independent Monte Carlo simulations.

{\bf \vspace{2mm} \noindent Data generating models:} We generate $n$ i.i.d. observations from three domains of attractions, viz $\xi>0$, $\xi=0$ and $\xi>0$ for the following distributions.
\begin{enumerate}
	\item {\bf Case $\xi>0$:}
	\begin{eqnarray}
	\label{e:xi-pos}
	&&\modt(n)\::\:1-F(x)=\int_{x}^{\infty}\frac{2\Gamma(\frac{n+1}{2})}{\sqrt{n\pi}\Gamma(\frac{n}{2})}\left(1+\frac{w^2}{n} \right)^{-\frac{n+1}{2}}dw, \:\: x>0,\:\:\xi=\frac{1}{n} \nonumber\\
	&&	{\rm Burr} (\eta,\lambda,\tau)\::\:1-F(x)=\left(\frac{\eta}{\eta+x^{\tau}}\right)^{\lambda},\:\:  x>0,\:\: \eta, \lambda, \tau>0, \:\: \xi=\frac{1}{\lambda \tau}
	\end{eqnarray}
	
	\item {\bf Case $\xi=0$:}
	\begin{eqnarray}
	\label{e:xi-zero}
	&&{\rm Lognormal}(\mu,\sigma^2)\::\:1-F(x)=\int_{x}^{\infty}\frac{1}{x \sqrt{2\pi \sigma^2}}\exp\left(-\frac{(\log (x)-\mu)^2}{2\sigma^2}\right),\:\: x>0\nonumber\\
	&&\modn (\mu,\sigma^2)\::\:1-F(x)=\int_{x}^{\infty}\frac{1}{\sqrt{2\pi \sigma^2}}\exp\left(-\frac{(x-\mu)^2}{2\sigma^2}\right)\nonumber\\
	&&{\rm Weibull}(\lambda,\tau)\::\:1-F(x)=\exp(-\lambda x^\tau),\:\: x>0,\:\:\lambda>0,\tau\geq 0
	\end{eqnarray}
	
	\item {\bf Case $\xi<0$:}
	\begin{eqnarray}
	\label{e:xi-neg}
	&&{\rm Beta}(p,q)\::\:1-F\left(x_+-\frac{1}{x}\right)=\int_{1-\frac{1}{x}}^{1}\frac{\Gamma(p+q)(1-u)^{q-1}}{\Gamma(p)\Gamma(q)u^{1-p}}du,\:\:x>1,p,q>0,\:\:\xi=-\frac{1}{q}\nonumber\\
	&&\text{Reverse Burr}(\eta,\lambda, \tau)\::\:1-F(x_+-\frac{1}{x})=\left(\frac{\eta}{\eta+x^{\tau}}\right)^{\lambda},\:\:x>0,\eta,\lambda, \tau>0,\:\:\xi=-\frac{1}{\lambda\tau}
	\end{eqnarray}
	
\end{enumerate}
The performance of the DAST algorithm for the Fr\'echet domain of attraction, i.e. $\xi>0$, is discussed in section \ref{sec:xi-pos}.  In sections \ref{sec:xi-zero} and \ref{sec:xi-neg}, we discuss the performance under  Gumbel $(\xi=0)$ and reverse-Weibull $(\xi<0)$ domains of attractions respectively.

\vspace{2mm}\noindent	
{\bf Choice of $k$, $k^*$ and $k_0^*$:} In Appendix A.1, we study the sensitivity of  the DAST algorithm with respect to the choice of $k$, $k^*$ and  $k_0^*$. Both $k$ and $k^*$ in a neighborhood of $k^{\rm opt}$ where $k^{\rm opt}$ is given by
\begin{equation}
\label{e:k-star-def}
k^{\rm opt}= \underset{k}{{\rm argmin}}\:\: {\rm Var}(GH_k)
\end{equation}
Thus, $k^{\rm opt}$ is the optimal $k$ at which the generalized Hill, $GH_{k}$ (see \eqref{e:GHill}) attains the minimal variance. This $k^{\rm opt}$ is empirically estimated using monte carlo simulations.

Sensitivity analysis showed that choosing  $k_0^*$ as:
\begin{equation}
\label{e:k0-star-def}
k_0^*=c{k^*}^{\frac{1}{3}}
\end{equation}
for $c \in (5,10)$ works well in practice.

\vspace{2mm}
\noindent	{\bf  Outlier Scenarios:} In Sections \ref{sec:xi-pos}, \ref{sec:xi-zero} and \ref{sec:xi-neg},outliers are introduced in the extreme observations of the data where the top-$k_0$ order statistics are perturbed as follows:  	
\begin{eqnarray}
\label{e:exp-trans-0}
&&\textit{Exponentiated Outliers}\::\:
X_{(n-i+1,n)}:= X_{(n-k_0,n)}\left(\frac{X_{(n-i+1,n)}}{X_{(n-k_0,n)}}\right)^L, \hspace{3mm} i=1,\cdots,k_0.\\ 
\label{e:scl-trans-0}
&&\textit{Scaled Outliers}\::\:
X_{(n-i+1,n)}:= X_{(n-k_0,n)}+C(X_{(n-i+1,n)}-X_{(n-k_0,n)})), \hspace{3mm} i=1,\cdots,k_0. 
\end{eqnarray}

Note, the number of outliers is fixed at $k_0$ and the constants $L$ and  $C$  control the intensity of the injected outliers. Whereas $L<1$ and $C<1$ shrinks the top order statistics,  scenarios $L>1$ and $C>1$ inflate the top values. The case of $L=1$ and $C=1$ correspond to the regime of no outliers.  For the above two kinds of outliers, the order of the bottom $(n-k_0)$ observations is preserved. 

\subsection{Case $\xi>0$}
\label{sec:xi-pos}

We evaluate the performance of the DAST algorithm for the distribution models in Fr\'echet domain of attraction, i.e. $\xi>0$ (see  \eqref{e:xi-pos}). With $n=1000$, we generate data from $\modt(1/\xi)$ and ${\rm Burr}(1,0.5,2/\xi)$.

\begin{table}[H]
	\centering
	\footnotesize
	\begin{tabular}{|c|cccc|cccc|cccc|}
		\hline
		& \multicolumn{4}{c|}{$k$=200} & \multicolumn{4}{c|}{$k$=400} & \multicolumn{4}{c|}{$k$=600}	\\
		$k^*=$ & 200 & 400 & 600 & $\xi$ known  & 200 & 400 & 600 & $\xi$ known & 200 & 400 & 600 & $\xi$ known  \\\hline
		$\xi$ =0.25 & 0.092& 0.036& 0.034&  0.034& 0.09& 0.048& 0.048 & 0.048 & 0.081& 0.066& 0.066& 0.066\\
		$\xi$ =0.5& 0.05& 0.042& 0.042& 0.042 &0.046& 0.038& 0.038& 0.038 & 0.048& 0.045& 0.045 & 0.045\\
		$\xi$= 1& 0.046& 0.046& 0.046& 0.046& 0.036& 0.036& 0.036& 0.036& 0.032& 0.032& 0.032 & 0.032	\\\hline			
	\end{tabular}	
\end{table}
\vspace{-5mm}
\begin{table}[H]
	\centering
	\footnotesize
	\begin{tabular}{|c|cccc|cccc|cccc|}
		\hline
		& \multicolumn{4}{c|}{$k$=100} & \multicolumn{4}{c|}{$k$=200} & \multicolumn{4}{c|}{$k$=300}	\\
		$k^*=$ & 100 & 200 & 300 & $\xi$ known  & 100 & 200 & 300 & $\xi$ known & 100 & 200 & 300 & $\xi$ known  \\\hline
		$\xi$ =0.25 & 0.585& 0.206& 0.114&0.048& 0.772& 0.265& 0.124& 0.036& 0.829& 0.295& 0.115& 0.034\\
		$\xi$ =0.5 	& 0.231& 0.063& 0.048& 0.048& 0.382& 0.04& 0.03& 0.03& 0.456& 0.039& 0.03 & 0.03\\ 
		$\xi$ =1 & 0.074& 0.058& 0.058& 0.058 &0.074& 0.043& 0.043& 0.043&0.074& 0.036& 0.036&0.036\\\hline 
	\end{tabular}			
	\caption{Type 1 error for the DAST algorithm for $\xi>0$ with $k_0^*=7(k^*)^{1/3}$. {\em Top:} $\modt(1/\xi)$ distribution. {\em Bottom:} Burr(1,0.5,2/$\xi$) distribution.}
	\label{tab:xi-pos-h0}
\end{table}

\begin{figure}[H]
	\hspace{-2mm}\includegraphics[width=0.34\textwidth]{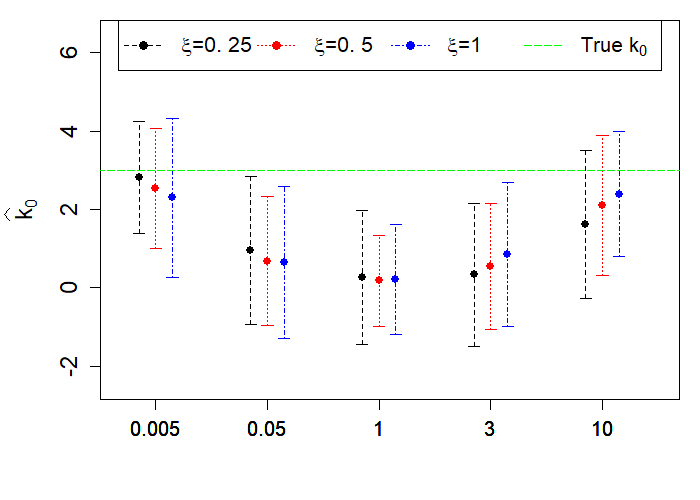}
	\hspace{-3mm}	\includegraphics[width=0.34\textwidth]{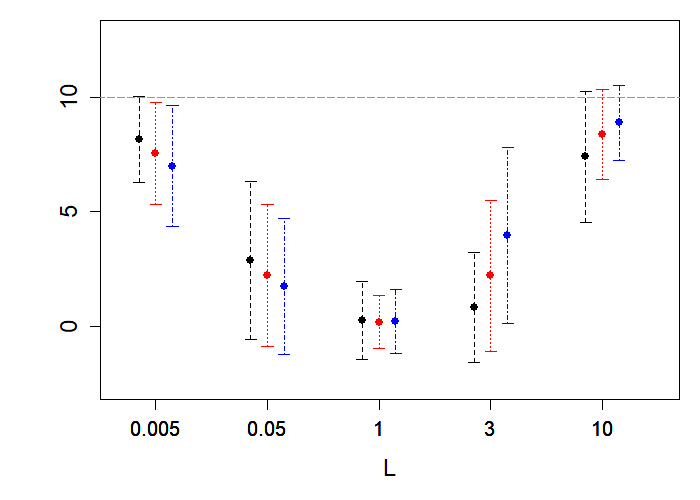}
	\hspace{-3mm}	\includegraphics[width=0.34\textwidth]{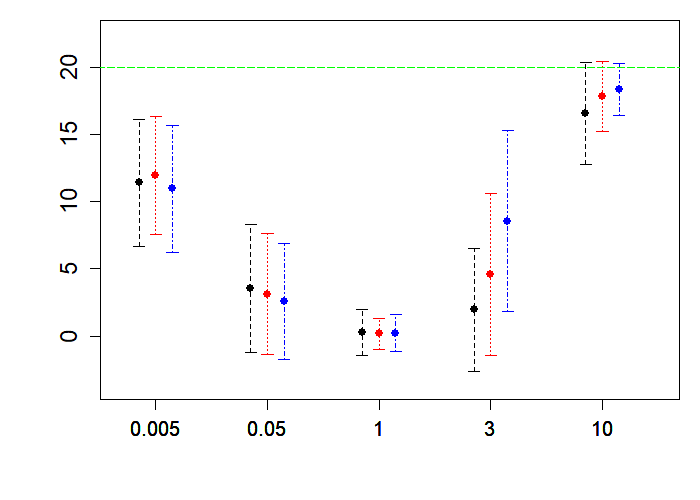}
\end{figure}

\vspace{-7mm}

\begin{figure}[H]
	\hspace{-2mm}\includegraphics[width=0.34\textwidth]{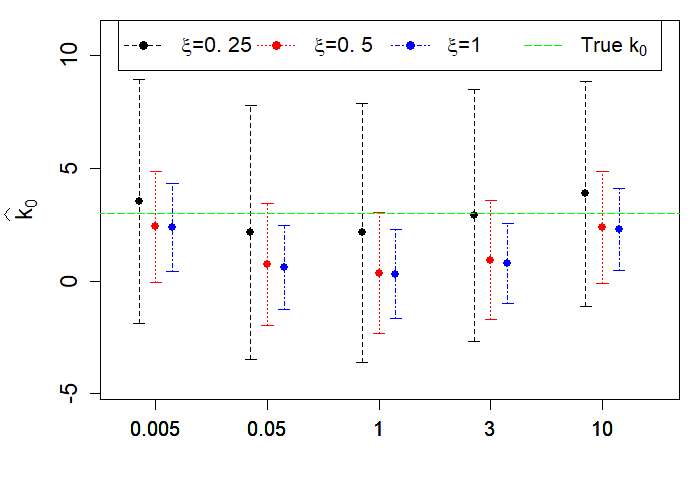}
	\hspace{-3mm}	\includegraphics[width=0.34\textwidth]{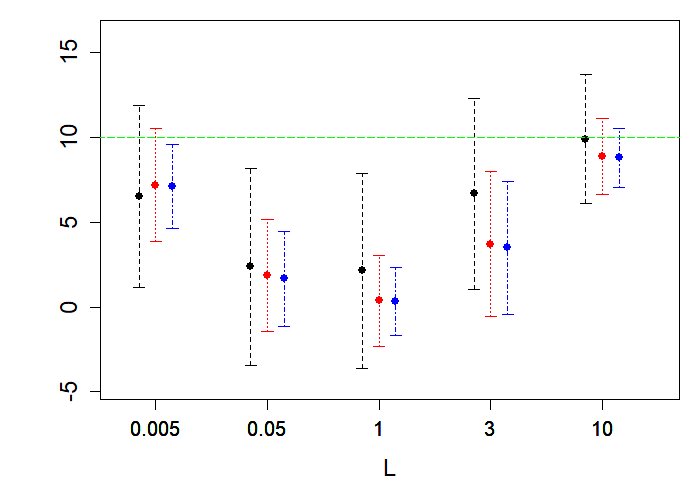}
	\hspace{-3mm}	\includegraphics[width=0.34\textwidth]{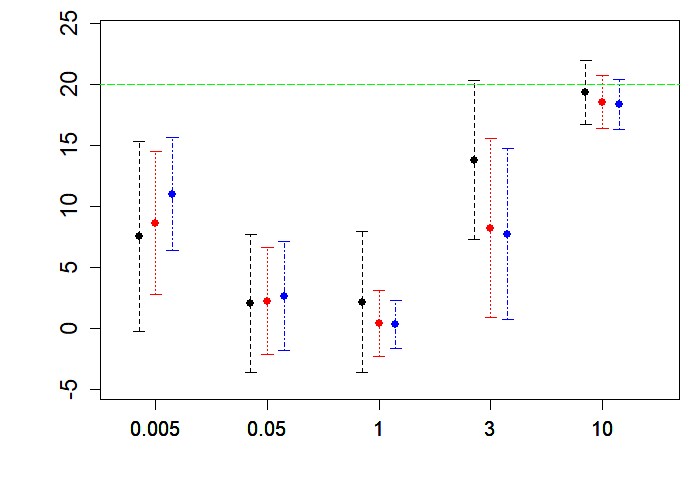}
	\caption{$E(\widehat{k}_0)\pm \sqrt{{\rm Var}(\widehat{k}_0)}$ for exponentiated outliers, $k_0^*=7(k^*)^{1/3}$. {\em Top:} $\modt(1/\xi)$, $k=k^*=400$. {\em Bottom:} ${\rm Burr}(1,0.5,2/\xi)$, $k=k^*=200$. {\em Left:} $k_0=3$. {\em Middle:} $k_0=10$. {\em Right:} $k_0=20$.} 
	\label{fig:xi-pos-ha-exp}
\end{figure}

For the set up where no outliers have been injected, i.e. $k_0=0$, Table \ref{tab:xi-pos-h0} gives the type 1 error $\mathbb{P}(\widehat{k}_0>0)$ for the DAST algorithm.  The algorithm is fairly stable in terms of the choice of $k$ for both Burr and $\modt$ distributions. Different values of $k^*$  which are used in the initial estimate of $\xi$ (see  \eqref{e:xi-1-def}) are indicated in the first 3 columns. On the other hand, the last column uses the true value of $\xi$ as the initial estimate.  For the $\modt$ distribution, the algorithm attains the nominal significance level at all values of $\xi$ irrespective of the initial choice of $\xi$. However, for the Burr distribution, smaller values of $k^*$ produce less reliable estimates of the initial estimate $\xi$ especially when $\xi$ is small. This explains the larger values of type 1 error at $\xi=0.25$ and $k^*=100$.

For $\modt$ and Burr distributions, Figure \ref{fig:xi-pos-ha-exp} shows the performance of the DAST for varying intensity of the exponentiated outliers. We inject $k_0$ outliers according to  \eqref{e:exp-trans-0}. In this scenario, Table \ref{tab:xi-pos-ha-exp} in Appendix A.1 explores in more detail the sensitivity of the algorithm to varying choices of $k$ and $k^*$.

\begin{figure}[H]
	\hspace{-2mm}\includegraphics[width=0.34\textwidth]{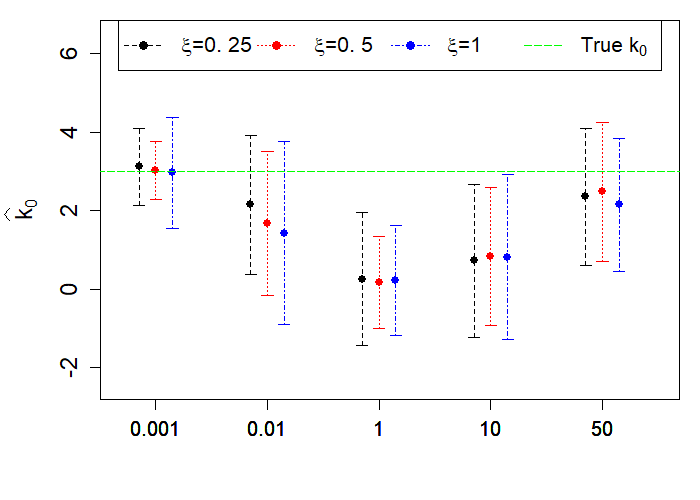}
	\hspace{-3mm}	\includegraphics[width=0.34\textwidth]{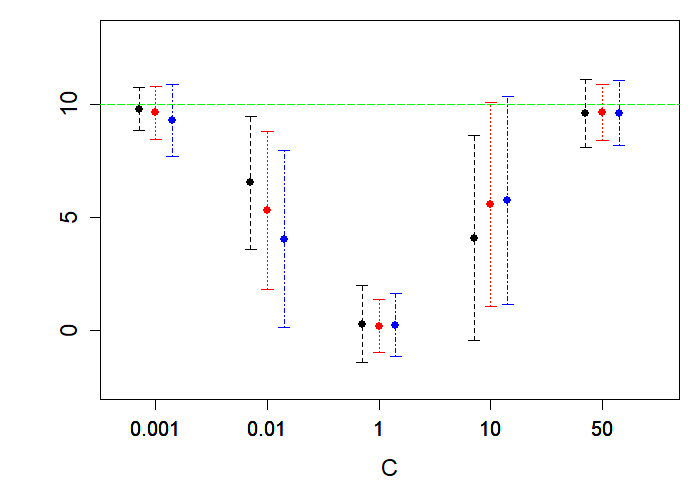}
	\hspace{-3mm}	\includegraphics[width=0.34\textwidth]{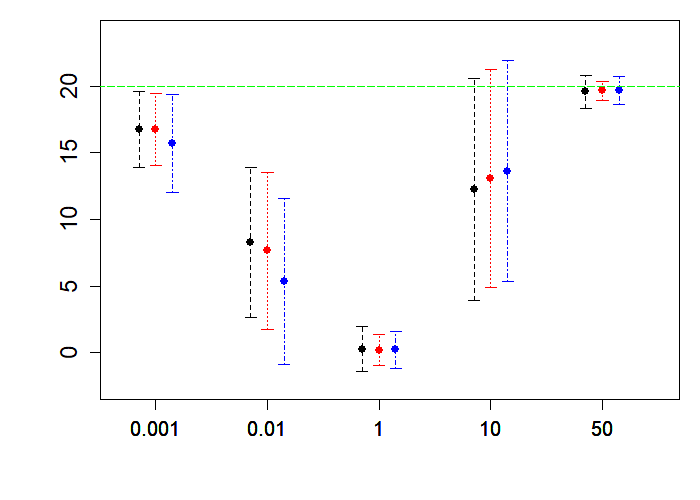}
\end{figure}
\vspace{-7mm}
\begin{figure}[H]
	\hspace{-2mm}	\includegraphics[width=0.34\textwidth]{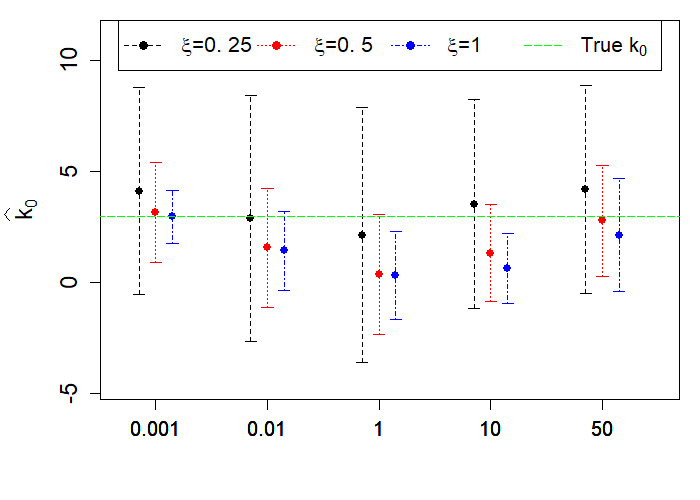}
	\hspace{-3mm}	\includegraphics[width=0.34\textwidth]{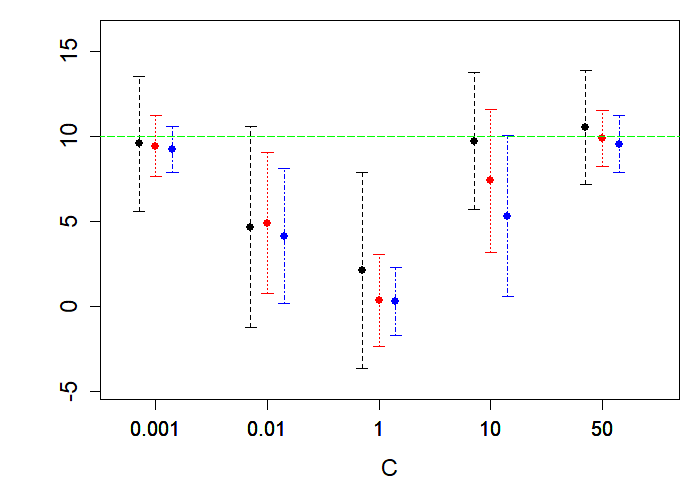}
	\hspace{-3mm}	\includegraphics[width=0.34\textwidth]{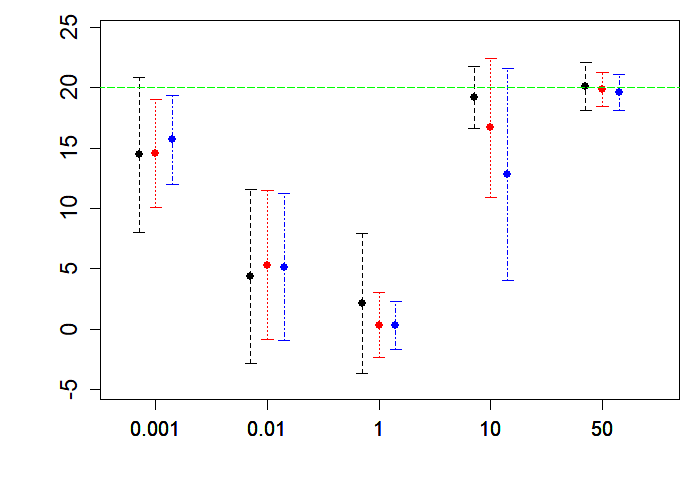}
	\caption{$E(\widehat{k}_0)\pm \sqrt{{\rm Var}(\widehat{k}_0)}$ for scaled outliers, $k_0^*=7(k^*)^{1/3}$. {\em Top:} $\modt(1/\xi)$, $k=k^*=400$. {\em Bottom:} ${\rm Burr}(1,0.5,2/\xi)$, $k=k^*=200$. {\em Left:} $k_0=3$. {\em Middle:} $k_0=10$. {\em Right:} $k_0=20$.} 
	\label{fig:xi-pos-ha-scl}
\end{figure}

Note, that as $L$ deviates from 1, the performance of the algorithm improves and the best results are obtained at $L=0.005$ and $L=10$ for all values of $k_0$. Note the improved performance of the algorithm in the region $L>1$. This is because with $L>1$, the  outliers stand out much further from the true distribution of the data. The performance of the algorithm deteriorates with increase in $k_0$. The true tail index $\xi$ has minimal effect on the algorithm. Finally, DAST works better for the $\modt$ distribution than for the  Burr distribution (observe the different scales of y-axis  in top and bottom panels of Figure \ref{fig:xi-pos-ha-exp}).

For $\modt$ and Burr distributions, Figure \ref{fig:xi-pos-ha-scl} shows the performance of the DAST  algorithm for varying intensity of the scaled outliers. Although the performance is a bit better for the scaled than the exponentiated case, conclusions are fairly similar for both cases. The superior performance under scaled outliers is because the chosen $C$ values produce more severe outliers. In this scenario, Table \ref{tab:xi-pos-ha-scl} in Appendix A.1 explores the sensitivity of the DAST algorithm to the choice of $k$ and $k^*$.

\subsection{Case $\xi=0$}
\label{sec:xi-zero}

Here, with $n=1000$, we generate data from the Gumbel domain , i.e. $\xi=0$ (see \eqref{e:xi-zero}) viz Lognormal(0,1) (denoted by LN(0,1)), $\modn(0,1)$ and Weibull(1,$\tau$) (denoted by W($\tau$)) models for $\tau=0.5,1,2$.

Where no outliers have been injected, i.e. $k_0=0$, Table \ref{tab:xi-zero-h0} gives the type 1 error, $\mathbb{P}(\widehat{k}_0>0)$ for the DAST algorithm.  Note that when the true value of $\xi$ is known, the algorithm is well calibrated at all values of $k$ . However, when $\xi$ is unknown, smaller values of $k^*$ lead to larger values of the type 1 error. This phenomenon is more pronounced at small values of $k$ especially for $\modn$(0,1) and Weibull(2) distributions. This may be explained by the fact that smaller values of $k^*$ lead to poor estimates of the initial value $\widehat{\xi}$. 
It is interesting to note that in a few cases DAST performs better with an estimated $\xi$ rather than with the correct $\xi$ (see the  Weibull(2) distribution). Numerical studies showed that the generalized trimmed Hill estimator, $GH_{k_0,k}$ (see \eqref{e:GHill}) underestimates the tail index $\xi$ for the given $\xi=0$ cases. We suspect that this biased estimator of $\xi$ perhaps facilitates easier detection of outliers.

\begin{table}[H]
	\centering
	\footnotesize
	\begin{tabular}{|c|cccc|cccc|cccc|}
		\hline
		& \multicolumn{4}{c|}{$k$=100} & \multicolumn{4}{c|}{$k$=150} & \multicolumn{4}{c|}{$k$=200}	\\
		$k^*=$ & 100 & 150 & 200 & $\xi$ known  & 100 & 150 & 200 & $\xi$ known & 100 & 150 & 200 & $\xi$ known  \\\hline
		LN(0,1)&  0.204& 0.074& 0.048 &0.036& 0.27& 0.069& 0.044& 0.034& 0.325& 0.072& 0.044 & 0.036\\\hline
		& \multicolumn{4}{c|}{$k$=100} & \multicolumn{4}{c|}{$k$=200} & \multicolumn{4}{c|}{$k$=300}	\\
		$k^*=$ & 100 & 200 & 300 & $\xi$ known  & 100 & 200 & 300 & $\xi$ known & 100 & 200 & 300 & $\xi$ known  \\\hline
		$\modn(0,1)$ & 0.502& 0.19& 0.094& 0.046& 0.574& 0.163& 0.077& 0.058& 0.579& 0.148& 0.06& 0.072\\\hline 
		& \multicolumn{4}{c|}{$k$=100} & \multicolumn{4}{c|}{$k$=150} & \multicolumn{4}{c|}{$k$=200}	\\
		$k^*=$ & 100 & 150 & 200 & $\xi$ known  & 100 & 150 & 200 & $\xi$ known & 100 & 150 & 200 & $\xi$ known  \\\hline
		W(0.5) & 0.078& 0.05& 0.046& 0.043& 0.086& 0.05& 0.049& 0.049&  0.094& 0.06& 0.058 & 0.058\\\hline
		& \multicolumn{4}{c|}{$k$=100} & \multicolumn{4}{c|}{$k$=150} & \multicolumn{4}{c|}{$k$=200}	\\
		$k^*=$ & 100 & 150 & 200 & $\xi$ known  & 100 & 150 & 200 & $\xi$ known & 100 & 150 & 200 & $\xi$ known  \\\hline
		W(1) & 0.336& 0.138& 0.074& 0.045 & 0.386& 0.127& 0.07& 0.046 & 0.425& 0.132& 0.065 & 0.052\\\hline
		& \multicolumn{4}{c|}{$k$=200} & \multicolumn{4}{c|}{$k$=400} & \multicolumn{4}{c|}{$k$=600}	\\
		$k^*=$ & 200 & 400 & 600 & $\xi$ known  & 200 & 400 & 600 & $\xi$ known & 200 & 400 & 600 & $\xi$ known  \\\hline
		W(2)	& 0.341& 0.138& 0.083& 0.054 & 0.299& 0.088& 0.057& 0.081 & 0.206& 0.068& 0.058 & 0.11\\\hline
	\end{tabular}
	\caption{Type 1 error for the DAST algorithm for $\xi=0$  with $k_0^*=7(k^*)^{1/3}$. LN denotes Lognormal(0,1) distribution and W($\tau$) denotes Weibull (1,$\tau$) distribution.}
	\label{tab:xi-zero-h0}
\end{table}


For the distribution models in \eqref{e:xi-zero}, Figures \ref{fig:xi-zero-ha-exp} and \ref{fig:xi-zero-ha-scl}  show the performance of the DAST algorithm for varying intensity of the exponentiated and scaled outliers respectively.  Tables \ref{tab:xi-zero-ha-exp} and \ref{tab:xi-zero-ha-scl} in Appendix A.1 study the sensitivity of DAST to varying choices of $k$ and $k^*$. The performance of the DAST under Gumbel domain is fairly to similar to that under  Fr\'echet domain. This is expected as the limiting distribution of the statistic ${(k /k_0)} ^{1-\xi_-}(1-T_{k_0,k})$ is same for both $\xi>0$ and $\xi=0$ (see Corollary \ref{cor:k0-k-est}). 

\begin{figure}[H]
	\hspace{-2mm}		\includegraphics[width=0.34\textwidth]{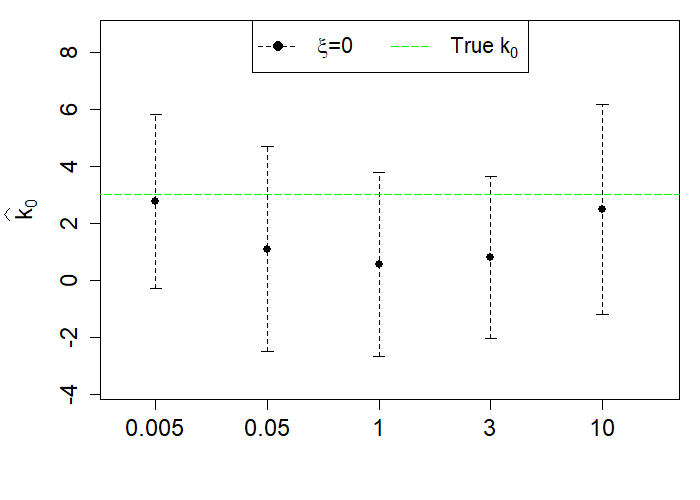}
	\hspace{-3mm}	\includegraphics[width=0.34\textwidth]{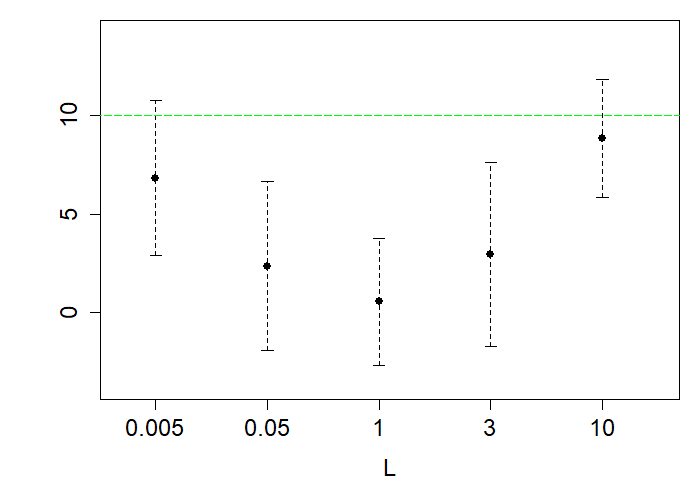}
	\hspace{-3mm}	\includegraphics[width=0.34\textwidth]{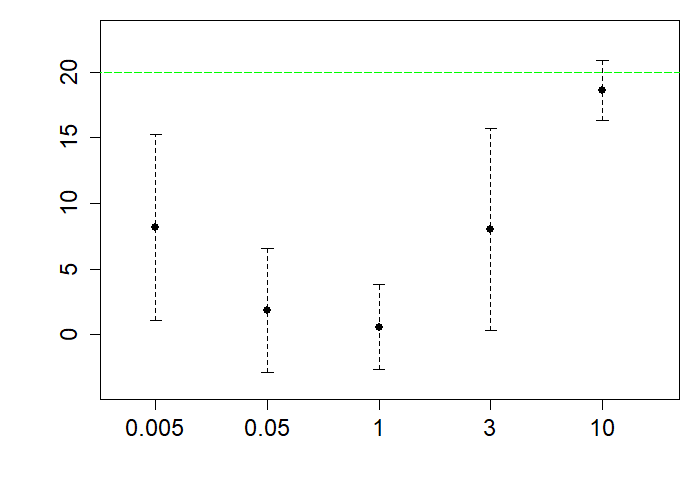}
\end{figure}
\vspace{-7mm}
\begin{figure}[H]
	\hspace{-2mm}		\includegraphics[width=0.34\textwidth]{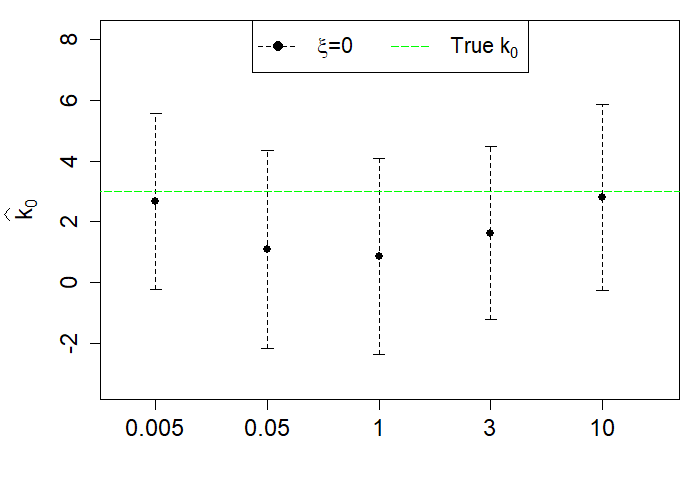}
	\hspace{-3mm}	\includegraphics[width=0.34\textwidth]{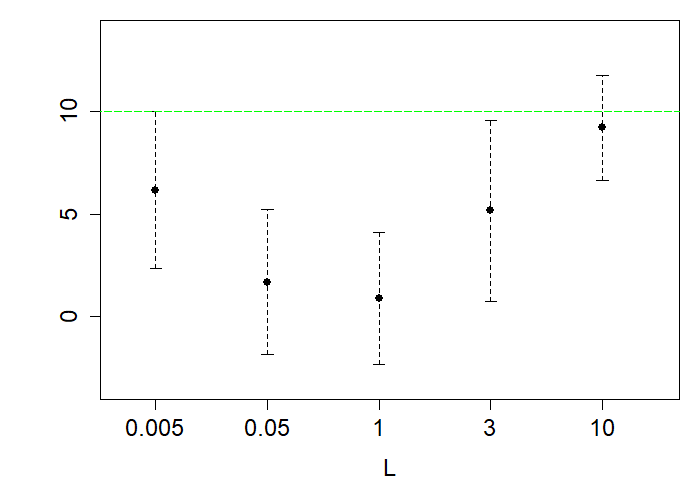}
	\hspace{-3mm}	\includegraphics[width=0.34\textwidth]{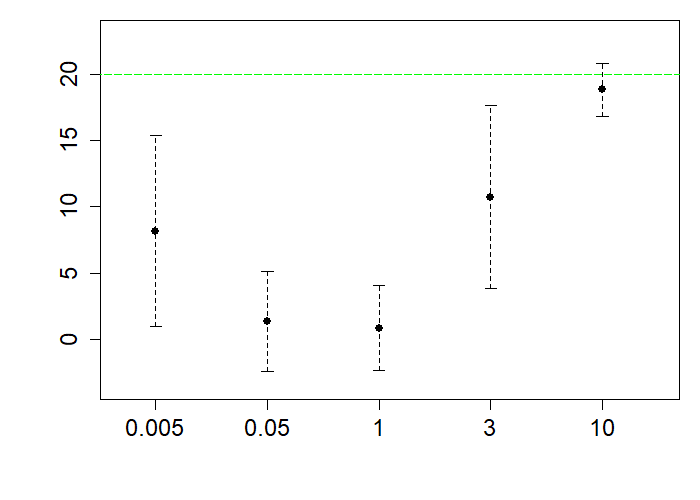}
\end{figure}
\vspace{-7mm}
\begin{figure}[H]
	\hspace{-2mm}		\includegraphics[width=0.34\textwidth]{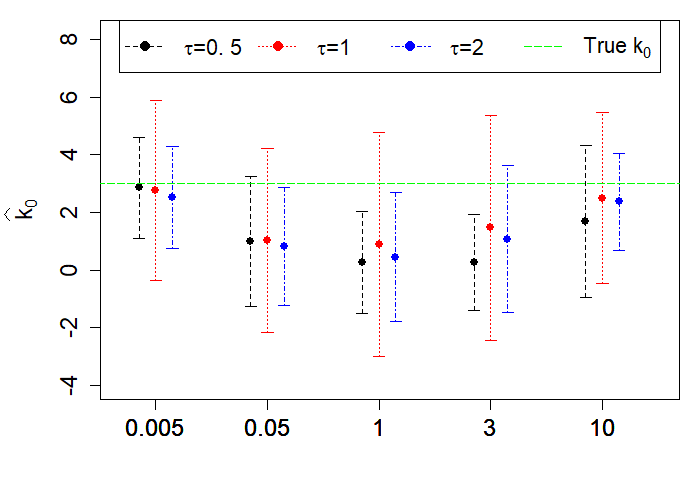}
	\hspace{-3mm}	\includegraphics[width=0.34\textwidth]{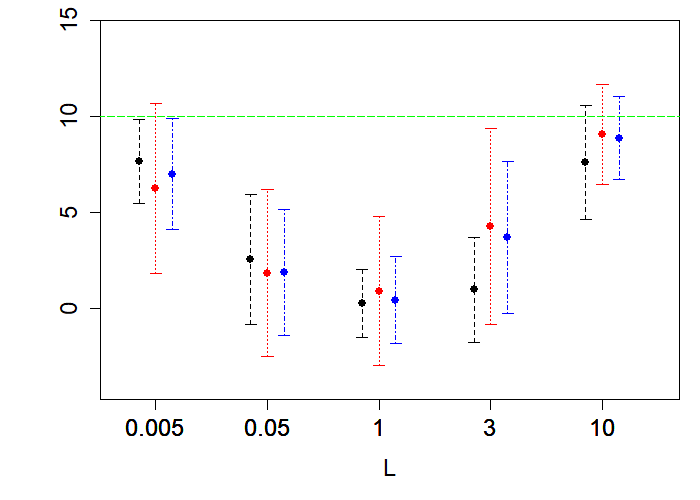}
	\hspace{-3mm}	\includegraphics[width=0.34\textwidth]{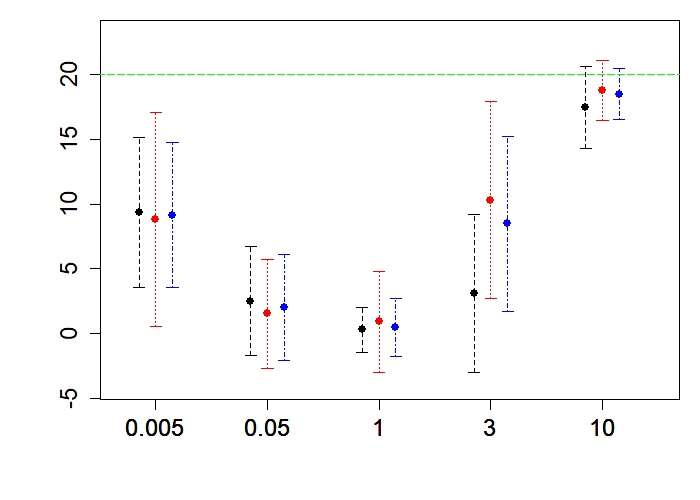}
	\caption{$E(\widehat{k}_0)\pm \sqrt{{\rm Var}(\widehat{k}_0)}$ for exponentiated outliers, $k_0^*=7(k^*)^{1/3}$. {\em Vertical:} {\em Top:} Lognormal(0,1), $k=k^*=150$. {\em Middle:}  $\modn$(0,1), $k=k^*=200$.  {\em Bottom:} Weibull(1,$\tau$) distribution with $k=k^*=150,150,200$ for $\tau=0.5,1,2$ respectively. {\em Horizontal:} {\em Left:} $k_0=3$. {\em Middle:} $k_0=10$. {\em Right:} $k_0=20$.} 
	\label{fig:xi-zero-ha-exp}
\end{figure}


\begin{figure}[H]
	\hspace{-2mm}		\includegraphics[width=0.34\textwidth]{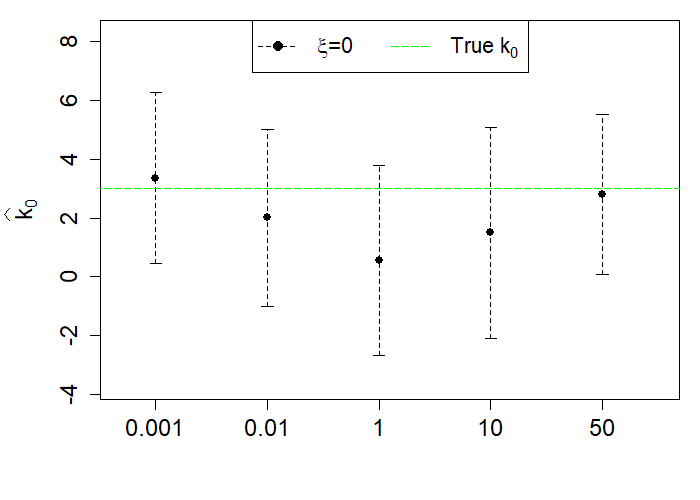}
	\hspace{-3mm}	\includegraphics[width=0.34\textwidth]{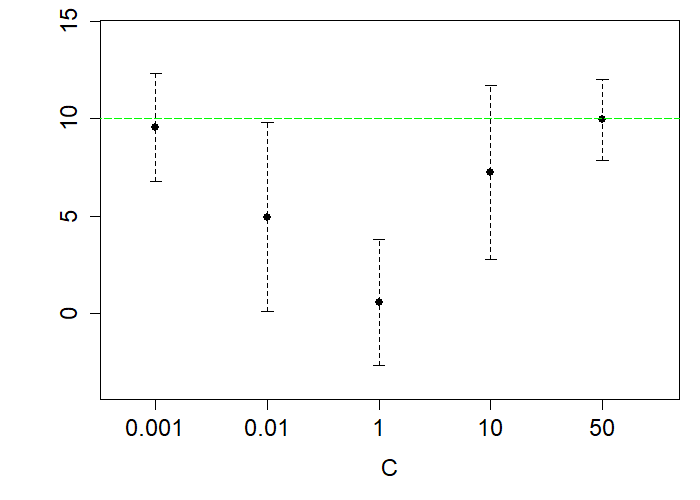}
	\hspace{-3mm}	\includegraphics[width=0.34\textwidth]{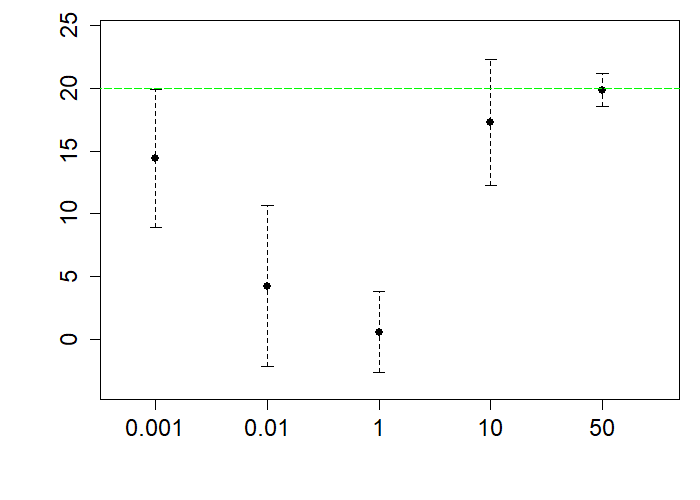}
\end{figure}
\vspace{-5mm}
\begin{figure}[H]
	\hspace{-2mm}\includegraphics[width=0.34\textwidth]{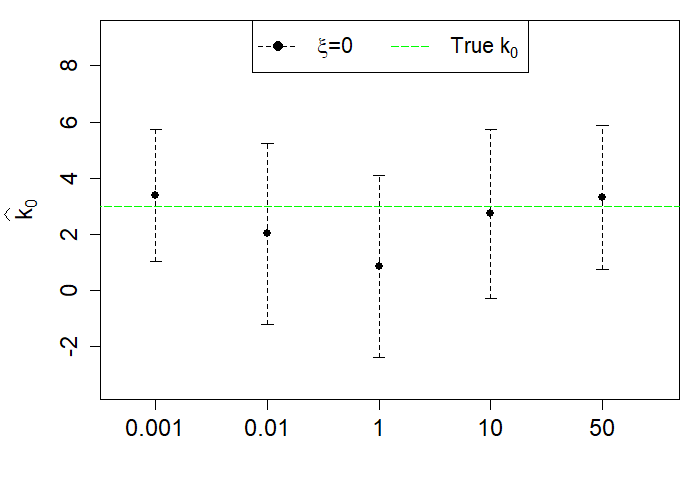}
	\hspace{-3mm}	\includegraphics[width=0.34\textwidth]{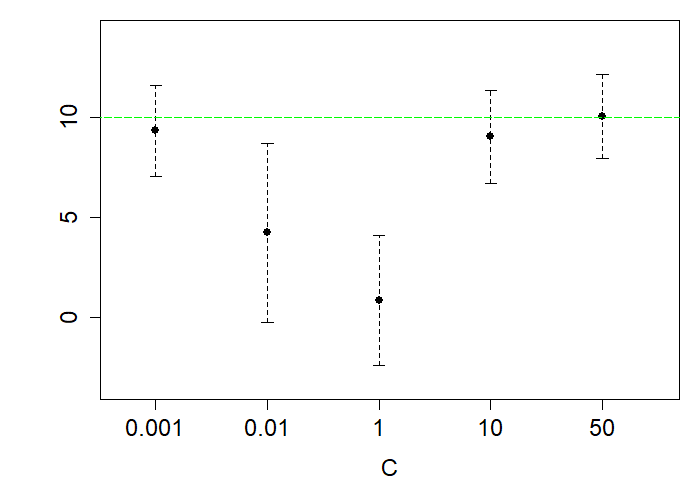}
	\hspace{-3mm}	\includegraphics[width=0.34\textwidth]{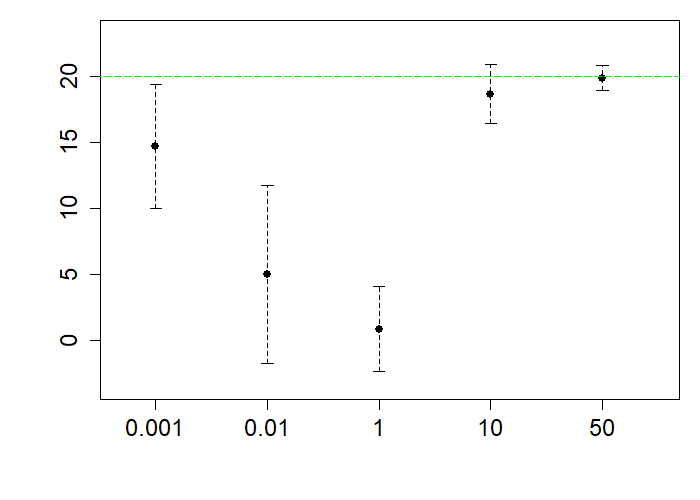}
\end{figure}
\vspace{-5mm}

\begin{figure}[H]
	\hspace{-2mm}
	\includegraphics[width=0.34\textwidth]{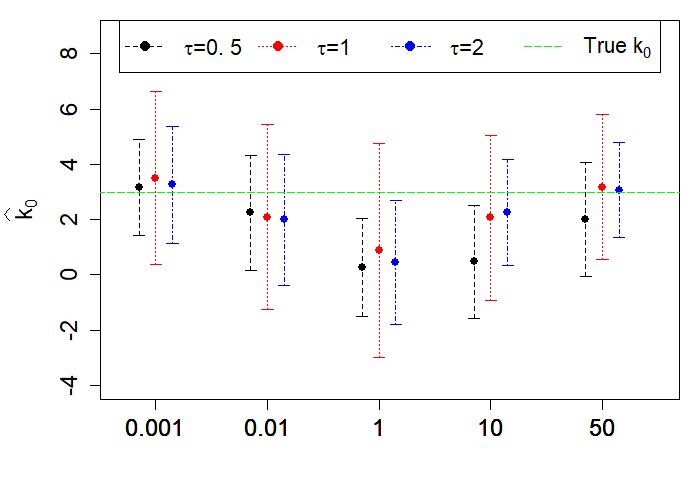}
	\hspace{-3mm}	\includegraphics[width=0.34\textwidth]{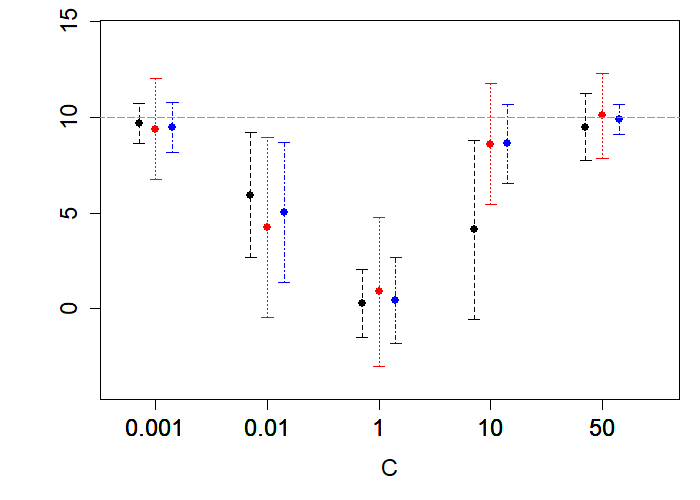}
	\hspace{-3mm}	\includegraphics[width=0.34\textwidth]{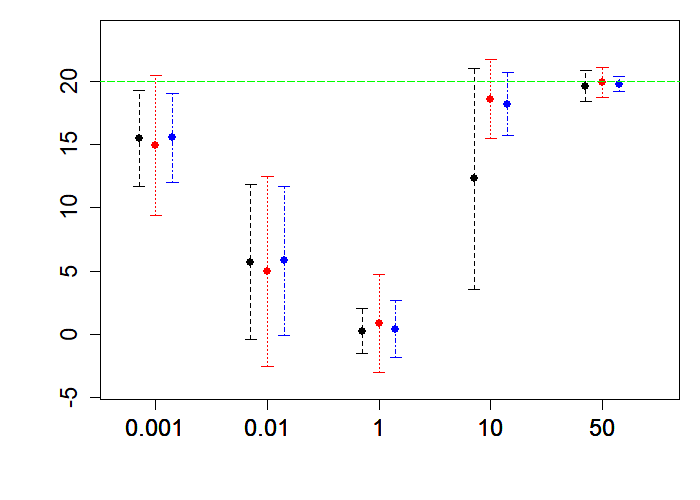}
	\caption{$E(\widehat{k}_0)\pm \sqrt{{\rm Var}(\widehat{k}_0)}$ for scaled outliers, $k_0^*=7(k^*)^{1/3}$.  {\em Vertical:} {\em Top:} Lognormal(0,1), $k=k^*=150$. {\em Middle:}  $\modn$(0,1), $k=k^*=200$.  {\em Bottom:} Weibull(1,$\tau$) distribution with $k=k^*=150,150,200$ for $\tau=0.5,1,2$ respectively. {\em Horizontal:} {\em Left:} $k_0=3$. {\em Middle:} $k_0=10$. {\em Right:} $k_0=20$.} 
	\label{fig:xi-zero-ha-scl}
\end{figure}

\subsection{Case $\xi<0$}

\label{sec:xi-neg}


Here, with $n=1000$, we generate data from Weibull domain, i.e. $\xi<0$ (see \eqref{e:xi-neg}) viz Beta(1,-1/$\xi$) and Reverse Burr(1,0.5,-2/$\xi$). The type 1 error, $\mathbb{P}(\widehat{k}_0>0)$ when no outliers have been injected, i.e. $k_0=0$, is  given in Table \ref{tab:xi-neg-h0}. Note that  for $\xi=-0.25,-0.5$, the algorithm is well calibrated at all values of $k$ when the true value of $\xi$ is used as the initial estimate. When $\xi$ is unknown, smaller values of $k^*$ lead to larger  type 1 errors, especially when $k$ is taken small. For $\xi=-1$, the algorithm is not well calibrated even if the true value of $\xi$ is used as the initial estimate. However, $E(\widehat{k}_0)$ is quite close to zero (see Figure \ref{fig:xi-neg-ha-exp} at $L=1$) which implies that the number of wrongly detected outliers is quite small. at $\xi=-1$ a careful analysis revealed that some of the top order statistics come very close to each other, thereby inflating the statistic $T_{k_0,k}$ (see \eqref{e:stat-trim}) at $k_0\approx 0$, causing DAST to produce false positives.

\begin{table}[H]
	\centering
	\footnotesize
	\begin{tabular}{|c|cccc|cccc|cccc|}
		\hline
		& \multicolumn{4}{c|}{$k$=100} & \multicolumn{4}{c|}{$k$=200} & \multicolumn{4}{c|}{$k$=300}	\\
		$k^*=$ & 100 & 200 & 300 & $\xi$ known  & 100 & 200 & 300 & $\xi$ known & 100 & 200 & 300 & $\xi$ known  \\\hline
		$\xi$=-0.25 & 0.397& 0.122& 0.063& 0.041& 0.453& 0.09& 0.055& 0.039& 0.457& 0.09& 0.063 & 0.045\\
		$\xi$=	-0.5 & 0.44& 0.212& 0.108& 0.117& 0.454& 0.148& 0.078& 0.078& 0.436& 0.133& 0.068 &0.062\\
		$\xi$=	-1 	& 0.512& 0.444& 0.388&  0.544& 0.471& 0.392& 0.31& 0.498&  0.431& 0.343& 0.254 & 0.44\\\hline 
	\end{tabular}	
\end{table}

\vspace{-5mm}
\begin{table}[H]
	\centering
	\footnotesize
	\begin{tabular}{|c|cccc|cccc|cccc|}
		\hline
		& \multicolumn{4}{c|}{$k$=200} & \multicolumn{4}{c|}{$k$=400} & \multicolumn{4}{c|}{$k$=600}	\\
		$k^*=$ & 200 & 400 & 600 & $\xi$ known  & 200 & 400 & 600 & $\xi$ known & 200 & 400 & 600 & $\xi$ known  \\\hline
		$\xi$=	-0.25 & 0.165& 0.057& 0.058& 0.042&0.123& 0.063& 0.086& 0.056&  0.096& 0.091& 0.127 & 0.074\\ 
		$\xi$=	-0.5 & 0.273& 0.138& 0.06&  0.067&  0.187& 0.073& 0.065&  0.047& 0.121& 0.07& 0.089 & 0.063\\
		$\xi$=	-1 &  0.578& 0.64& 0.606&  0.458&0.447& 0.501& 0.471& 0.312& 0.308 & 0.345&  0.182 & 0.182\\\hline 
	\end{tabular}	
	\caption{Type 1 error for the DAST algorithm for $\xi<0$  with $k_0^*=7(k^*)^{1/3}$. {\em Top:} Beta$(1,1/\xi)$ distribution. {\em Bottom:} RBurr(1,0.5,2/$\xi$) distribution.}
	\label{tab:xi-neg-h0}
\end{table}

\begin{figure}[H]
	\hspace{-2mm}\includegraphics[width=0.34\textwidth]{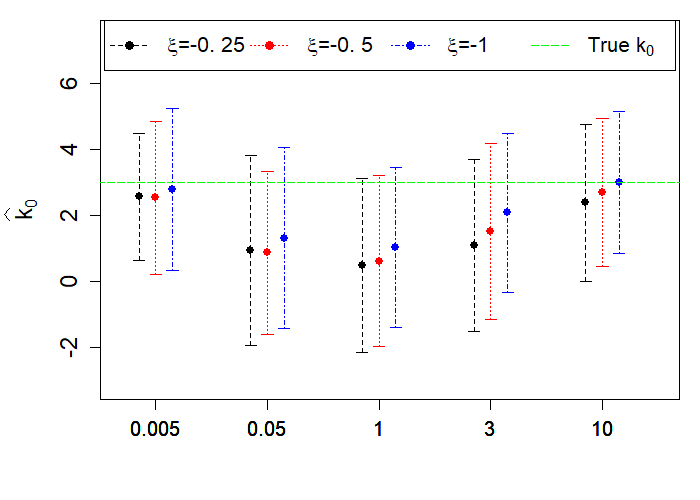}
	\hspace{-3mm}	\includegraphics[width=0.34\textwidth]{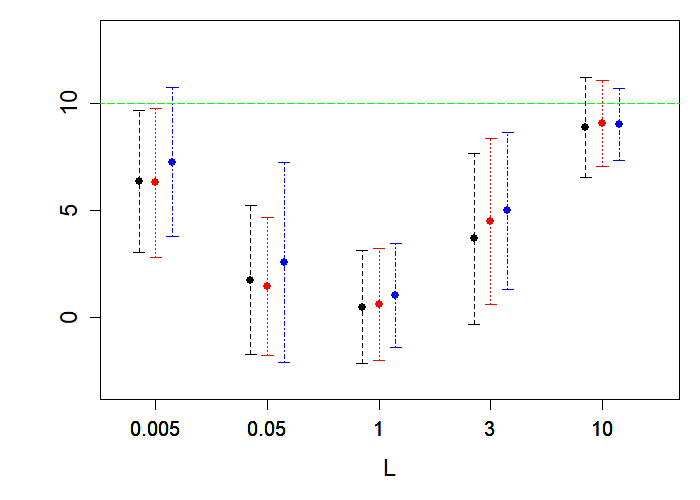}
	\hspace{-3mm}	\includegraphics[width=0.34\textwidth]{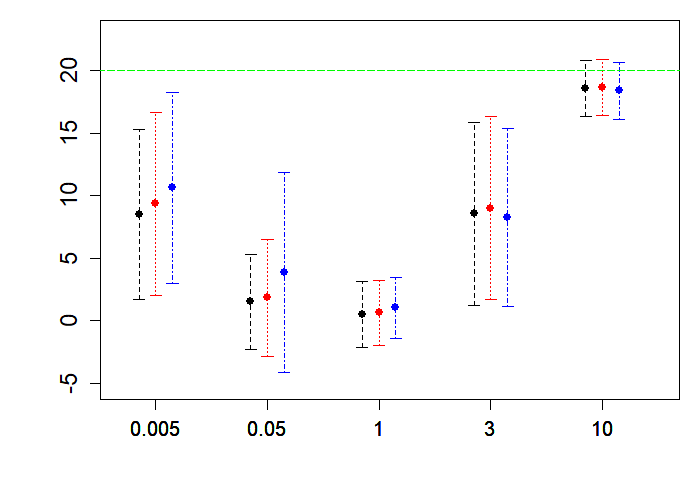}
\end{figure}

\vspace{-5mm}

\begin{figure}[H]
	\hspace{-2mm}\includegraphics[width=0.34\textwidth]{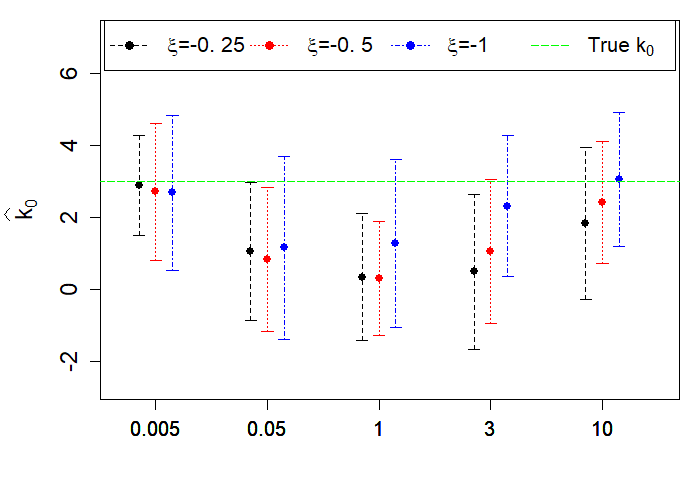}
	\hspace{-3mm}	\includegraphics[width=0.34\textwidth]{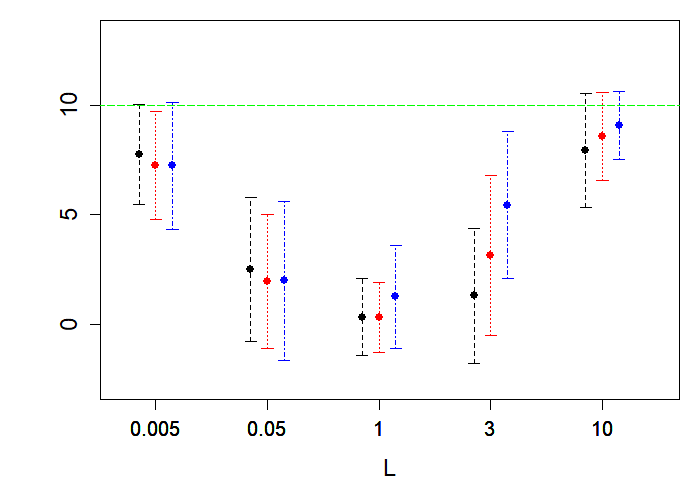}
	\hspace{-3mm}	\includegraphics[width=0.34\textwidth]{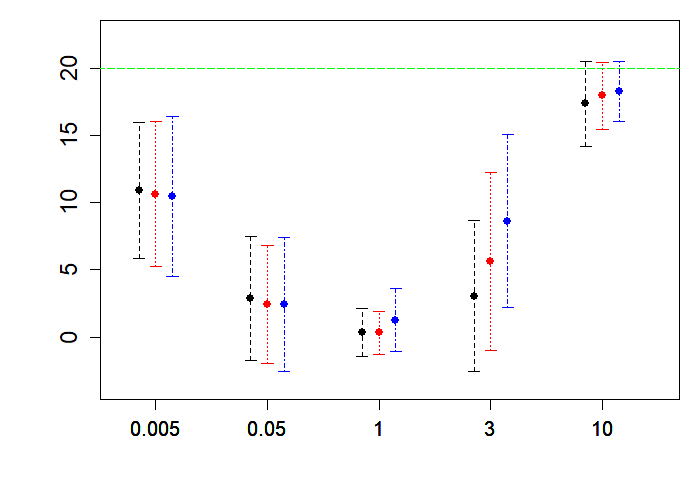}
	\caption{$E(\widehat{k}_0)\pm \sqrt{{\rm Var}(\widehat{k}_0)}$ for exponentiated outliers, $k_0^*=7(k^*)^{1/3}$. {\em Top:} Beta(1,-1/$\xi$) distribution. {\em Bottom:}  Reverse Burr(1,0.5,$-2/\xi$) distribution. {\em Left:} $k_0=3$. {\em Middle:} $k_0=10$. {\em Right:} $k_0=20$.} 
	\label{fig:xi-neg-ha-exp}
\end{figure}

For the distribution models in \eqref{e:xi-neg}, Figures \ref{fig:xi-neg-ha-exp} and \ref{fig:xi-neg-ha-scl} show the performance of the DAST for varying intensity of exponentiated and scaled outliers respectively. Tables \ref{tab:xi-neg-ha-exp} and \ref{tab:xi-neg-ha-scl} in Appendix A.1  exhibit the sensitivity of the algorithm to varying choices of $k$ and $k^*$. The performance of DAST improves on both sides of $L=1$ (case of no outliers) and has greater accuracy at smaller values of $k_0$. Again, the conclusions are fairly similar to those obtained in the $\xi\geq 0$ which suggests that the DAST can adapt itself easily to changing domains of attraction. 
Since the performance of the DAST for $\xi\leq 0$ matches that of $\xi>0$ (see Table \ref{tab:xi-pos-ha-exp}, \ref{tab:xi-zero-ha-exp} and \ref{tab:xi-neg-ha-exp}) the proposed algorithm appears to be quite ubiquitous.

\begin{figure}[H]
	\hspace{-2mm}\includegraphics[width=0.34\textwidth]{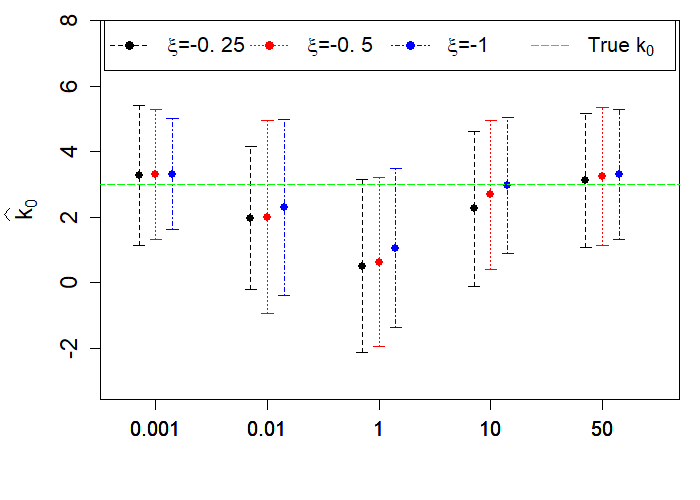}
	\hspace{-3mm}	\includegraphics[width=0.34\textwidth]{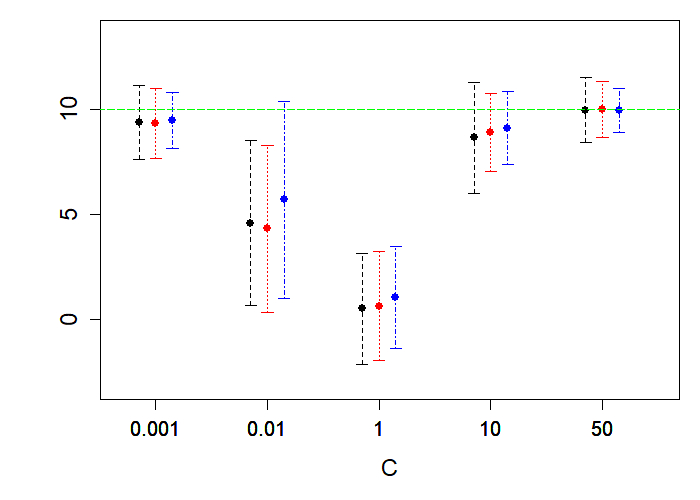}
	\hspace{-3mm}	\includegraphics[width=0.34\textwidth]{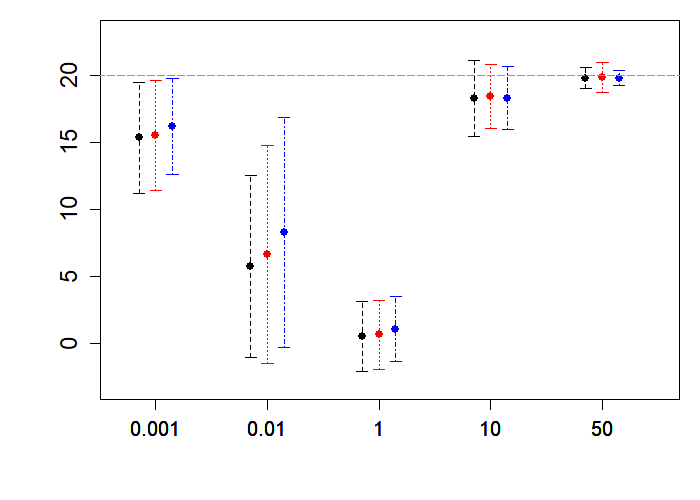}
\end{figure}

\vspace{-5mm}
\begin{figure}[H]
	\hspace{-2mm}\includegraphics[width=0.34\textwidth]{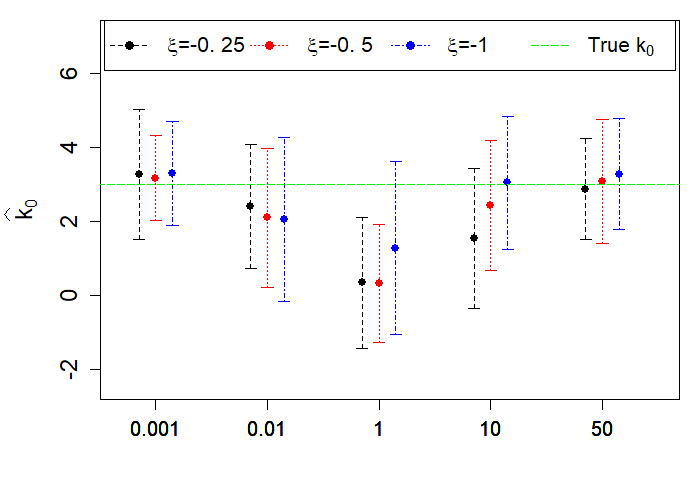}
	\hspace{-3mm}	\includegraphics[width=0.34\textwidth]{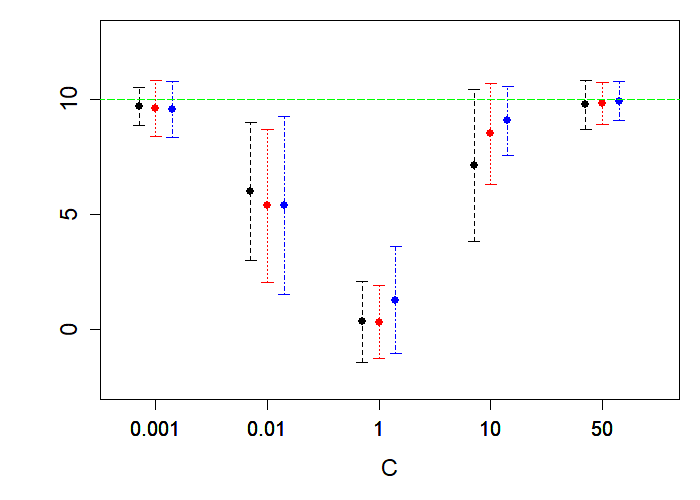}
	\hspace{-3mm}	\includegraphics[width=0.34\textwidth]{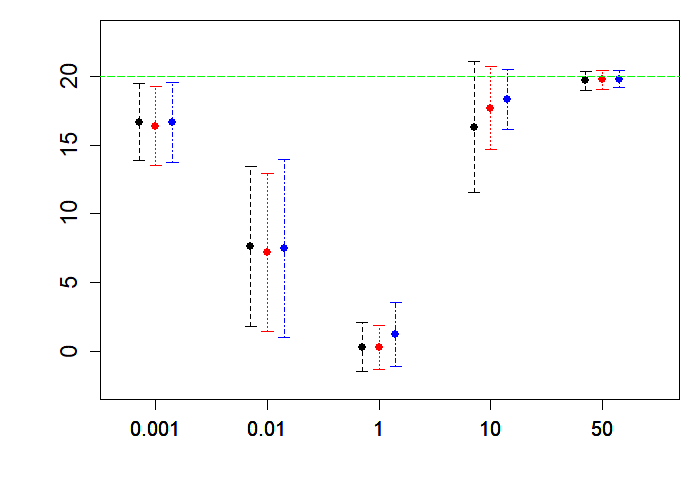}
	\caption{$E(\widehat{k}_0)\pm \sqrt{{\rm Var}(\widehat{k}_0)}$ for scaled outliers, $k_0^*=7(k^*)^{1/3}$. {\em Top:} Beta(1,-1/$\xi$), $k=k^*=200$. {\em Bottom:}  Reverse Burr(1,0.5,$-2/\xi$), $k=k^*=400$. {\em Left:} $k_0=3$. {\em Middle:} $k_0=10$. {\em Right:} $k_0=20$.} 
	\label{fig:xi-neg-ha-scl}
\end{figure}


	\section{Case Studies}
\label{sec:case}
In this section, we apply the DAST algorithm for detection of outliers for some cases that exhibit some deviating data at one or both tails, and that appear to belong to different max-domains. The parameters $a$ and $q$ are set at 1.2 and 0.05. For breaking ties, the data are dithered by adding a small uniform noise from the uniform $U(-0.01,0.01)$ distribution. In each example we take $k=k^*$.  In practice, the choice of $k$ and $k^*$ in the neighborhood of $k^{\rm opt}$ as discussed/used in the section \ref{sec:sim} (see \eqref{e:k-star-def}) is not directly applicable given that $k^{\rm opt}$ is not known. We propose  to consider appropriate  QQ-plots or the generalized QQ-plot in general,  and choose $k=k^*$ around the point where a stable tail behavior starts to kick in. This is illustrated in this section in Figures \ref{fig:freclaim-1} left, \ref{fig:condroz-1} left, \ref{fig:toxic-1} right,  and \ref{fig:air} right with linear fits based solely on the top $k=k^*$ observations, as used in the DAST algorithm.

\subsection{French precipitation data }


\begin{figure}[H]
	\centering
	\includegraphics[width=0.32\textwidth]{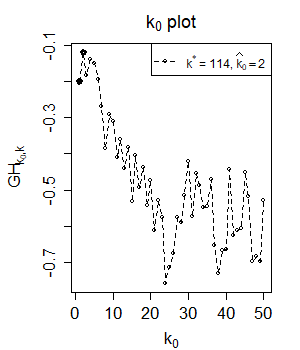}
	\includegraphics[width=0.32\textwidth]{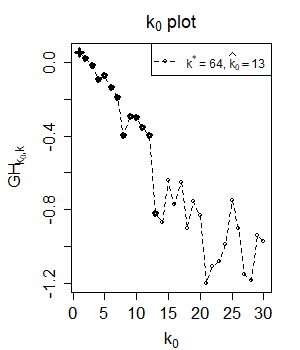}
	\includegraphics[width=0.32\textwidth]{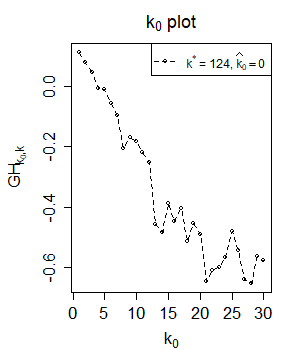}
	\caption{{\em French precipitation} data. {\em Left:} Diagnostic $k_0$ plot for Chamonix with $k=k^*=114$.  {\em Middle:}  Diagnostic $k_0$ plot of Uzein station with  $k=k^*=64$. {\em Right:}  Diagnostic $k_0$ plot of Uzein station with  $k=k^*=124$.  }
	\label{fig:precip-case}
\end{figure}

Concerning  {\em the French precipitation} data discussed  in section \ref{sec:intro}, we  here show the diagnostic $k_0$ plots  for both the Chamonix and Uzein stations.  For the Chamonix station the top two observations exhibit a downward trend in the diagnostic $k_0$ plot in comparison with the remaining points.  Concentrating on the top outliers only, for the Uzein station with $V=2$, $k_0^*=20$ and  $k=k^*=64$ we obtain 13 outliers with the largest data value being indicated separately.  For $V=2$, $k_0^*=20$, if one chooses $k=k^*=124$, no outliers are obtained. The sensitivity to the choice of $k$ can be explained by the Pareto QQ-plot of Figure \ref{fig:precip_2_intro} which shows that Pareto behavior sets in around $k=k^* \leq 64$.


\subsection{Toxicity data set}

The {\em Toxicity} data set  from \url{https://archive.ics.uci.edu/ml/datasets/QSAR+fish+toxicity} was used to develop regression models for the prediction of acute aquatic toxicity towards the Pimephales promelas. We concentrate on the LC50 count, a chemical responsible for 50\% of the deaths in the fish population  (see \cite{qsar}). Note that the classical boxplot identifies top 11 outliers whereas the tail-adjusted boxplot identifies none. This example shows an EVI $\xi \leq 0$, which is confirmed by  the generalized  QQ-plot in the right panel in Figure \ref{fig:toxic-1},  which is non-increasing at the right hand side.

\begin{figure}[H]
	\centering
	\includegraphics[width=0.32\textwidth]{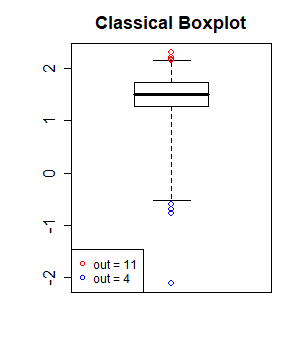}
	\includegraphics[width=0.32\textwidth]{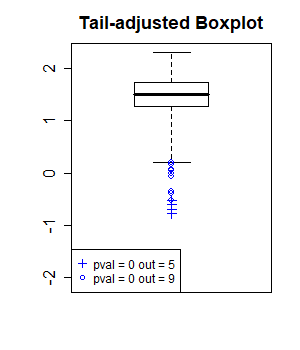}
	\includegraphics[width=0.32\textwidth]{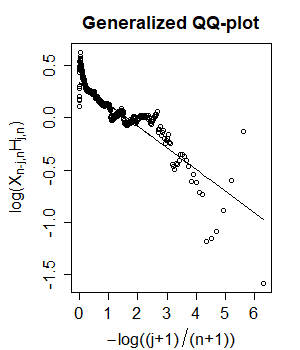}
	\caption{ {\em Toxicity} data. {\em Left:} Classical boxplot.  {\em Right:} Tail-adjusted boxplot.  {\em Right:}  Generalized QQ-Plot.}
	\label{fig:toxic}
\end{figure}

\vspace{-5mm}

\begin{figure}[H]
	\centering
	
	\includegraphics[width=0.32\textwidth]{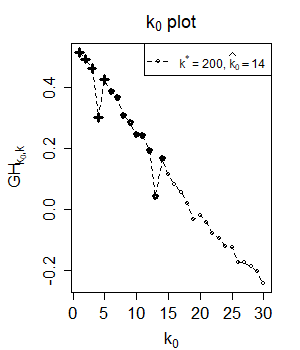}
	\includegraphics[width=0.32\textwidth]{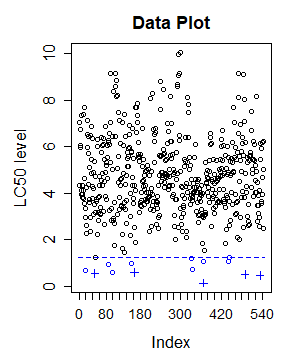}
	\caption{ {\em Toxicity} data set.  {\em Left:}  Diagnostic $k_0$ plot for the left tail.  {\em Right:} Time plot. The extreme outliers and moderate outliers are marked with + and $\circ$ respectively.}
	\label{fig:toxic-1}
\end{figure}

However, a more interesting phenomenon occurs in the lower tails where the classical boxplot identifies 4 bottom outliers in contrast to 14 bottom outliers of the tail-adjusted boxplot. Using the DAST algorithm for the bottom tail  with $V=2$, $k_0^*=30$, $k=k^*=200$, the 14 bottom outliers are split into two regimes with 5 extreme and 9 moderate outliers. Indeed, the diagnostic $k_0$ plot for the bottom tail in Figure \ref{fig:toxic-1} shows that there are two change points at $k_0=5$ and $k_0=14$ respectively.  The right panel of Figure \ref{fig:toxic-1} displays the time plot which indicates these 14 outliers for the bottom tail.

\subsection{French fire claims data set}

\begin{figure}[H]
	\centering	\includegraphics[width=0.32\textwidth]{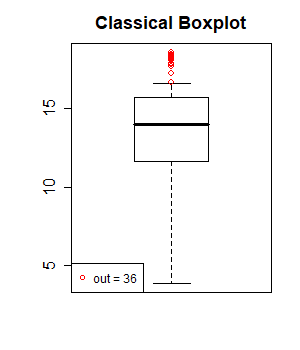}	\includegraphics[width=0.32\textwidth]{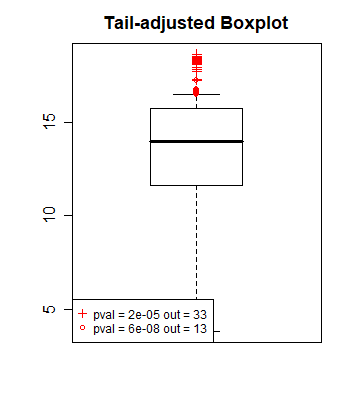}
	
	\caption{{\em Fire claim} data set.  {\em Left:} Classical boxplot.  {\em Right:} Tail-adjusted boxplot.}
	\label{fig:freclaim}
\end{figure}

\begin{figure}[H]
	\centering
	\includegraphics[width=0.32\textwidth]{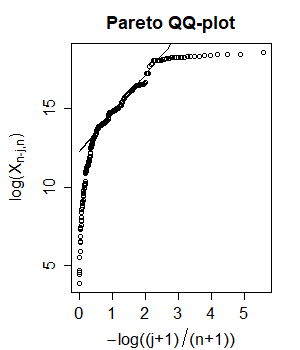}
	\includegraphics[width=0.32\textwidth]{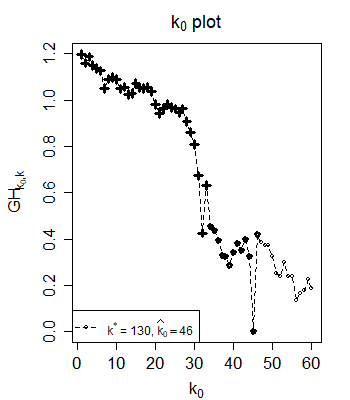}
	\includegraphics[width=0.32\textwidth]{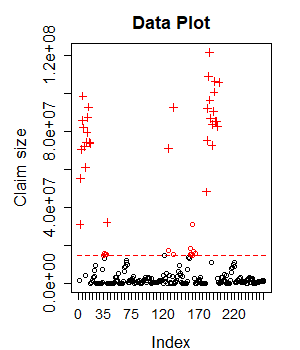}
	\caption{{\em Fire claim} data set. {\em Left:}  Pareto QQ-plot. {\em Middle:}  Diagnostic $k_0$ plot for the right hand tail. {\em Right:} Data plot. The extreme outliers and moderate outliers are marked with + and $\circ$.}
	\label{fig:freclaim-1}
\end{figure}

The {\em Fire Claim} data set involves $n= 261$ claim settlements issued by a private insurer in France during the time period 1996-2006 available from  \url{http://cas.uqam.ca/pub/R/web/
	CASdatasets-manual.pdf}. This data set was already analyzed for outliers in \cite{bhatt2019}. We here concentrate on the right tail only. The Pareto QQ-plot in Figure \ref{fig:freclaim-1}  has  an apparent linear trend, up to a top group of data which exhibit less spread than the data below.  Hence a different regime is present in the top data. This often appears in non-life insurance claim data due to tightened claim inspection and management with extreme claims. The boxplots are given on the log-scale in Figure \ref{fig:freclaim}, and the classical boxplot shows 36 top outliers.

For identifying the top outliers, the parameters of the algorithm  are chosen as  $V=2$, $k_0^*=40$, $k=k^*=130$. Then the algorithm returns a set of $\widehat{k}_0=33$ most extreme outliers and another group of $13$ outliers which can also be noticed from the diagnostic $k_0$ plot with break points around positions 33 and  46. Also the Pareto QQ-plot shows some intermediate data which deviate from the linear pattern below the level $\log x= 16.5$ on the vertical scale. The top 33 outliers were already detected in \cite{bhatt2019}.

\subsection{Condroz data set}

\begin{figure}[H]
	\centering
	\includegraphics[width=0.32\textwidth]{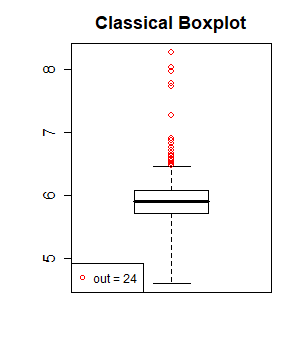}
	\includegraphics[width=0.32\textwidth]{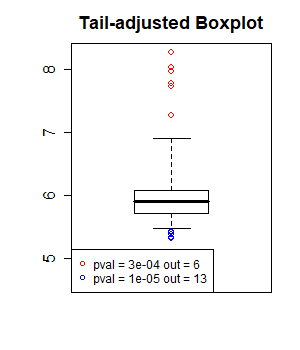}
	\caption{ {\em Condroz} data set.  {\em Left:} Classical boxplot.  {\em Right:} Tail-adjusted boxplot.}
	\label{fig:condroz}
\end{figure}

The {\em Condroz} data set  with calcium content measurements together with the pH level of soil samples in the Condroz region of Belgium was discussed in detail in \cite{goeg2005}, and has been analyzed for outliers in  \cite{beir1996}, \cite{VANDEWALLE2007}, \cite{hubert2008} and \cite{bhatt2019}. As in these references we consider the conditional distribution of the calcium content for pH levels lying between 7-7.5 leading to $n= 420$ data values. All authors put this example in the 
Fr\'echet domain ($\xi >0$), which is confirmed by  the Pareto QQ-plot in 
the left panel in Figure \ref{fig:condroz-1}  overall approximately exhibiting a linear pattern (see  \cite{beir1996}) except for the top 6 values  that jump out. These outliers appeared to be measurements from communities at the boundary of the Condroz region and hence can be considered to be sampled from another distribution (cfr. \cite{hubert2008}).
In \cite{hubert2008} the adjusted boxplot based on robust skewness measurement shows 12 outliers.

\begin{figure}[H]
	\centering
	\includegraphics[width=0.32\textwidth]{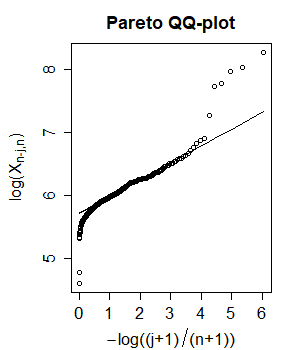}
	\includegraphics[width=0.32\textwidth]{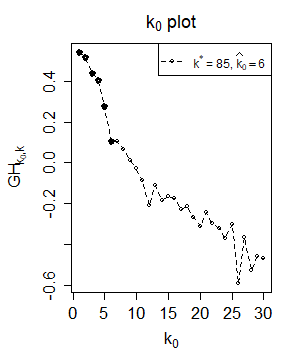}
	\includegraphics[width=0.32\textwidth]{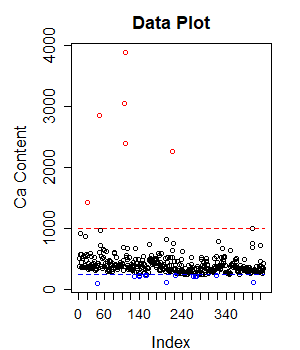}
	\caption{{\em Condroz} data set.  {\em Left:}  Pareto QQ-Plot. {\em Middle:}  Diagnostic $k_0$ plot for the right hand tail. {\em Right:} Data plot.}
	\label{fig:condroz-1}
\end{figure}

The classical and tail-adjusted boxplots are given in  Figure \ref{fig:condroz}  on the log scale. For identifying both top and bottom outliers, the parameters of the algorithm  are chosen as $V=1$, $k_0^*=30$, $k=k^*=85$. As expected for a heavy tailed distribution, the classical boxplot indicates  a high number of 24 outliers, while the tail-adjusted boxplot shows the  6 outliers corresponding with a visual inspection of the Pareto QQ-plot.   This is also in consensus with the findings of \cite{bhatt2019} where the problem of outlier identification in the heavy tailed regime ($\xi>0$) was already discussed. Additionally,  13 left tail outliers are identified in the tail-adjusted boxplot in contrast to the classical boxplot which identifies none. The middle panel of Figure \ref{fig:condroz-1} contains the diagnostic $k_0$ plot for the right tail  which shows a change point around the point $k_0 = 6$ for different values of $k$. The data plot is given in the right panel in Figure \ref{fig:condroz-1} with indication of  the identified outliers in the upper and lower tails.

\subsection{Air data set}
The  {\em Air Quality} data set obtained from the New York State Department of Conservation (ozone data) and the National Weather Service (meteorological data) is available at \url{https://stat.ethz.ch/R-manual/R-devel/library/datasets/html/airquality.html}. It contains  wind speeds (in miles per hour) for  New York, May to September 1973, see  \cite{chambers1992}. 
For this case the generalized QQ-plot shows a clear downward trend leading to $\xi <0$ for the right tail. Here also the boxplots are given on the log scale.  For identifying both top and bottom outliers, the parameters of the DAST algorithm  are chosen as $V=1$, $k=76$, $k^*=76$, $k_0^*=25$. Both the classical boxplot and tail-adjusted boxplot report 3 top outliers.  In \cite{hubert2008}  also 3 top outliers were found using a measure of skewness. The tail-adjusted boxplot however identifies 24 bottom outliers, none of which were detected by the classical boxplot.

\begin{figure}[H]
	\centering
	\includegraphics[width=0.32\textwidth]{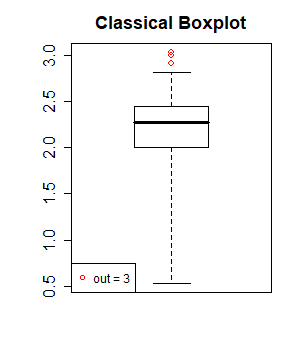}
	\includegraphics[width=0.32\textwidth]{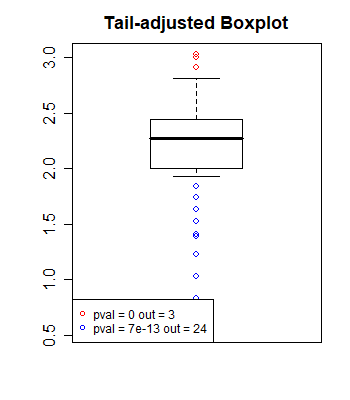}
	\includegraphics[width=0.32\textwidth]{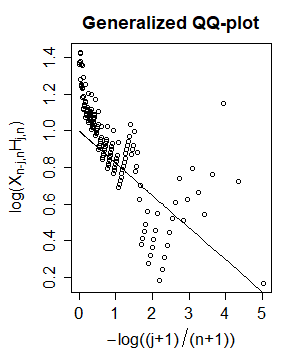}
	\caption{{\em Air quality} data set.  {\em Left:} Classical boxplot.  {\em Middle:} Tail-adjusted boxplot. {\em Right:} Generalized QQ-plot.}
	\label{fig:air}
\end{figure}


\section{Conclusion}

In this paper we provided a testing procedure for outlier detection based on extreme value methodology.
The test statistic is based on the deviations of trimmed Hill statistics when trimming consecutive extreme data points. While the Hill estimator is only a consistent estimator in case of a positive extreme value index, we show that this statistic is still useful for outlier detection in all max-domains of attraction. As a practical consequence a tail-adjusted boxplot is proposed, allowing to indicate possible outliers depending on the tail heaviness of the underlying distribution.

\section*{Appendix}
\subsection*{A.1. Further Simulation Results}
We report $E(\widehat{k}_0)\pm \sqrt{{\rm Var}(\widehat{k}_0)}$ as obtained by the DAST. The parameters are set at $V=1$, $a=1.2$ and $q=0.05$. Top row corresponds to $\xi$ known and bottom row corresponds to $\xi$ estimated according to section \ref{sec:aut-trim} for varying $k$ and $k^*$ and $k_0^*=7(k^*)^{1/3}$.

\begin{table}[H]
	\centering
	\scriptsize
	\begin{tabular}{|c|ccc|ccc|ccc|}
		\hline
		& \multicolumn{3}{c|}{$k$=200} & \multicolumn{3}{c|}{$k$=400} & \multicolumn{3}{c|}{$k$=600}	\\
		$L$ & $k^*=$200 & $k^*=$400 & $k^*=$600 &	$k^*=$200 & $k^*=$400 & $k^*=$600&$k^*=$200 & $k^*=$400 & $k^*=$600\\\hline
		0.005 & 7.1 $\pm$ 2.5& 7.1 $\pm$ 2.5& 7.1 $\pm$ 2.5& 7.5 $\pm$ 2.2& 7.5 $\pm$ 2.2& 7.5 $\pm$ 2.2& 8.1 $\pm$ 1.9& 8.1 $\pm$ 1.9& 8.1 $\pm$ 1.9\\ 
		& 7 $\pm$ 3& 7.1 $\pm$ 2.5& 7.1 $\pm$ 2.5& 7.5 $\pm$ 2.6& 7.5 $\pm$ 2.2& 7.5 $\pm$ 2.2& 8 $\pm$ 2& 8.1 $\pm$ 1.9& 8.1 $\pm$ 1.9\\\hline 
		0.05 & 1.8 $\pm$ 3& 1.8 $\pm$ 3& 1.8 $\pm$ 3& 2.2 $\pm$ 3.1& 2.2 $\pm$ 3.1& 2.2 $\pm$ 3.1& 2.8 $\pm$ 3.4& 2.8 $\pm$ 3.4& 2.8 $\pm$ 3.4\\ 
		& 2 $\pm$ 3.6& 1.9 $\pm$ 3.1& 1.8 $\pm$ 3& 2.2 $\pm$ 3.2& 2.2 $\pm$ 3.1& 2.2 $\pm$ 3.1& 2.7 $\pm$ 3.4& 2.8 $\pm$ 3.4& 2.8 $\pm$ 3.4\\\hline 
		1 & 0.2 $\pm$ 1.5& 0.2 $\pm$ 1.5& 0.2 $\pm$ 1.5& 0.2 $\pm$ 1.2& 0.2 $\pm$ 1.2& 0.2 $\pm$ 1.2& 0.2 $\pm$ 1.5& 0.2 $\pm$ 1.5& 0.2 $\pm$ 1.5\\ 
		& 0.3 $\pm$ 2.2& 0.2 $\pm$ 1.5& 0.2 $\pm$ 1.5& 0.2 $\pm$ 1.4& 0.2 $\pm$ 1.2& 0.2 $\pm$ 1.2& 0.2 $\pm$ 1.5& 0.2 $\pm$ 1.5& 0.2 $\pm$ 1.5\\\hline 
		3 & 3.3 $\pm$ 3.8& 3.3 $\pm$ 3.8& 3.3 $\pm$ 3.8& 2.2 $\pm$ 3.3& 2.2 $\pm$ 3.3& 2.2 $\pm$ 3.3& 1.1 $\pm$ 2.6& 1.1 $\pm$ 2.6& 1.1 $\pm$ 2.6\\ 
		& 3.5 $\pm$ 4.1& 3.4 $\pm$ 3.8& 3.3 $\pm$ 3.8& 2.3 $\pm$ 3.5& 2.2 $\pm$ 3.3& 2.2 $\pm$ 3.3& 1.2 $\pm$ 2.7& 1.1 $\pm$ 2.6& 1.1 $\pm$ 2.6\\\hline 
		10 & 8.8 $\pm$ 1.7& 8.8 $\pm$ 1.7& 8.8 $\pm$ 1.7& 8.4 $\pm$ 2& 8.4 $\pm$ 2& 8.4 $\pm$ 2& 7.7 $\pm$ 2.5& 7.7 $\pm$ 2.5& 7.7 $\pm$ 2.5\\ 
		& 8.9 $\pm$ 2.3& 8.8 $\pm$ 1.7& 8.8 $\pm$ 1.7& 8.5 $\pm$ 2.2& 8.4 $\pm$ 2& 8.4 $\pm$ 2& 7.7 $\pm$ 2.7& 7.7 $\pm$ 2.5& 7.7 $\pm$ 2.5\\\hline 
		30 & 9.8 $\pm$ 1.1& 9.8 $\pm$ 1.1& 9.8 $\pm$ 1.1& 9.7 $\pm$ 1.1& 9.7 $\pm$ 1.1& 9.7 $\pm$ 1.1& 9.5 $\pm$ 1.5& 9.5 $\pm$ 1.5& 9.5 $\pm$ 1.5\\ 
		& 9.9 $\pm$ 2.1& 9.8 $\pm$ 1.1& 9.8 $\pm$ 1.1& 9.7 $\pm$ 1.6& 9.7 $\pm$ 1.1& 9.7 $\pm$ 1.1& 9.6 $\pm$ 1.7& 9.5 $\pm$ 1.5& 9.5 $\pm$ 1.5\\\hline
		
	\end{tabular}
\end{table}

\begin{table}[H]
	\centering
	\scriptsize
	\begin{tabular}{|c|ccc|ccc|ccc|}
		\hline
		& \multicolumn{3}{c|}{$k$=100} & \multicolumn{3}{c|}{$k$=200} & \multicolumn{3}{c|}{$k$=300}	\\
		$L$ & $k^*=$100 & $k^*=$200 & $k^*=$300 &	$k^*=$100 & $k^*=$200 & $k^*=$300&$k^*=$100 & $k^*=$200 & $k^*=$300\\\hline
		0.005 & 7.2 $\pm$ 4.2& 7.2 $\pm$ 4.2& 7.2 $\pm$ 4.2& 7.2 $\pm$ 2.7& 7.2 $\pm$ 2.7& 7.2 $\pm$ 2.7& 7.4 $\pm$ 2.6& 7.4 $\pm$ 2.6& 7.4 $\pm$ 2.6\\ 
		& 6.6 $\pm$ 6.3& 7.2 $\pm$ 4.6& 7.2 $\pm$ 4.2& 7.3 $\pm$ 7.4& 7.2 $\pm$ 3.3& 7.2 $\pm$ 2.8& 8.2 $\pm$ 8.3& 7.3 $\pm$ 2.8& 7.4 $\pm$ 2.6\\\hline 
		0.05 & 2 $\pm$ 4.3& 2 $\pm$ 4.3& 2 $\pm$ 4.3& 1.8 $\pm$ 3& 1.8 $\pm$ 3& 1.8 $\pm$ 3& 1.9 $\pm$ 3.1& 1.9 $\pm$ 3.1& 1.9 $\pm$ 3.1\\ 
		& 2.7 $\pm$ 6.5& 2.1 $\pm$ 4.7& 2 $\pm$ 4.3& 3.6 $\pm$ 7.8& 1.9 $\pm$ 3.3& 1.8 $\pm$ 3& 4.9 $\pm$ 9.4& 2 $\pm$ 3.3& 1.9 $\pm$ 3.1\\\hline 
		1 & 0.6 $\pm$ 4.1& 0.6 $\pm$ 4.1& 0.6 $\pm$ 4.1& 0.2 $\pm$ 1.8& 0.2 $\pm$ 1.8& 0.2 $\pm$ 1.8& 0.2 $\pm$ 1.6& 0.2 $\pm$ 1.6& 0.2 $\pm$ 1.6\\ 
		& 2.3 $\pm$ 6.7& 0.9 $\pm$ 4.9& 0.7 $\pm$ 4.2& 3.8 $\pm$ 8& 0.4 $\pm$ 2.7& 0.2 $\pm$ 1.8& 5.1 $\pm$ 9.4& 0.3 $\pm$ 2.2& 0.2 $\pm$ 1.8\\\hline 
		3 & 4.2 $\pm$ 4.8& 4.2 $\pm$ 4.8& 4.2 $\pm$ 4.8& 3.6 $\pm$ 4& 3.6 $\pm$ 4& 3.6 $\pm$ 4& 2.9 $\pm$ 3.6& 2.9 $\pm$ 3.6& 2.9 $\pm$ 3.6\\ 
		& 6.4 $\pm$ 6.5& 4.3 $\pm$ 5.1& 4.2 $\pm$ 4.8& 7.9 $\pm$ 7.7& 3.7 $\pm$ 4.3& 3.6 $\pm$ 4.1& 8.7 $\pm$ 8.5& 3.1 $\pm$ 4.3& 3 $\pm$ 3.8\\\hline 
		10 & 9.2 $\pm$ 3.2& 9.2 $\pm$ 3.2& 9.2 $\pm$ 3.2& 8.8 $\pm$ 1.8& 8.8 $\pm$ 1.8& 8.8 $\pm$ 1.8& 8.6 $\pm$ 1.9& 8.6 $\pm$ 1.9& 8.6 $\pm$ 1.9\\ 
		& 10.1 $\pm$ 4.7& 9.3 $\pm$ 3.6& 9.2 $\pm$ 3.2& 10.7 $\pm$ 5.4& 8.9 $\pm$ 2.3& 8.8 $\pm$ 1.9& 11.5 $\pm$ 6.7& 8.7 $\pm$ 2& 8.6 $\pm$ 1.9\\\hline 
		30 & 10 $\pm$ 2.5& 10 $\pm$ 2.5& 10 $\pm$ 2.5& 9.8 $\pm$ 1.1& 9.8 $\pm$ 1.1& 9.8 $\pm$ 1.1& 9.7 $\pm$ 1& 9.7 $\pm$ 1& 9.7 $\pm$ 1\\ 
		& 10.7 $\pm$ 4.4& 10.2 $\pm$ 3.3& 10 $\pm$ 2.7& 11.2 $\pm$ 5.1& 9.8 $\pm$ 1.2& 9.8 $\pm$ 1.1& 11.9 $\pm$ 6.1& 9.8 $\pm$ 1.4& 9.7 $\pm$ 1\\\hline  \end{tabular}
	\caption{$k_0=10$ exponentiated outliers. {\em Top:} $\modt(1/0.5)$. {\em Bottom:} Burr(1,0.5,1/0.5).}
	\label{tab:xi-pos-ha-exp}
\end{table}

\newpage

\begin{table}[H]
	\centering
	\scriptsize
	\begin{tabular}{|c|ccc|ccc|ccc|}
		\hline
		& \multicolumn{3}{c|}{$k$=200} & \multicolumn{3}{c|}{$k$=400} & \multicolumn{3}{c|}{$k$=600}	\\
		$C$ & $k^*=$200 & $k^*=$400 & $k^*=$600 &	$k^*=$200 & $k^*=$400 & $k^*=$600&$k^*=$200 & $k^*=$400 & $k^*=$600\\\hline
		0.001 & 9.5 $\pm$ 1.3& 9.5 $\pm$ 1.3& 9.5 $\pm$ 1.3& 9.6 $\pm$ 1.2& 9.6 $\pm$ 1.2& 9.6 $\pm$ 1.2& 9.8 $\pm$ 1.1& 9.8 $\pm$ 1.1& 9.8 $\pm$ 1.1\\ 
		& 9.5 $\pm$ 1.6& 9.5 $\pm$ 1.3& 9.5 $\pm$ 1.3& 9.6 $\pm$ 1.4& 9.6 $\pm$ 1.2& 9.6 $\pm$ 1.2& 9.8 $\pm$ 1.1& 9.8 $\pm$ 1.1& 9.8 $\pm$ 1.1\\\hline 
		0.01 & 4.6 $\pm$ 3.6& 4.6 $\pm$ 3.6& 4.6 $\pm$ 3.6& 5.3 $\pm$ 3.5& 5.3 $\pm$ 3.5& 5.3 $\pm$ 3.5& 6.1 $\pm$ 3.3& 6.1 $\pm$ 3.3& 6.1 $\pm$ 3.3\\ 
		& 4.7 $\pm$ 3.9& 4.6 $\pm$ 3.6& 4.6 $\pm$ 3.6& 5.3 $\pm$ 3.7& 5.3 $\pm$ 3.5& 5.3 $\pm$ 3.5& 6 $\pm$ 3.3& 6.1 $\pm$ 3.3& 6.1 $\pm$ 3.3\\\hline 
		1 & 0.2 $\pm$ 1.5& 0.2 $\pm$ 1.5& 0.2 $\pm$ 1.5& 0.2 $\pm$ 1.2& 0.2 $\pm$ 1.2& 0.2 $\pm$ 1.2& 0.2 $\pm$ 1.5& 0.2 $\pm$ 1.5& 0.2 $\pm$ 1.5\\ 
		& 0.3 $\pm$ 2.2& 0.2 $\pm$ 1.5& 0.2 $\pm$ 1.5& 0.2 $\pm$ 1.4& 0.2 $\pm$ 1.2& 0.2 $\pm$ 1.2& 0.2 $\pm$ 1.5& 0.2 $\pm$ 1.5& 0.2 $\pm$ 1.5\\\hline 
		10 & 7.1 $\pm$ 3.9& 7.1 $\pm$ 3.9& 7.1 $\pm$ 3.9& 5.5 $\pm$ 4.5& 5.5 $\pm$ 4.5& 5.5 $\pm$ 4.5& 3.3 $\pm$ 4.5& 3.3 $\pm$ 4.5& 3.3 $\pm$ 4.5\\ 
		& 7.3 $\pm$ 4.2& 7.1 $\pm$ 3.9& 7.1 $\pm$ 3.9& 5.7 $\pm$ 4.6& 5.5 $\pm$ 4.5& 5.5 $\pm$ 4.5& 3.3 $\pm$ 4.5& 3.3 $\pm$ 4.5& 3.3 $\pm$ 4.5\\\hline 
		50 & 9.8 $\pm$ 1& 9.8 $\pm$ 1& 9.8 $\pm$ 1& 9.6 $\pm$ 1.2& 9.6 $\pm$ 1.2& 9.6 $\pm$ 1.2& 9.2 $\pm$ 2& 9.2 $\pm$ 2& 9.2 $\pm$ 2\\ 
		& 9.8 $\pm$ 1.4& 9.8 $\pm$ 1& 9.8 $\pm$ 1& 9.7 $\pm$ 1.4& 9.6 $\pm$ 1.2& 9.6 $\pm$ 1.2& 9.3 $\pm$ 2& 9.2 $\pm$ 2& 9.2 $\pm$ 2\\\hline 
		200 & 10 $\pm$ 0.5& 10 $\pm$ 0.5& 10 $\pm$ 0.5& 10 $\pm$ 0.5& 10 $\pm$ 0.5& 10 $\pm$ 0.5& 9.9 $\pm$ 0.7& 9.9 $\pm$ 0.7& 9.9 $\pm$ 0.7\\ 
		& 10.1 $\pm$ 1.6& 10 $\pm$ 0.5& 10 $\pm$ 0.5& 10 $\pm$ 1.1& 10 $\pm$ 0.5& 10 $\pm$ 0.5& 9.9 $\pm$ 0.7& 9.9 $\pm$ 0.7& 9.9 $\pm$ 0.7\\\hline 
	\end{tabular}
	
\end{table}

\begin{table}[H]
	\centering
	\scriptsize
	\begin{tabular}{|c|ccc|ccc|ccc|}
		\hline
		& \multicolumn{3}{c|}{$k$=100} & \multicolumn{3}{c|}{$k$=200} & \multicolumn{3}{c|}{$k$=300}	\\
		$C$ & $k^*=$100 & $k^*=$200 & $k^*=$300 &	$k^*=$100 & $k^*=$200 & $k^*=$300&$k^*=$100 & $k^*=$200 & $k^*=$300\\\hline
		0.001 & 9.5 $\pm$ 3& 9.5 $\pm$ 3& 9.5 $\pm$ 3& 9.4 $\pm$ 1.3& 9.4 $\pm$ 1.3& 9.4 $\pm$ 1.3& 9.5 $\pm$ 1.1& 9.5 $\pm$ 1.1& 9.5 $\pm$ 1.1\\ 
		& 9.7 $\pm$ 5.2& 9.6 $\pm$ 3.4& 9.5 $\pm$ 3.1& 10.1 $\pm$ 5.9& 9.4 $\pm$ 1.8& 9.4 $\pm$ 1.4& 11.1 $\pm$ 7.3& 9.6 $\pm$ 1.9& 9.5 $\pm$ 1.1\\\hline 
		0.01 & 4.7 $\pm$ 4.6& 4.7 $\pm$ 4.6& 4.7 $\pm$ 4.6& 4.8 $\pm$ 3.7& 4.8 $\pm$ 3.7& 4.8 $\pm$ 3.7& 5.1 $\pm$ 3.5& 5.1 $\pm$ 3.5& 5.1 $\pm$ 3.5\\ 
		& 4.7 $\pm$ 6.4& 4.8 $\pm$ 5& 4.7 $\pm$ 4.8& 5.5 $\pm$ 7.6& 4.9 $\pm$ 4.2& 4.9 $\pm$ 3.9& 6.6 $\pm$ 8.9& 5.1 $\pm$ 3.9& 5.1 $\pm$ 3.6\\\hline 
		1 & 0.6 $\pm$ 4.1& 0.6 $\pm$ 4.1& 0.6 $\pm$ 4.1& 0.2 $\pm$ 1.8& 0.2 $\pm$ 1.8& 0.2 $\pm$ 1.8& 0.2 $\pm$ 1.6& 0.2 $\pm$ 1.6& 0.2 $\pm$ 1.6\\ 
		& 2.3 $\pm$ 6.7& 0.9 $\pm$ 4.9& 0.7 $\pm$ 4.2& 3.8 $\pm$ 8& 0.4 $\pm$ 2.7& 0.2 $\pm$ 1.8& 5.1 $\pm$ 9.4& 0.3 $\pm$ 2.2& 0.2 $\pm$ 1.8\\\hline 
		10 & 7.9 $\pm$ 4.1& 7.9 $\pm$ 4.1& 7.9 $\pm$ 4.1& 7.3 $\pm$ 3.9& 7.3 $\pm$ 3.9& 7.3 $\pm$ 3.9& 6.6 $\pm$ 4.2& 6.6 $\pm$ 4.2& 6.6 $\pm$ 4.2\\ 
		& 9.3 $\pm$ 5.4& 8.1 $\pm$ 4.6& 8 $\pm$ 4.3& 10 $\pm$ 6& 7.4 $\pm$ 4.2& 7.3 $\pm$ 3.9& 10.7 $\pm$ 7.3& 6.7 $\pm$ 4.2& 6.6 $\pm$ 4.2\\\hline 
		50 & 10 $\pm$ 2.5& 10 $\pm$ 2.5& 10 $\pm$ 2.5& 9.8 $\pm$ 0.9& 9.8 $\pm$ 0.9& 9.8 $\pm$ 0.9& 9.8 $\pm$ 1& 9.8 $\pm$ 1& 9.8 $\pm$ 1\\ 
		& 10.7 $\pm$ 4.2& 10.2 $\pm$ 3& 10.1 $\pm$ 2.7& 11.3 $\pm$ 5.3& 9.9 $\pm$ 1.7& 9.8 $\pm$ 1.2& 12 $\pm$ 6.6& 9.8 $\pm$ 1.5& 9.8 $\pm$ 1\\\hline 
		200 & 10.2 $\pm$ 2.7& 10.2 $\pm$ 2.7& 10.2 $\pm$ 2.7& 10 $\pm$ 1& 10 $\pm$ 1& 10 $\pm$ 1& 10 $\pm$ 0.6& 10 $\pm$ 0.6& 10 $\pm$ 0.6\\ 
		& 10.9 $\pm$ 4.3& 10.4 $\pm$ 3.2& 10.2 $\pm$ 2.7& 11.5 $\pm$ 5.2& 10.1 $\pm$ 1.6& 10 $\pm$ 1.2& 12.2 $\pm$ 6.5& 10 $\pm$ 1.2& 10 $\pm$ 0.6\\\hline 	\end{tabular}
	\caption{ $k_0=10$ scaled outliers. {\em Top:} $\modt(1/0.5)$. {\em Bottom:} Burr(1,0.5,1/0.5).}
	\label{tab:xi-pos-ha-scl}
\end{table}

\newpage

\begin{table}[H]
	\centering
	\scriptsize
	\begin{tabular}{|c|ccc|ccc|ccc|}
		\hline
		& \multicolumn{3}{c|}{$k$=100} & \multicolumn{3}{c|}{$k$=200} & \multicolumn{3}{c|}{$k$=300}	\\
		$L$ & $k^*=$100 & $k^*=$200 & $k^*=$300 &	$k^*=$100 & $k^*=$200 & $k^*=$300&$k^*=$100 & $k^*=$200 & $k^*=$300\\\hline
		0.005 & 7.2 $\pm$ 2.6& 7.2 $\pm$ 2.6& 7.2 $\pm$ 2.6& 7.6 $\pm$ 2.2& 7.6 $\pm$ 2.2& 7.6 $\pm$ 2.2& 7.8 $\pm$ 2.1& 7.8 $\pm$ 2.1& 7.8 $\pm$ 2.1\\ 
		& 7.5 $\pm$ 6& 6.5 $\pm$ 4& 6.8 $\pm$ 3.3& 9.9 $\pm$ 7.7& 6.3 $\pm$ 3.5& 7 $\pm$ 2.8& 10.8 $\pm$ 8.5& 6.4 $\pm$ 3.4& 7.2 $\pm$ 2.6\\\hline 
		0.05 & 1.9 $\pm$ 3.2& 1.9 $\pm$ 3.2& 1.9 $\pm$ 3.2& 2.1 $\pm$ 3.2& 2.1 $\pm$ 3.2& 2.1 $\pm$ 3.2& 2.3 $\pm$ 3.2& 2.3 $\pm$ 3.2& 2.3 $\pm$ 3.2\\ 
		& 2.7 $\pm$ 6.2& 1.8 $\pm$ 4.2& 1.9 $\pm$ 3.9& 4 $\pm$ 7.3& 1.5 $\pm$ 3.2& 1.9 $\pm$ 3.4& 4.3 $\pm$ 7.6& 1.5 $\pm$ 3.2& 2 $\pm$ 3.2\\\hline 
		1 & 0.4 $\pm$ 2.2& 0.4 $\pm$ 2.2& 0.4 $\pm$ 2.2& 0.3 $\pm$ 1.8& 0.3 $\pm$ 1.8& 0.3 $\pm$ 1.8& 0.3 $\pm$ 2& 0.3 $\pm$ 2& 0.3 $\pm$ 2\\ 
		& 2.2 $\pm$ 4.7& 1.2 $\pm$ 4& 0.6 $\pm$ 3.1& 2.6 $\pm$ 5.3& 0.6 $\pm$ 2.6& 0.4 $\pm$ 2.1& 2.6 $\pm$ 5.4& 0.6 $\pm$ 2.6& 0.4 $\pm$ 2.1\\\hline 
		3 & 5 $\pm$ 3.5& 5 $\pm$ 3.5& 5 $\pm$ 3.5& 4.1 $\pm$ 3.4& 4.1 $\pm$ 3.4& 4.1 $\pm$ 3.4& 3.2 $\pm$ 3.3& 3.2 $\pm$ 3.3& 3.2 $\pm$ 3.3\\ 
		& 6.8 $\pm$ 4.4& 5.4 $\pm$ 4.5& 4.4 $\pm$ 4.1& 6.6 $\pm$ 4.8& 4.5 $\pm$ 3.9& 3.3 $\pm$ 3.8& 6.2 $\pm$ 5& 3.8 $\pm$ 3.8& 2.4 $\pm$ 3.5\\\hline 
		10 & 9.1 $\pm$ 1.8& 9.1 $\pm$ 1.8& 9.1 $\pm$ 1.8& 8.9 $\pm$ 1.6& 8.9 $\pm$ 1.6& 8.9 $\pm$ 1.6& 8.6 $\pm$ 1.8& 8.6 $\pm$ 1.8& 8.6 $\pm$ 1.8\\ 
		& 9.7 $\pm$ 2.7& 9.4 $\pm$ 2.8& 9.1 $\pm$ 2.3& 9.7 $\pm$ 3& 9.1 $\pm$ 2& 8.7 $\pm$ 2& 9.7 $\pm$ 3.3& 8.8 $\pm$ 2.1& 8.4 $\pm$ 2.1\\\hline 
		30 & 9.8 $\pm$ 1.6& 9.8 $\pm$ 1.6& 9.8 $\pm$ 1.6& 9.7 $\pm$ 1& 9.7 $\pm$ 1& 9.7 $\pm$ 1& 9.7 $\pm$ 1& 9.7 $\pm$ 1& 9.7 $\pm$ 1\\ 
		& 10.2 $\pm$ 2.5& 10.1 $\pm$ 2.3& 9.9 $\pm$ 1.9& 10.3 $\pm$ 2.7& 9.8 $\pm$ 1.3& 9.7 $\pm$ 1.2& 10.4 $\pm$ 3& 9.8 $\pm$ 1.3& 9.7 $\pm$ 1.1\\\hline \end{tabular}
\end{table}

\begin{table}[H]
	\centering
	\scriptsize
	\begin{tabular}{|c|ccc|ccc|ccc|}
		\hline
		& \multicolumn{3}{c|}{$k$=200} & \multicolumn{3}{c|}{$k$=400} & \multicolumn{3}{c|}{$k$=600}	\\
		$L$ & $k^*=$200 & $k^*=$400 & $k^*=$600 &	$k^*=$200 & $k^*=$400 & $k^*=$600&$k^*=$200 & $k^*=$400 & $k^*=$600\\\hline
		0.005 & 7.5 $\pm$ 2.1& 7.5 $\pm$ 2.1& 7.5 $\pm$ 2.1& 8.1 $\pm$ 1.8& 8.1 $\pm$ 1.8& 8.1 $\pm$ 1.8& 8.7 $\pm$ 1.6& 8.7 $\pm$ 1.6& 8.7 $\pm$ 1.6\\ 
		& 6.2 $\pm$ 3.9& 6.8 $\pm$ 3& 7.4 $\pm$ 2.5& 6.4 $\pm$ 3.5& 7.2 $\pm$ 2.5& 7.9 $\pm$ 2& 6.8 $\pm$ 3& 7.9 $\pm$ 2.2& 8.5 $\pm$ 1.8\\\hline 
		0.05 & 2.1 $\pm$ 3.1& 2.1 $\pm$ 3.1& 2.1 $\pm$ 3.1& 2.6 $\pm$ 3.2& 2.6 $\pm$ 3.2& 2.6 $\pm$ 3.2& 3.4 $\pm$ 3.4& 3.4 $\pm$ 3.4& 3.4 $\pm$ 3.4\\ 
		& 1.5 $\pm$ 3.4& 1.7 $\pm$ 3.2& 2 $\pm$ 3.2& 1.5 $\pm$ 2.9& 2 $\pm$ 3.1& 2.5 $\pm$ 3.2& 1.7 $\pm$ 3& 2.5 $\pm$ 3.2& 3.3 $\pm$ 3.5\\\hline 
		1 & 0.3 $\pm$ 1.5& 0.3 $\pm$ 1.5& 0.3 $\pm$ 1.5& 0.2 $\pm$ 1.4& 0.2 $\pm$ 1.4& 0.2 $\pm$ 1.4& 0.4 $\pm$ 1.8& 0.4 $\pm$ 1.8& 0.4 $\pm$ 1.8\\ 
		& 1.1 $\pm$ 3.3& 0.6 $\pm$ 2.3& 0.3 $\pm$ 1.6& 0.7 $\pm$ 2.3& 0.3 $\pm$ 1.6& 0.3 $\pm$ 1.8& 0.5 $\pm$ 2.1& 0.4 $\pm$ 1.8& 0.5 $\pm$ 2.2\\\hline 
		3 & 4.2 $\pm$ 3.6& 4.2 $\pm$ 3.6& 4.2 $\pm$ 3.6& 2.3 $\pm$ 3.2& 2.3 $\pm$ 3.2& 2.3 $\pm$ 3.2& 1 $\pm$ 2.6& 1 $\pm$ 2.6& 1 $\pm$ 2.6\\ 
		& 6 $\pm$ 4.2& 5 $\pm$ 4& 3.4 $\pm$ 3.8& 4.6 $\pm$ 4.1& 3.1 $\pm$ 3.6& 1.6 $\pm$ 3.2& 2.7 $\pm$ 3.7& 1.5 $\pm$ 3.1& 0.7 $\pm$ 2.6\\\hline 
		10 & 8.9 $\pm$ 1.7& 8.9 $\pm$ 1.7& 8.9 $\pm$ 1.7& 8.3 $\pm$ 2& 8.3 $\pm$ 2& 8.3 $\pm$ 2& 7.2 $\pm$ 2.7& 7.2 $\pm$ 2.7& 7.2 $\pm$ 2.7\\ 
		& 9.3 $\pm$ 1.9& 9 $\pm$ 1.7& 8.7 $\pm$ 1.9& 8.9 $\pm$ 2& 8.6 $\pm$ 2& 7.9 $\pm$ 2.5& 8.2 $\pm$ 2.5& 7.6 $\pm$ 2.7& 6.5 $\pm$ 3.4\\\hline 
		30 & 9.7 $\pm$ 0.8& 9.7 $\pm$ 0.8& 9.7 $\pm$ 0.8& 9.6 $\pm$ 0.9& 9.6 $\pm$ 0.9& 9.6 $\pm$ 0.9& 9.4 $\pm$ 1.2& 9.4 $\pm$ 1.2& 9.4 $\pm$ 1.2\\ 
		& 9.9 $\pm$ 1.5& 9.8 $\pm$ 1.1& 9.7 $\pm$ 0.9& 9.8 $\pm$ 1.1& 9.7 $\pm$ 0.9& 9.5 $\pm$ 1& 9.6 $\pm$ 1.1& 9.5 $\pm$ 1.1& 9.3 $\pm$ 1.3\\\hline 
	\end{tabular}
	\caption{$k_0=10$ exponentiated outliers. {\em Top:} Beta(1,1/0.5). {\em Bottom:} Reverse Burr(1,0.5,1/0.5).}
	\label{tab:xi-neg-ha-exp}
\end{table}

\newpage

\begin{table}[H]
	\centering
	\scriptsize
	\begin{tabular}{|c|ccc|ccc|ccc|}
		\hline
		& \multicolumn{3}{c|}{$k$=100} & \multicolumn{3}{c|}{$k$=200} & \multicolumn{3}{c|}{$k$=300}	\\
		$C$ & $k^*=$100 & $k^*=$200 & $k^*=$300 &	$k^*=$100 & $k^*=$200 & $k^*=$300&$k^*=$100 & $k^*=$200 & $k^*=$300\\\hline
		0.001 & 9.6 $\pm$ 1.8& 9.6 $\pm$ 1.8& 9.6 $\pm$ 1.8& 9.6 $\pm$ 0.9& 9.6 $\pm$ 0.9& 9.6 $\pm$ 0.9& 9.7 $\pm$ 1& 9.7 $\pm$ 1& 9.7 $\pm$ 1\\ 
		& 9.6 $\pm$ 3.5& 9.5 $\pm$ 2.5& 9.5 $\pm$ 2.1& 9.8 $\pm$ 3.7& 9.3 $\pm$ 1.7& 9.5 $\pm$ 1.3& 10.1 $\pm$ 4.3& 9.4 $\pm$ 1.8& 9.6 $\pm$ 1\\\hline 
		0.01 & 5.4 $\pm$ 3.6& 5.4 $\pm$ 3.6& 5.4 $\pm$ 3.6& 5.8 $\pm$ 3& 5.8 $\pm$ 3& 5.8 $\pm$ 3& 6.2 $\pm$ 2.9& 6.2 $\pm$ 2.9& 6.2 $\pm$ 2.9\\ 
		& 6.4 $\pm$ 7& 4.5 $\pm$ 4.6& 4.9 $\pm$ 4& 8.8 $\pm$ 8.4& 4.3 $\pm$ 4& 5 $\pm$ 3.5& 10 $\pm$ 9.1& 4.4 $\pm$ 3.8& 5.3 $\pm$ 3.4\\\hline 
		1 & 0.4 $\pm$ 2.2& 0.4 $\pm$ 2.2& 0.4 $\pm$ 2.2& 0.3 $\pm$ 1.8& 0.3 $\pm$ 1.8& 0.3 $\pm$ 1.8& 0.3 $\pm$ 2& 0.3 $\pm$ 2& 0.3 $\pm$ 2\\ 
		& 2.2 $\pm$ 4.7& 1.2 $\pm$ 4& 0.6 $\pm$ 3.1& 2.6 $\pm$ 5.3& 0.6 $\pm$ 2.6& 0.4 $\pm$ 2.1& 2.6 $\pm$ 5.4& 0.6 $\pm$ 2.6& 0.4 $\pm$ 2.1\\\hline 
		10 & 9 $\pm$ 1.6& 9 $\pm$ 1.6& 9 $\pm$ 1.6& 8.8 $\pm$ 1.6& 8.8 $\pm$ 1.6& 8.8 $\pm$ 1.6& 8.5 $\pm$ 1.9& 8.5 $\pm$ 1.9& 8.5 $\pm$ 1.9\\ 
		& 9.6 $\pm$ 2.6& 9.2 $\pm$ 2.4& 8.9 $\pm$ 2& 9.6 $\pm$ 3& 8.9 $\pm$ 1.9& 8.5 $\pm$ 2.1& 9.6 $\pm$ 3.5& 8.6 $\pm$ 2.1& 8.2 $\pm$ 2.4\\\hline 
		50 & 10 $\pm$ 1.4& 10 $\pm$ 1.4& 10 $\pm$ 1.4& 9.9 $\pm$ 0.8& 9.9 $\pm$ 0.8& 9.9 $\pm$ 0.8& 9.9 $\pm$ 0.8& 9.9 $\pm$ 0.8& 9.9 $\pm$ 0.8\\ 
		& 10.5 $\pm$ 3.2& 10.3 $\pm$ 2.9& 10.1 $\pm$ 2.4& 10.5 $\pm$ 3.1& 10 $\pm$ 1.4& 9.9 $\pm$ 1& 10.5 $\pm$ 3.2& 10 $\pm$ 1.3& 9.9 $\pm$ 0.8\\\hline 
		200 & 10 $\pm$ 1.2& 10 $\pm$ 1.2& 10 $\pm$ 1.2& 10 $\pm$ 0.7& 10 $\pm$ 0.7& 10 $\pm$ 0.7& 10 $\pm$ 0.7& 10 $\pm$ 0.7& 10 $\pm$ 0.7\\ 
		& 10.4 $\pm$ 2.6& 10.2 $\pm$ 2& 10.1 $\pm$ 1.7& 10.5 $\pm$ 2.6& 10.1 $\pm$ 1.1& 10 $\pm$ 1& 10.5 $\pm$ 2.8& 10 $\pm$ 0.9& 10 $\pm$ 0.7\\\hline  	\end{tabular}
\end{table}

\begin{table}[H]
	\centering
	\scriptsize
	\begin{tabular}{|c|ccc|ccc|ccc|}
		\hline
		& \multicolumn{3}{c|}{$k$=200} & \multicolumn{3}{c|}{$k$=400} & \multicolumn{3}{c|}{$k$=600}	\\
		$C$ & $k^*=$200 & $k^*=$400 & $k^*=$600 &	$k^*=$200 & $k^*=$400 & $k^*=$600&$k^*=$200 & $k^*=$400 & $k^*=$600\\\hline
		0.001 & 9.6 $\pm$ 1.1& 9.6 $\pm$ 1.1& 9.6 $\pm$ 1.1& 9.8 $\pm$ 1.1& 9.8 $\pm$ 1.1& 9.8 $\pm$ 1.1& 10 $\pm$ 1& 10 $\pm$ 1& 10 $\pm$ 1\\ 
		& 9.4 $\pm$ 2.4& 9.5 $\pm$ 1.6& 9.6 $\pm$ 1.3& 9.4 $\pm$ 1.9& 9.6 $\pm$ 1.2& 9.8 $\pm$ 1.1& 9.6 $\pm$ 1.4& 9.8 $\pm$ 1.1& 9.9 $\pm$ 1\\\hline 
		0.01 & 5.8 $\pm$ 3& 5.8 $\pm$ 3& 5.8 $\pm$ 3& 6.6 $\pm$ 2.8& 6.6 $\pm$ 2.8& 6.6 $\pm$ 2.8& 7.4 $\pm$ 2.4& 7.4 $\pm$ 2.4& 7.4 $\pm$ 2.4\\ 
		& 4.2 $\pm$ 4.3& 4.8 $\pm$ 3.6& 5.6 $\pm$ 3.2& 4.4 $\pm$ 4.1& 5.4 $\pm$ 3.3& 6.3 $\pm$ 3& 4.8 $\pm$ 3.7& 6.2 $\pm$ 3.1& 7.1 $\pm$ 2.7\\\hline 
		1 & 0.3 $\pm$ 1.5& 0.3 $\pm$ 1.5& 0.3 $\pm$ 1.5& 0.2 $\pm$ 1.4& 0.2 $\pm$ 1.4& 0.2 $\pm$ 1.4& 0.4 $\pm$ 1.8& 0.4 $\pm$ 1.8& 0.4 $\pm$ 1.8\\ 
		& 1.1 $\pm$ 3.3& 0.6 $\pm$ 2.3& 0.3 $\pm$ 1.6& 0.7 $\pm$ 2.3& 0.3 $\pm$ 1.6& 0.3 $\pm$ 1.8& 0.5 $\pm$ 2.1& 0.4 $\pm$ 1.8& 0.5 $\pm$ 2.2\\\hline 
		10 & 8.8 $\pm$ 1.8& 8.8 $\pm$ 1.8& 8.8 $\pm$ 1.8& 8.2 $\pm$ 2.2& 8.2 $\pm$ 2.2& 8.2 $\pm$ 2.2& 6.9 $\pm$ 3& 6.9 $\pm$ 3& 6.9 $\pm$ 3\\ 
		& 9.4 $\pm$ 2.5& 9 $\pm$ 2.2& 8.7 $\pm$ 2.1& 8.9 $\pm$ 2.5& 8.5 $\pm$ 2.2& 7.8 $\pm$ 2.8& 8 $\pm$ 2.9& 7.4 $\pm$ 3& 6 $\pm$ 3.7\\\hline 
		50 & 9.9 $\pm$ 0.9& 9.9 $\pm$ 0.9& 9.9 $\pm$ 0.9& 9.8 $\pm$ 0.9& 9.8 $\pm$ 0.9& 9.8 $\pm$ 0.9& 9.7 $\pm$ 1.2& 9.7 $\pm$ 1.2& 9.7 $\pm$ 1.2\\ 
		& 10.1 $\pm$ 2& 10 $\pm$ 1.6& 9.9 $\pm$ 1.2& 10 $\pm$ 1.4& 9.8 $\pm$ 0.9& 9.8 $\pm$ 1& 9.8 $\pm$ 1& 9.8 $\pm$ 1& 9.7 $\pm$ 1.3\\\hline 
		200 & 10 $\pm$ 0.6& 10 $\pm$ 0.6& 10 $\pm$ 0.6& 10 $\pm$ 0.6& 10 $\pm$ 0.6& 10 $\pm$ 0.6& 10 $\pm$ 0.8& 10 $\pm$ 0.8& 10 $\pm$ 0.8\\ 
		& 10.1 $\pm$ 1.5& 10 $\pm$ 0.9& 10 $\pm$ 0.6& 10.1 $\pm$ 1.1& 10 $\pm$ 0.7& 10 $\pm$ 0.6& 10 $\pm$ 0.7& 10 $\pm$ 0.7& 10 $\pm$ 0.8\\\hline 
	\end{tabular}
	\caption{$k_0=10$ scaled outliers. {\em Top:} Beta(1,1/0.5). {\em Bottom:} Reverse 
		Burr(1,0.5,1/0.5).}
	\label{tab:xi-neg-ha-scl}
\end{table}

\newpage
\begin{table}[H]
	\centering
	\scriptsize
	\begin{tabular}{|c|ccc|ccc|ccc|}
		\hline
		& \multicolumn{3}{c|}{$k$=100} & \multicolumn{3}{c|}{$k$=150} & \multicolumn{3}{c|}{$k$=200}	\\
		$L$ & $k^*=$100 & $k^*=$150 & $k^*=$200 &	$k^*=$100 & $k^*=$150 & $k^*=$200&$k^*=$100 & $k^*=$150 & $k^*=$200\\\hline
		0.005 & 7.4 $\pm$ 3.4& 7.4 $\pm$ 3.4& 7.4 $\pm$ 3.4& 7.6 $\pm$ 2.5& 7.6 $\pm$ 2.5& 7.6 $\pm$ 2.5& 7.7 $\pm$ 2.3& 7.7 $\pm$ 2.3& 7.7 $\pm$ 2.3\\ 
		& 6.4 $\pm$ 5.8& 6.9 $\pm$ 4.6& 7.3 $\pm$ 4& 6.6 $\pm$ 6.2& 6.8 $\pm$ 3.9& 7.3 $\pm$ 2.9& 6.8 $\pm$ 6.6& 6.9 $\pm$ 3.9& 7.5 $\pm$ 2.8\\\hline 
		0.05 & 2.2 $\pm$ 3.6& 2.2 $\pm$ 3.6& 2.2 $\pm$ 3.6& 2.3 $\pm$ 3.2& 2.3 $\pm$ 3.2& 2.3 $\pm$ 3.2& 2.5 $\pm$ 3.2& 2.5 $\pm$ 3.2& 2.5 $\pm$ 3.2\\ 
		& 2.7 $\pm$ 6.3& 2.5 $\pm$ 5& 2.4 $\pm$ 4.3& 2.6 $\pm$ 6.1& 2.4 $\pm$ 4.3& 2.4 $\pm$ 3.7& 3.1 $\pm$ 6.8& 2.3 $\pm$ 3.9& 2.5 $\pm$ 3.6\\\hline 
		1 & 0.3 $\pm$ 2.6& 0.3 $\pm$ 2.6& 0.3 $\pm$ 2.6& 0.2 $\pm$ 1.6& 0.2 $\pm$ 1.6& 0.2 $\pm$ 1.6& 0.2 $\pm$ 1.3& 0.2 $\pm$ 1.3& 0.2 $\pm$ 1.3\\ 
		& 1.7 $\pm$ 5.5& 0.8 $\pm$ 4.1& 0.5 $\pm$ 3.1& 2.2 $\pm$ 6& 0.6 $\pm$ 3.2& 0.3 $\pm$ 2.5& 2.8 $\pm$ 6.6& 0.5 $\pm$ 3.1& 0.3 $\pm$ 2.2\\\hline 
		3 & 2.6 $\pm$ 4.5& 2.6 $\pm$ 4.5& 2.6 $\pm$ 4.5& 1.9 $\pm$ 3.3& 1.9 $\pm$ 3.3& 1.9 $\pm$ 3.3& 1.5 $\pm$ 3.1& 1.5 $\pm$ 3.1& 1.5 $\pm$ 3.1\\ 
		& 5.5 $\pm$ 6.1& 3.5 $\pm$ 5.4& 2.9 $\pm$ 5& 5.9 $\pm$ 6.3& 3 $\pm$ 4.7& 2.1 $\pm$ 3.8& 6.5 $\pm$ 7& 2.8 $\pm$ 4.7& 1.8 $\pm$ 3.5\\\hline 
		10 & 8.7 $\pm$ 2.7& 8.7 $\pm$ 2.7& 8.7 $\pm$ 2.7& 8.4 $\pm$ 2.2& 8.4 $\pm$ 2.2& 8.4 $\pm$ 2.2& 8.2 $\pm$ 2.2& 8.2 $\pm$ 2.2& 8.2 $\pm$ 2.2\\ 
		& 9.8 $\pm$ 4& 9.2 $\pm$ 3.7& 8.9 $\pm$ 3.2& 9.8 $\pm$ 4& 8.8 $\pm$ 3& 8.5 $\pm$ 2.6& 10.1 $\pm$ 4.6& 8.7 $\pm$ 2.9& 8.3 $\pm$ 2.5\\\hline 
		30 & 9.8 $\pm$ 2.1& 9.8 $\pm$ 2.1& 9.8 $\pm$ 2.1& 9.7 $\pm$ 1& 9.7 $\pm$ 1& 9.7 $\pm$ 1& 9.6 $\pm$ 1.1& 9.6 $\pm$ 1.1& 9.6 $\pm$ 1.1\\ 
		& 10.4 $\pm$ 3.5& 10 $\pm$ 2.9& 9.9 $\pm$ 2.5& 10.5 $\pm$ 3.8& 9.9 $\pm$ 2.3& 9.7 $\pm$ 1.9& 10.8 $\pm$ 4.4& 9.8 $\pm$ 1.9& 9.7 $\pm$ 1.2\\\hline \end{tabular}
\end{table}

\begin{table}[H]
	\centering
	\scriptsize
	\begin{tabular}{|c|ccc|ccc|ccc|}
		\hline
		& \multicolumn{3}{c|}{$k$=100} & \multicolumn{3}{c|}{$k$=200} & \multicolumn{3}{c|}{$k$=300}	\\
		$L$ & $k^*=$100 & $k^*=$200 & $k^*=$300 &	$k^*=$100 & $k^*=$200 & $k^*=$300&$k^*=$100 & $k^*=$200 & $k^*=$300\\\hline
		0.005 & 7.8 $\pm$ 2.5& 7.8 $\pm$ 2.5& 7.8 $\pm$ 2.5& 8.4 $\pm$ 1.8& 8.4 $\pm$ 1.8& 8.4 $\pm$ 1.8& 8.8 $\pm$ 1.7& 8.8 $\pm$ 1.7& 8.8 $\pm$ 1.7\\ 
		& 6.8 $\pm$ 6.5& 6.4 $\pm$ 4.7& 6.9 $\pm$ 4.1& 8.8 $\pm$ 7.6& 6.2 $\pm$ 3.9& 6.8 $\pm$ 3.1& 9.7 $\pm$ 8.3& 6.3 $\pm$ 3.9& 7 $\pm$ 2.8\\\hline 
		0.05 & 2.6 $\pm$ 3.3& 2.6 $\pm$ 3.3& 2.6 $\pm$ 3.3& 3.3 $\pm$ 3.3& 3.3 $\pm$ 3.3& 3.3 $\pm$ 3.3& 3.9 $\pm$ 3.3& 3.9 $\pm$ 3.3& 3.9 $\pm$ 3.3\\ 
		& 2.5 $\pm$ 6.2& 1.9 $\pm$ 4.6& 1.8 $\pm$ 4& 3.3 $\pm$ 7.1& 1.7 $\pm$ 3.5& 1.9 $\pm$ 3.4& 3.7 $\pm$ 7.8& 1.7 $\pm$ 3.4& 2 $\pm$ 3.1\\\hline 
		1 & 0.3 $\pm$ 2.2& 0.3 $\pm$ 2.2& 0.3 $\pm$ 2.2& 0.3 $\pm$ 1.7& 0.3 $\pm$ 1.7& 0.3 $\pm$ 1.7& 0.4 $\pm$ 1.9& 0.4 $\pm$ 1.9& 0.4 $\pm$ 1.9\\ 
		& 3.2 $\pm$ 6.4& 1.3 $\pm$ 4.7& 0.7 $\pm$ 3.7& 4.4 $\pm$ 7.5& 0.9 $\pm$ 3.2& 0.4 $\pm$ 2.4& 4.8 $\pm$ 8.2& 0.7 $\pm$ 2.8& 0.3 $\pm$ 1.9\\\hline 
		3 & 1 $\pm$ 3& 1 $\pm$ 3& 1 $\pm$ 3& 0.4 $\pm$ 2.3& 0.4 $\pm$ 2.3& 0.4 $\pm$ 2.3& 0.3 $\pm$ 2.3& 0.3 $\pm$ 2.3& 0.3 $\pm$ 2.3\\ 
		& 7.8 $\pm$ 5.2& 5.7 $\pm$ 5.1& 4.7 $\pm$ 4.8& 8.4 $\pm$ 6.1& 5.2 $\pm$ 4.4& 3.7 $\pm$ 4.2& 8.7 $\pm$ 6.9& 4.4 $\pm$ 4.3& 2.7 $\pm$ 3.7\\\hline 
		10 & 7.9 $\pm$ 3.2& 7.9 $\pm$ 3.2& 7.9 $\pm$ 3.2& 6.5 $\pm$ 3.5& 6.5 $\pm$ 3.5& 6.5 $\pm$ 3.5& 5 $\pm$ 3.9& 5 $\pm$ 3.9& 5 $\pm$ 3.9\\ 
		& 10.2 $\pm$ 3.8& 9.5 $\pm$ 3.2& 9.2 $\pm$ 3& 10.7 $\pm$ 4.6& 9.2 $\pm$ 2.6& 8.9 $\pm$ 2.3& 10.8 $\pm$ 5& 8.9 $\pm$ 2.3& 8.5 $\pm$ 2.1\\\hline 
		30 & 9.6 $\pm$ 1.6& 9.6 $\pm$ 1.6& 9.6 $\pm$ 1.6& 9.4 $\pm$ 1.3& 9.4 $\pm$ 1.3& 9.4 $\pm$ 1.3& 9.1 $\pm$ 1.6& 9.1 $\pm$ 1.6& 9.1 $\pm$ 1.6\\ 
		& 10.6 $\pm$ 3.4& 10.2 $\pm$ 2.9& 10 $\pm$ 2.2& 11.1 $\pm$ 4.4& 10 $\pm$ 1.9& 9.8 $\pm$ 1.3& 11.3 $\pm$ 5& 9.9 $\pm$ 1.7& 9.7 $\pm$ 1.3\\\hline \end{tabular}
\end{table}

\begin{table}[H]
	\centering
	\scriptsize
	\begin{tabular}{|c|ccc|ccc|ccc|}
		\hline
		& \multicolumn{3}{c|}{$k$=100} & \multicolumn{3}{c|}{$k$=150} & \multicolumn{3}{c|}{$k$=200}	\\
		$L$ & $k^*=$100 & $k^*=$150 & $k^*=$200 &	$k^*=$100 & $k^*=$150 & $k^*=$200&$k^*=$100 & $k^*=$150 & $k^*=$200\\\hline
		0.005 & 7.7 $\pm$ 2.5& 7.7 $\pm$ 2.5& 7.7 $\pm$ 2.5& 8 $\pm$ 2& 8 $\pm$ 2& 8 $\pm$ 2& 8.2 $\pm$ 1.8& 8.2 $\pm$ 1.8& 8.2 $\pm$ 1.8\\ 
		& 6.1 $\pm$ 5.7& 6.4 $\pm$ 4.8& 6.8 $\pm$ 4.1& 6.7 $\pm$ 6.2& 6.2 $\pm$ 4.4& 6.7 $\pm$ 3.4& 7 $\pm$ 6.4& 6.2 $\pm$ 4.2& 6.9 $\pm$ 3.1\\\hline 
		0.05 & 2.4 $\pm$ 3.5& 2.4 $\pm$ 3.5& 2.4 $\pm$ 3.5& 2.6 $\pm$ 3.1& 2.6 $\pm$ 3.1& 2.6 $\pm$ 3.1& 2.8 $\pm$ 3.2& 2.8 $\pm$ 3.2& 2.8 $\pm$ 3.2\\ 
		& 2.2 $\pm$ 5.8& 2 $\pm$ 4.9& 2.1 $\pm$ 4.6& 2.4 $\pm$ 5.8& 1.8 $\pm$ 4.3& 2 $\pm$ 3.8& 2.8 $\pm$ 6.7& 1.8 $\pm$ 4.1& 2 $\pm$ 3.5\\\hline 
		1 & 0.4 $\pm$ 2.8& 0.4 $\pm$ 2.8& 0.4 $\pm$ 2.8& 0.3 $\pm$ 1.6& 0.3 $\pm$ 1.6& 0.3 $\pm$ 1.6& 0.3 $\pm$ 1.8& 0.3 $\pm$ 1.8& 0.3 $\pm$ 1.8\\ 
		& 2.3 $\pm$ 5.9& 1.1 $\pm$ 4.7& 0.7 $\pm$ 4.1& 2.8 $\pm$ 6.3& 0.9 $\pm$ 3.9& 0.6 $\pm$ 3.4& 3.3 $\pm$ 6.9& 0.9 $\pm$ 3.8& 0.5 $\pm$ 2.6\\\hline 
		3 & 1.4 $\pm$ 3.8& 1.4 $\pm$ 3.8& 1.4 $\pm$ 3.8& 0.9 $\pm$ 2.9& 0.9 $\pm$ 2.9& 0.9 $\pm$ 2.9& 0.7 $\pm$ 2.7& 0.7 $\pm$ 2.7& 0.7 $\pm$ 2.7\\ 
		& 6.6 $\pm$ 6& 4.7 $\pm$ 5.6& 3.4 $\pm$ 5.2& 6.9 $\pm$ 6.1& 4.3 $\pm$ 5.1& 2.8 $\pm$ 4.5& 7.2 $\pm$ 6.6& 4 $\pm$ 5& 2.4 $\pm$ 4.1\\\hline 
		10 & 8.1 $\pm$ 2.9& 8.1 $\pm$ 2.9& 8.1 $\pm$ 2.9& 7.5 $\pm$ 2.9& 7.5 $\pm$ 2.9& 7.5 $\pm$ 2.9& 7.1 $\pm$ 3.3& 7.1 $\pm$ 3.3& 7.1 $\pm$ 3.3\\ 
		& 9.9 $\pm$ 3.8& 9.3 $\pm$ 3.4& 8.9 $\pm$ 3.3& 10 $\pm$ 3.7& 9 $\pm$ 2.6& 8.6 $\pm$ 2.6& 10.2 $\pm$ 4.4& 8.9 $\pm$ 2.6& 8.4 $\pm$ 2.7\\\hline 
		30 & 9.7 $\pm$ 2& 9.7 $\pm$ 2& 9.7 $\pm$ 2& 9.5 $\pm$ 1& 9.5 $\pm$ 1& 9.5 $\pm$ 1& 9.4 $\pm$ 1.1& 9.4 $\pm$ 1.1& 9.4 $\pm$ 1.1\\ 
		& 10.6 $\pm$ 3.8& 10.2 $\pm$ 3.3& 10 $\pm$ 2.8& 10.6 $\pm$ 3.6& 10 $\pm$ 2.3& 9.8 $\pm$ 2& 10.7 $\pm$ 4& 10 $\pm$ 2.3& 9.8 $\pm$ 1.7\\\hline  \end{tabular}
	\caption{$k_0=10$ exponentiated outliers. {\em Top:} Lognormal(0,1). {\em Middle:} $\modn(0,1)$. {\em Bottom:} Weibull(1,1).}
	\label{tab:xi-zero-ha-exp}
\end{table}


\begin{table}[H]
	\centering
	\scriptsize
	\begin{tabular}{|c|ccc|ccc|ccc|}
		\hline
		& \multicolumn{3}{c|}{$k$=100} & \multicolumn{3}{c|}{$k$=150} & \multicolumn{3}{c|}{$k$=200}	\\
		$C$ & $k^*=$100 & $k^*=$150 & $k^*=$200 &	$k^*=$100 & $k^*=$150 & $k^*=$200&$k^*=$100 & $k^*=$150 & $k^*=$200\\\hline
		0.001 & 9.7 $\pm$ 2.5& 9.7 $\pm$ 2.5& 9.7 $\pm$ 2.5& 9.6 $\pm$ 1.1& 9.6 $\pm$ 1.1& 9.6 $\pm$ 1.1& 9.6 $\pm$ 0.9& 9.6 $\pm$ 0.9& 9.6 $\pm$ 0.9\\ 
		& 9.7 $\pm$ 4.7& 9.7 $\pm$ 3.8& 9.7 $\pm$ 3& 9.7 $\pm$ 4.7& 9.6 $\pm$ 2.8& 9.6 $\pm$ 1.9& 10 $\pm$ 5.3& 9.5 $\pm$ 2.5& 9.6 $\pm$ 1.4\\\hline 
		0.01 & 5.3 $\pm$ 4.2& 5.3 $\pm$ 4.2& 5.3 $\pm$ 4.2& 5.5 $\pm$ 3.5& 5.5 $\pm$ 3.5& 5.5 $\pm$ 3.5& 5.7 $\pm$ 3.3& 5.7 $\pm$ 3.3& 5.7 $\pm$ 3.3\\ 
		& 4.6 $\pm$ 6.4& 5 $\pm$ 5.3& 5.3 $\pm$ 4.7& 4.7 $\pm$ 6.4& 4.9 $\pm$ 4.8& 5.4 $\pm$ 4.1& 4.9 $\pm$ 6.9& 5 $\pm$ 4.5& 5.6 $\pm$ 3.7\\\hline 
		1 & 0.3 $\pm$ 2.6& 0.3 $\pm$ 2.6& 0.3 $\pm$ 2.6& 0.2 $\pm$ 1.6& 0.2 $\pm$ 1.6& 0.2 $\pm$ 1.6& 0.2 $\pm$ 1.3& 0.2 $\pm$ 1.3& 0.2 $\pm$ 1.3\\ 
		& 1.7 $\pm$ 5.5& 0.8 $\pm$ 4.1& 0.5 $\pm$ 3.1& 2.2 $\pm$ 6& 0.6 $\pm$ 3.2& 0.3 $\pm$ 2.5& 2.8 $\pm$ 6.6& 0.5 $\pm$ 3.1& 0.3 $\pm$ 2.2\\\hline 
		10 & 7.1 $\pm$ 4.6& 7.1 $\pm$ 4.6& 7.1 $\pm$ 4.6& 6.4 $\pm$ 4.2& 6.4 $\pm$ 4.2& 6.4 $\pm$ 4.2& 5.8 $\pm$ 4.4& 5.8 $\pm$ 4.4& 5.8 $\pm$ 4.4\\ 
		& 9 $\pm$ 4.6& 7.7 $\pm$ 4.7& 7.2 $\pm$ 4.7& 9.1 $\pm$ 4.7& 7.2 $\pm$ 4.5& 6.6 $\pm$ 4.4& 9.4 $\pm$ 5.4& 6.9 $\pm$ 4.5& 6 $\pm$ 4.4\\\hline 
		50 & 9.9 $\pm$ 2& 9.9 $\pm$ 2& 9.9 $\pm$ 2& 9.7 $\pm$ 0.9& 9.7 $\pm$ 0.9& 9.7 $\pm$ 0.9& 9.7 $\pm$ 1& 9.7 $\pm$ 1& 9.7 $\pm$ 1\\ 
		& 10.4 $\pm$ 3.5& 10.2 $\pm$ 3& 10 $\pm$ 2.3& 10.5 $\pm$ 3.4& 9.9 $\pm$ 2.1& 9.8 $\pm$ 1.5& 10.9 $\pm$ 4.4& 9.9 $\pm$ 2.1& 9.7 $\pm$ 1.4\\\hline 
		200 & 10 $\pm$ 1& 10 $\pm$ 1& 10 $\pm$ 1& 10 $\pm$ 0.5& 10 $\pm$ 0.5& 10 $\pm$ 0.5& 10 $\pm$ 0.5& 10 $\pm$ 0.5& 10 $\pm$ 0.5\\ 
		& 10.5 $\pm$ 3.1& 10.2 $\pm$ 2.1& 10.1 $\pm$ 1.5& 10.6 $\pm$ 3.2& 10.1 $\pm$ 1.5& 10 $\pm$ 1.1& 10.9 $\pm$ 4& 10.1 $\pm$ 1.3& 10 $\pm$ 1\\\hline 	\end{tabular}
\end{table}

\begin{table}[H]
	\centering
	\scriptsize
	\begin{tabular}{|c|ccc|ccc|ccc|}
		\hline
		& \multicolumn{3}{c|}{$k$=100} & \multicolumn{3}{c|}{$k$=200} & \multicolumn{3}{c|}{$k$=300}	\\
		$C$ & $k^*=$100 & $k^*=$200 & $k^*=$300 &	$k^*=$100 & $k^*=$200 & $k^*=$300&$k^*=$100 & $k^*=$200 & $k^*=$300\\\hline
		0.001 & 9.7 $\pm$ 1& 9.7 $\pm$ 1& 9.7 $\pm$ 1& 9.8 $\pm$ 0.6& 9.8 $\pm$ 0.6& 9.8 $\pm$ 0.6& 9.9 $\pm$ 0.6& 9.9 $\pm$ 0.6& 9.9 $\pm$ 0.6\\ 
		& 9.5 $\pm$ 4.2& 9.5 $\pm$ 3& 9.5 $\pm$ 2.3& 10.5 $\pm$ 5.2& 9.3 $\pm$ 2.3& 9.5 $\pm$ 1.6& 11 $\pm$ 6& 9.4 $\pm$ 2.1& 9.5 $\pm$ 1.1\\\hline 
		0.01 & 6.3 $\pm$ 3.2& 6.3 $\pm$ 3.2& 6.3 $\pm$ 3.2& 7.1 $\pm$ 2.5& 7.1 $\pm$ 2.5& 7.1 $\pm$ 2.5& 7.6 $\pm$ 2.2& 7.6 $\pm$ 2.2& 7.6 $\pm$ 2.2\\ 
		& 4.9 $\pm$ 6.8& 4.5 $\pm$ 5.2& 4.9 $\pm$ 4.7& 6.3 $\pm$ 7.9& 4.2 $\pm$ 4.5& 5 $\pm$ 3.9& 7.3 $\pm$ 8.5& 4.4 $\pm$ 4.3& 5.2 $\pm$ 3.5\\\hline 
		1 & 0.3 $\pm$ 2.2& 0.3 $\pm$ 2.2& 0.3 $\pm$ 2.2& 0.3 $\pm$ 1.7& 0.3 $\pm$ 1.7& 0.3 $\pm$ 1.7& 0.4 $\pm$ 1.9& 0.4 $\pm$ 1.9& 0.4 $\pm$ 1.9\\ 
		& 3.2 $\pm$ 6.4& 1.3 $\pm$ 4.7& 0.7 $\pm$ 3.7& 4.4 $\pm$ 7.5& 0.9 $\pm$ 3.2& 0.4 $\pm$ 2.4& 4.8 $\pm$ 8.2& 0.7 $\pm$ 2.8& 0.3 $\pm$ 1.9\\\hline 
		10 & 6.8 $\pm$ 3.9& 6.8 $\pm$ 3.9& 6.8 $\pm$ 3.9& 4.6 $\pm$ 4.4& 4.6 $\pm$ 4.4& 4.6 $\pm$ 4.4& 2.8 $\pm$ 4.2& 2.8 $\pm$ 4.2& 2.8 $\pm$ 4.2\\ 
		& 10 $\pm$ 3.2& 9.3 $\pm$ 2.9& 9 $\pm$ 2.9& 10.3 $\pm$ 3.9& 9 $\pm$ 2.3& 8.5 $\pm$ 2.6& 10.6 $\pm$ 4.7& 8.7 $\pm$ 2.4& 8 $\pm$ 2.8\\\hline 
		50 & 9.8 $\pm$ 1.6& 9.8 $\pm$ 1.6& 9.8 $\pm$ 1.6& 9.6 $\pm$ 1& 9.6 $\pm$ 1& 9.6 $\pm$ 1& 9.4 $\pm$ 1.4& 9.4 $\pm$ 1.4& 9.4 $\pm$ 1.4\\ 
		& 10.7 $\pm$ 3.7& 10.3 $\pm$ 2.9& 10.1 $\pm$ 2.5& 11.2 $\pm$ 4.4& 10.1 $\pm$ 2.1& 9.9 $\pm$ 1.5& 11.5 $\pm$ 5.2& 10 $\pm$ 1.9& 9.8 $\pm$ 1.2\\\hline 
		200 & 10 $\pm$ 1.2& 10 $\pm$ 1.2& 10 $\pm$ 1.2& 9.9 $\pm$ 0.7& 9.9 $\pm$ 0.7& 9.9 $\pm$ 0.7& 9.9 $\pm$ 0.8& 9.9 $\pm$ 0.8& 9.9 $\pm$ 0.8\\ 
		& 10.9 $\pm$ 3.7& 10.3 $\pm$ 2.6& 10.2 $\pm$ 2.2& 11.2 $\pm$ 4.1& 10.2 $\pm$ 1.8& 10.1 $\pm$ 1.3& 11.5 $\pm$ 4.8& 10.1 $\pm$ 1.5& 10 $\pm$ 1\\\hline 	\end{tabular}
\end{table}

\begin{table}[H]
	\centering
	\scriptsize
	\begin{tabular}{|c|ccc|ccc|ccc|}
		\hline
		& \multicolumn{3}{c|}{$k$=100} & \multicolumn{3}{c|}{$k$=150} & \multicolumn{3}{c|}{$k$=200}	\\
		$C$ & $k^*=$100 & $k^*=$150 & $k^*=$200 &	$k^*=$100 & $k^*=$150 & $k^*=$200&$k^*=$100 & $k^*=$150 & $k^*=$200\\\hline
		0.001 & 9.7 $\pm$ 1.7& 9.7 $\pm$ 1.7& 9.7 $\pm$ 1.7& 9.8 $\pm$ 1.1& 9.8 $\pm$ 1.1& 9.8 $\pm$ 1.1& 9.8 $\pm$ 1& 9.8 $\pm$ 1& 9.8 $\pm$ 1\\ 
		& 9.4 $\pm$ 4.1& 9.5 $\pm$ 3.5& 9.6 $\pm$ 2.8& 9.5 $\pm$ 4.2& 9.4 $\pm$ 2.6& 9.6 $\pm$ 2& 9.8 $\pm$ 4.5& 9.4 $\pm$ 2.6& 9.5 $\pm$ 1.6\\\hline 
		0.01 & 5.9 $\pm$ 3.5& 5.9 $\pm$ 3.5& 5.9 $\pm$ 3.5& 6.3 $\pm$ 2.8& 6.3 $\pm$ 2.8& 6.3 $\pm$ 2.8& 6.7 $\pm$ 2.7& 6.7 $\pm$ 2.7& 6.7 $\pm$ 2.7\\ 
		& 4.3 $\pm$ 6.3& 4.4 $\pm$ 5.3& 4.8 $\pm$ 4.8& 4.5 $\pm$ 6.4& 4.3 $\pm$ 4.7& 4.8 $\pm$ 4.1& 4.8 $\pm$ 6.7& 4.3 $\pm$ 4.6& 4.9 $\pm$ 3.9\\\hline 
		1 & 0.4 $\pm$ 2.8& 0.4 $\pm$ 2.8& 0.4 $\pm$ 2.8& 0.3 $\pm$ 1.6& 0.3 $\pm$ 1.6& 0.3 $\pm$ 1.6& 0.3 $\pm$ 1.8& 0.3 $\pm$ 1.8& 0.3 $\pm$ 1.8\\ 
		& 2.3 $\pm$ 5.9& 1.1 $\pm$ 4.7& 0.7 $\pm$ 4.1& 2.8 $\pm$ 6.3& 0.9 $\pm$ 3.9& 0.6 $\pm$ 3.4& 3.3 $\pm$ 6.9& 0.9 $\pm$ 3.8& 0.5 $\pm$ 2.6\\\hline 
		10 & 6.6 $\pm$ 4.2& 6.6 $\pm$ 4.2& 6.6 $\pm$ 4.2& 5.6 $\pm$ 4.4& 5.6 $\pm$ 4.4& 5.6 $\pm$ 4.4& 4.6 $\pm$ 4.5& 4.6 $\pm$ 4.5& 4.6 $\pm$ 4.5\\ 
		& 9.6 $\pm$ 3.5& 8.8 $\pm$ 3.4& 8.2 $\pm$ 3.6& 9.7 $\pm$ 3.8& 8.6 $\pm$ 3.2& 7.8 $\pm$ 3.6& 9.9 $\pm$ 4.3& 8.4 $\pm$ 3.3& 7.4 $\pm$ 3.8\\\hline 
		50 & 9.8 $\pm$ 1.4& 9.8 $\pm$ 1.4& 9.8 $\pm$ 1.4& 9.7 $\pm$ 1.2& 9.7 $\pm$ 1.2& 9.7 $\pm$ 1.2& 9.6 $\pm$ 1.4& 9.6 $\pm$ 1.4& 9.6 $\pm$ 1.4\\ 
		& 10.5 $\pm$ 3.4& 10.2 $\pm$ 2.7& 10 $\pm$ 2.3& 10.6 $\pm$ 3.3& 10.1 $\pm$ 2.2& 9.9 $\pm$ 1.8& 10.8 $\pm$ 4& 10 $\pm$ 2.1& 9.8 $\pm$ 1.5\\\hline 
		200 & 10 $\pm$ 1.2& 10 $\pm$ 1.2& 10 $\pm$ 1.2& 10 $\pm$ 0.6& 10 $\pm$ 0.6& 10 $\pm$ 0.6& 10 $\pm$ 0.6& 10 $\pm$ 0.6& 10 $\pm$ 0.6\\ 
		& 10.7 $\pm$ 3.6& 10.4 $\pm$ 2.7& 10.2 $\pm$ 2.3& 10.8 $\pm$ 3.5& 10.2 $\pm$ 2.1& 10.1 $\pm$ 1.7& 11 $\pm$ 3.9& 10.2 $\pm$ 1.8& 10.1 $\pm$ 1.3\\\hline	\end{tabular}
	\caption{$k_0=10$ scaled outliers. {\em Top:} Lognormal(0,1). {\em Middle:} $\modn(0,1)$. {\em Bottom:} Weibull(1,1).}
	\label{tab:xi-zero-ha-scl}
\end{table}

\subsection*{A.2. Proofs of the Theorems}

\vspace{2mm}

{\bf Proof of Theorem 2.1} \\

\noindent Considering $X_{i,n}\stackrel{d}{=} U(Y_{i,n})$, where  $Y_{1,n}\leq \cdots \leq Y_{n,n}$ denote the order statistics of an i.i.d. sample from the standard Pareto distribution, i.e. setting $\ell=1$ and $\xi=1$ in \eqref{Patype}, then 
\begin{equation}
\label{e:xi-trim-u}
H_{k_0,k}=\frac{k_0}{k-k_0}\log \frac{U(Y_{n-k_0,n})}{U(Y_{n-k,n})}+\frac{1}{k-k_0}\sum_{i=k_0}^{k-1} \log \frac{U(Y_{n-i,n})}{U(Y_{n-k,n})}.
\end{equation}
Based on \eqref{e:bounds}	we have
\begin{equation}
\label{e:log-u}
\frac{\log U(tx)-\log U(t)}{q_0(t)}=\frac{x^{\xi_{-}}-1}{\xi_{-}}+Q_0(t)\Phi_{\xi_{-}, \rho}(x)+Q_0(t)R(t,x)
\end{equation}
where $\xi_-=\min (\xi,0)$ and the remainder $|R(t,x)| \leq \epsilon x^{\xi_{-}+\rho +\delta}$ for all $t \geq t_0$ and $x>1$.
\\
Since $Y_{n-k,n} \stackrel{\mathbb{P}}{\rightarrow }\infty$,  we use \eqref{e:log-u} with $t=Y_{n-k,n}$ and $x=Y_{n-i,n}/Y_{n-k,n}$, so that with probability tending to 1
\begin{align*}
\frac{(k-k_0)H_{k_0,k}}{k q_0(Y_{n-k,n})}&=\underbrace{\frac{k_0}{k}\Psi_{\xi_{-}}\Big(\frac{Y_{n-k_0,n}}{Y_{n-k,n}}\Big)+ \frac{1}{k}\sum_{i=k_0}^{k-1}\Psi_{\xi_{-}}\Big(\frac{Y_{n-i,n}}{Y_{n-k,n}}\Big)}_{A_{k_0,k}(n)}\\
&+Q_0(Y_{n-k,n})\underbrace{\Bigg(\frac{k_0}{k}\Phi_{\xi_{-}, \rho}\Big(\frac{Y_{n-k_0,n}}{Y_{n-k,n}}\Big)+\frac{1}{k}\sum_{i=k_0}^{k-1}\Phi_{\xi_{-}, \rho}\Big(\frac{Y_{n-i,n}}{Y_{n-k,n}}\Big)\Bigg)}_{B_{k_0,k}(n)}
\end{align*}
\begin{align}
\label{e:xi-trim-y}
\hspace{40mm}+Q_0(Y_{n-k,n})\underbrace{\Bigg(\frac{k_0}{k}R\Big(Y_{n-k,n},\frac{Y_{n-k_0,n}}{Y_{n-k,n}}\Big)+\frac{1}{k}\sum_{i=k_0}^{k-1}R\Big(Y_{n-k,n},\frac{Y_{n-i,n}}{Y_{n-k,n}}\Big)\Bigg)}_{C_{k_0,k}(n)}.
\end{align}
Since $Y_{n-k,n}/(n/k)\stackrel{\mathbb{P}}{\longrightarrow}1$ and $Q_0$ is regularly varying with index $\rho$, $Q_0(Y_{n-k,n}) / Q_0(n/k)\stackrel{\mathbb{P}}{\longrightarrow}1$. By \eqref{e:q-asy} and Corollary  2.3.5, Theorem 2.3.6 and assumption (3.5.14) in \cite{de2006}
$$\lim_{k\rightarrow \infty}\sqrt{k}Q_0\Big(\frac{n}{k}\Big)=\lim_{k\rightarrow \infty}\sqrt{k}Q\Big(\frac{n}{k}\Big)\Big(\frac{1}{\rho} \mathbbm{1}_{\{\rho<0\}}+ \mathbbm{1}_{\{\rho=0\}}\Big)=\lambda\Big(\frac{1}{\rho}\mathbbm{1}_{\{\rho<0\}}+ \mathbbm{1}_{\{\rho=0\}}\Big).$$
Therefore,
\begin{equation}
\label{e:q0-lim}
\sqrt{k}Q_0(Y_{n-k,n})= \lambda\Big(\frac{1}{\rho}\mathbbm{1}_{\{\rho<0\}}+ \mathbbm{1}_{\{\rho=0\}}\Big)+o_{\mathbb{P}}(1).
\end{equation}
We next consider the asymptotic behaviour of $A_{k_0,k}(n)$, $B_{k_0,k}(n)$ and $C_{k_0,k}(n)$.\hspace{35mm} $\Box$

\begin{proposition}
	\label{prop:a-def}
	$$A_{k_0,k}(n)\stackrel{d}{=}\begin{cases}
	\frac{1}{k}\sum_{i=k_0+1}^{k}Z_i,& \xi\geq 0\\
	\frac{k_0}{k}(\exp(\xi \sum_{j=k_0+1}^{k}Z_j/j)-1)/\xi+ \frac{1}{k}\sum_{i=k_0}^{k-1}(\exp(\xi \sum_{j=i+1}^{k}Z_j/j)-1)/\xi, & \xi<0
	\end{cases}$$
\end{proposition}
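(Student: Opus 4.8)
The plan is to reduce everything to standard Pareto order statistics and then invoke the Rényi representation. Since $X_{i,n}\stackrel{d}{=}U(Y_{i,n})$ with $Y_{1,n}\le\cdots\le Y_{n,n}$ the order statistics of an i.i.d.\ standard Pareto sample, and $A_{k_0,k}(n)$ depends only on the ratios $Y_{n-i,n}/Y_{n-k,n}$, the first step is to establish the joint distributional identity
$$\log\frac{Y_{n-i,n}}{Y_{n-k,n}}\stackrel{d}{=}\sum_{j=i+1}^{k}\frac{Z_j}{j},\qquad i=0,1,\dots,k-1,$$
valid jointly in $i$, with $Z_1,Z_2,\dots$ i.i.d.\ standard exponential. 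This follows because $\log Y_{m,n}$ are distributed as the order statistics $E_{m,n}$ of an i.i.d.\ standard exponential sample, and the Rényi representation expresses the spacings $E_{m,n}-E_{m-1,n}$ as $Z^{*}_m/(n-m+1)$ with $(Z^{*}_m)$ i.i.d.\ standard exponential; summing spacings from $m=n-k+1$ to $m=n-i$ and reindexing $j=n-m+1$ (together with a relabelling of the i.i.d.\ sequence) gives the display.

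In the case $\xi\ge 0$ we have $\xi_-=0$ and $\Psi_{\xi_-}(x)=\log x$, so inserting the representation into the definition of $A_{k_0,k}(n)$ gives
$$A_{k_0,k}(n)\stackrel{d}{=}\frac{k_0}{k}\sum_{j=k_0+1}^{k}\frac{Z_j}{j}+\frac{1}{k}\sum_{i=k_0}^{k-1}\sum_{j=i+1}^{k}\frac{Z_j}{j}.$$
It then remains to interchange the order of summation in the double sum and collect, for each $j\in\{k_0+1,\dots,k\}$, the coefficient of $Z_j$: the inner index $i$ runs over $\{k_0,\dots,j-1\}$, contributing $(j-k_0)/(kj)$, and together with the $k_0/(kj)$ coming from the first sum this telescopes to exactly $1/k$. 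Hence $A_{k_0,k}(n)\stackrel{d}{=}k^{-1}\sum_{j=k_0+1}^{k}Z_j$.

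For $\xi<0$ we have $\xi_-=\xi$ and $\Psi_{\xi_-}(x)=(x^{\xi}-1)/\xi$, so each term $\Psi_{\xi_-}(Y_{n-i,n}/Y_{n-k,n})$ equals $\xi^{-1}\big(\exp(\xi\log(Y_{n-i,n}/Y_{n-k,n}))-1\big)$; applying the joint representation of the first paragraph converts this into $\xi^{-1}\big(\exp(\xi\sum_{j=i+1}^{k}Z_j/j)-1\big)$, and substituting into the definition of $A_{k_0,k}(n)$ produces the stated formula with no further simplification. The only point requiring care is to apply the representation jointly across all indices $i$, so that the right-hand side is a single random variable built from one i.i.d.\ sequence rather than a collection of marginal identities; beyond that, the work is the combinatorial rearrangement in the $\xi\ge 0$ case, which is the step I would write out in full.
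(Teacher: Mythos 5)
Your proposal is correct and follows essentially the same route as the paper: reduce to exponential order statistics via $\log Y_{m,n}\stackrel{d}{=}E_{m,n}$, apply the R\'enyi representation to get $\log(Y_{n-i,n}/Y_{n-k,n})\stackrel{d}{=}\sum_{j=i+1}^{k}Z_j/j$ jointly in $i$, and then either interchange summation (case $\xi\ge 0$) or substitute directly (case $\xi<0$). Your explicit remark that the representation must be applied jointly across all indices $i$ is a point the paper uses only implicitly, but the argument is otherwise identical.
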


\begin{proof}
	$$A_{k_0,k}(n)=	\frac{k_0}{k}\Psi_{\xi_{-}}\Big(\frac{Y_{n-k_0,n}}{Y_{n-k,n}}\Big)+ \frac{1}{k}\sum_{i=k_0}^{k-1}\Psi_{\xi_{-}}\Big(\frac{Y_{n-i,n}}{Y_{n-k,n}}\Big).$$
	
	\noindent
	When $\xi \geq 0$ we have $\Psi_{\xi_{-}}(x)=\log x$, and
	\begin{align}
	\label{e:a-def-1}
	A_{k_0,k}(n)&=\frac{k_0}{k} \log \frac{Y_{n-k_0,n}}{Y_{n-k,n}}+\frac{1}{k}\sum_{i=k_0}^{k-1}  \log \frac{Y_{n-i,n}}{Y_{n-k,n}}\\
	\nonumber
	&\stackrel{d}{=}\frac{k_0}{k} (E_{n-k_0,n}-E_{n-k,n})+\frac{1}{k}\sum_{i=k_0}^{k-1}(E_{n-i,n}-E_{n-k,n})
	\end{align}
	where $E_{1,n}\leq \ldots \leq E_{n,n}$ denote the order statistics of an i.i.d. sample of size $n$ from the standard exponential distribution.
	\\
	Using the  R\'enyi representation  of exponential order statistics (see section 4.4 in \citep{beir2004})  $E_{n-i,n}=\sum_{j=i+1}^{n} Z_j/j$ where $Z_1, Z_2,\ldots$ are i.i.d. standard exponential rv's,  now yields 
	\begin{align*}
	A_{k_0,k}(n)&=\frac{k_0}{k} \Big(\sum_{j=k_0+1}^{n} \frac{Z_j}{j}-\sum_{j=k+1}^{n} \frac{Z_j}{j}\Big)+\frac{1}{k}\sum_{i=k_0}^{k-1}\Big(\sum_{j=i+1}^{n} \frac{Z_j}{j}-\sum_{j=k+1}^{n} \frac{Z_j}{j}\Big)\\
	&=\frac{k_0}{k}\sum_{j=k_0+1}^{k}\frac{Z_j}{j}+\frac{1}{k}\sum_{i=k_0}^{k-1}\sum_{j=i+1}^{k}\frac{Z_j}{j}.
	\end{align*}
	Interchanging the order of summation in $i$ and $j$ in the second summand, we obtain
	\begin{align*}
	A_{k_0,k}(n)&=\frac{k_0}{k}\sum_{j=k_0+1}^{k}\frac{Z_j}{j}+\frac{1}{k}\sum_{j=k_0+1}^{k} \sum_{i=k_0}^{j-1} \frac{Z_j}{j}=\frac{1}{k}\sum_{j=k_0+1}^{k} Z_j
	\end{align*}
	which completes the proof for case $\xi\geq 0$.
	
	\noindent
	When $\xi<0$ we have $\Psi_{\xi_{-}}(x)= \Psi_{\xi}(x)$ leading to
	\begin{align*}
	A_{k_0,k}(n)&=\frac{k_0}{k \xi} \Big(\Big(\frac{Y_{n-k_0,n}}{Y_{n-k,n}}\Big)^{\xi}-1\Big)+\frac{1}{k\xi}\sum_{i=k_0}^{k-1}\Big(   \Big(\frac{Y_{n-i,n}}{Y_{n-k,n}}\Big)^{\xi}-1\Big)
	\end{align*}	
	\begin{align}
	\label{e:a-def-2}
	\nonumber
	&=\frac{k_0}{k\xi}  \Big(\exp \Big(\xi \log \frac{Y_{n-k_0,n}}{Y_{n-k,n}}\Big) -1\Big)+\frac{1}{k\xi}  \sum_{i=k_0}^{k-1}\Big(\exp \Big(\xi \log \frac{Y_{n-i,n}}{Y_{n-k,n}}\Big)-1\Big)\\
	&\stackrel{d}{=}\frac{k_0}{k\xi}  \Big(\exp (\xi [E_{n-k_0,n}- E_{n-k,n}]) -1\Big)+\frac{1}{k \xi}\sum_{i=k_0}^{k-1}\Big(\exp (\xi[ E_{n-i,n}- E_{n-k,n}]) -1\Big).
	\end{align}
	Again using the R\'enyi representation gives
	\begin{align*}
	A_{k_0,k}(n)&=\frac{k_0}{k}  \Big(\exp (\xi \sum_{j=k_0+1}^{k}Z_j/j) -1\Big)/\xi+\frac{1}{k}\sum_{i=k_0}^{k-1}\Big(\exp (\xi \sum_{j=i+1}^{k}Z_j/j) -1\Big)/\xi,
	\end{align*}
	which completes the proof for case $\xi<0$.\end{proof}

\begin{proposition}
	\label{lem:b-conv}
	As $k,n \to \infty$, $k/n \to 0$ and $k_0=o(k)$
	$$B_{k_0,k}(n)=c_{\xi_-,\rho}+o_{\mathbb{P}}(1)$$
	\begin{align}
	\label{e:bias-1-def}
	c_{\xi_-,\rho}=\begin{cases}
	\frac{1}{1-\xi_--\rho}(1-{(\frac{k_0}{k})}^{1-\xi_--\rho}), &\rho<0\\
	\frac{1}{\xi_{-}(1-\xi_{-1})^2}(1-{(\frac{k_0}{k})}^{1-\xi_{-}})-\frac{1}{1-\xi_{-}}{(\frac{k_0}{k})}^{1-\xi_{-}}\log\frac{k_0}{k} , &\rho=0, \:\xi_{-}<0\\
	(1-\frac{k_0}{k})-\frac{k_0}{k}\log\frac{k_0}{k}, &\rho=0, \:\xi_{-}=0.
	\end{cases}
	\end{align}
	
	Hence, with  \eqref{e:q0-lim},
	$$Q_0(Y_{n-k,n})B_{k_0,k}(n)=\lambda\Big(\frac{1}{\rho}\mathbbm{1}_{\{\rho<0\}}+ \mathbbm{1}_{\{\rho=0\}}\Big)\frac{ c_{\xi_{-},\rho}}{\sqrt{k}}+o_{\mathbb{P}}(k^{-1/2})$$
\end{proposition}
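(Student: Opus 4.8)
\noindent\textbf{Proof strategy for Proposition \ref{lem:b-conv}.}
The plan is to read $B_{k_0,k}(n)$ as a boundary term plus a Riemann sum for a deterministic integral, and then to evaluate that integral in closed form in each of the three regimes for $(\xi_-,\rho)$. Writing $W_i:=Y_{n-i,n}/Y_{n-k,n}\ge 1$ for $k_0\le i\le k$ we have $B_{k_0,k}(n)=\tfrac{k_0}{k}\,\Phi_{\xi_-,\rho}(W_{k_0})+\tfrac1k\sum_{i=k_0}^{k-1}\Phi_{\xi_-,\rho}(W_i)$. The first step is to replace each $W_i$ by its deterministic proxy $k/i$. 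By the R\'enyi representation already used in the proof of Proposition \ref{prop:a-def}, $\log W_i\stackrel{d}{=}\sum_{j=i+1}^{k}Z_j/j$, so $\E\log W_i=\sum_{j=i+1}^{k}1/j=\log(k/i)+O(1/i)$ and $\var(\log W_i)=\sum_{j=i+1}^{k}1/j^2=O(1/i)$. A maximal inequality for the backward partial sums $\sum_{j>i}(Z_j-1)/j$ then gives $\max_{k_0\le i\le k}\bigl|\log W_i-\log(k/i)\bigr|=o_{\mathbb{P}}(1)$ when $k_0\to\infty$; when $k_0$ stays bounded only finitely many indices near $i=k_0$ carry an $O_{\mathbb{P}}(1)$ fluctuation of $\log W_i$, and each such term enters with weight $1/k$ (respectively $k_0/k$), hence is negligible. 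Feeding this into the elementary local Lipschitz/H\"older bounds for $\Phi_{\xi_-,\rho}$ on $(1,\infty)$ --- it is bounded when $\rho<0$ or when $\rho=0,\ \xi_-<0$, and grows only like $(\log x)^2$ when $\xi_-=\rho=0$ --- yields
$$B_{k_0,k}(n)=\frac{k_0}{k}\,\Phi_{\xi_-,\rho}\!\Bigl(\frac{k}{k_0}\Bigr)+\frac1k\sum_{i=k_0}^{k-1}\Phi_{\xi_-,\rho}\!\Bigl(\frac{k}{i}\Bigr)+o_{\mathbb{P}}(1).$$

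The second step is to identify $\tfrac1k\sum_{i=k_0}^{k-1}\Phi_{\xi_-,\rho}(k/i)$ with $\int_{k_0/k}^{1}\Phi_{\xi_-,\rho}(1/s)\,ds$ up to $o(1)$, via $s=i/k$. For $\rho<0$ and for $\rho=0,\ \xi_-<0$ the integrand $s\mapsto\Phi_{\xi_-,\rho}(1/s)$ is bounded on $(0,1]$ and tends to a finite limit ($-1/(\xi_-+\rho)$, resp.\ $0$) as $s\downarrow 0$, so this is a routine Riemann-sum statement. For $\xi_-=\rho=0$ the integrand $\tfrac12(\log(1/s))^2$ is unbounded near $s=0$ but satisfies $\int_0^1(\log(1/s))^2\,ds=2<\infty$; comparing the sum with the integral block by block, using monotonicity of $(\log(1/s))^2$ and its integrability at the origin, still gives the claim, and in this case the separate term $\tfrac{k_0}{k}\Phi_{\xi_-,\rho}(k/k_0)$ is precisely what compensates the coarse behaviour of the sum over the first block. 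The three closed forms in \eqref{e:bias-1-def} then come out by direct integration: with $\beta=\xi_-+\rho$ one evaluates $\int_{u}^{1}\bigl((1/s)^{\beta}-1\bigr)\beta^{-1}\,ds$ for $\rho<0$, $\int_{u}^{1}(1/s)^{\xi_-}\xi_-^{-1}\log(1/s)\,ds$ for $\rho=0,\ \xi_-<0$, and $\int_{u}^{1}\tfrac12(\log(1/s))^2\,ds$ for $\xi_-=\rho=0$ (here $u:=k_0/k$), each time adding $\tfrac{k_0}{k}\Phi_{\xi_-,\rho}(k/k_0)$; after the boundary term cancels the leading-edge contribution of the integral one is left with the stated $c_{\xi_-,\rho}$.

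For the final assertion, $c_{\xi_-,\rho}$ is a bounded deterministic quantity, so multiplying $B_{k_0,k}(n)=c_{\xi_-,\rho}+o_{\mathbb{P}}(1)$ by $Q_0(Y_{n-k,n})$ and using \eqref{e:q0-lim}, i.e.\ $\sqrt{k}\,Q_0(Y_{n-k,n})=\lambda\bigl(\tfrac{1}{\rho}\mathbbm{1}_{\{\rho<0\}}+\mathbbm{1}_{\{\rho=0\}}\bigr)+o_{\mathbb{P}}(1)$, gives $Q_0(Y_{n-k,n})B_{k_0,k}(n)=\lambda\bigl(\tfrac{1}{\rho}\mathbbm{1}_{\{\rho<0\}}+\mathbbm{1}_{\{\rho=0\}}\bigr)c_{\xi_-,\rho}/\sqrt{k}+o_{\mathbb{P}}(k^{-1/2})$, every cross term being $o_{\mathbb{P}}(k^{-1/2})$.

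The step I expect to be the real obstacle is the case $\xi_-=\rho=0$, where $\Phi_{\xi_-,\rho}$ is unbounded: both the replacement of $W_i$ by $k/i$ and the Riemann-sum step then need genuine estimates near $i=k_0$ rather than a one-line continuity remark. Concretely one must show $\tfrac1k\sum_{i=k_0}^{k-1}\bigl[(\log W_i)^2-(\log(k/i))^2\bigr]=o_{\mathbb{P}}(1)$ uniformly over the admissible range of $k_0$, which reduces to bounding $\tfrac1k\sum_i\bigl|\log W_i-\log(k/i)\bigr|\bigl(\log W_i+\log(k/i)\bigr)$ via $\bigl|\log W_i-\log(k/i)\bigr|=O_{\mathbb{P}}(i^{-1/2})$ together with $\log(k/i)=O(\log k)$ on the critical block $i\asymp k_0$; a maximal inequality for $\sum_{j>i}Z_j/j$ makes this rigorous. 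A secondary nuisance is bookkeeping for $k_0$ bounded versus $k_0$ slowly growing, since then the boundary term and the leading block of the sum are only $O_{\mathbb{P}}(1)$ rather than convergent, although both remain $o_{\mathbb{P}}(1)$ after multiplication by the weights $1/k$ and $k_0/k$.
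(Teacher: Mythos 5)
Your proof is correct, but it takes a genuinely different route from the paper's. The paper transfers everything to Gamma variables via the R\'enyi representation, writes each of the three cases as a full sum $\Delta_k$ over $i=1,\dots,k$ minus a head correction $\Delta_{k_0,k}$ over $i\le k_0$, and then invokes the law-of-large-numbers type Lemmas \ref{lem:r-def1}--\ref{lem:r-def3} (which in turn lean on Lemmas C.3 and C.4 of \cite{bhatt2019}), with separate bookkeeping for $k_0$ bounded versus $k_0\to\infty$. You instead replace $Y_{n-i,n}/Y_{n-k,n}$ by the deterministic proxy $k/i$ via a maximal inequality for the backward sums $\sum_{j>i}(Z_j-1)/j$, and then evaluate a single Riemann integral $\int_{k_0/k}^{1}\Phi_{\xi_-,\rho}(1/s)\,ds$ plus the boundary term. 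What your route buys is self-containedness and unification: all three constants drop out of one integral identity, and the only genuinely delicate point --- which you correctly isolate --- is the case $\xi_-=\rho=0$, where the integrand is unbounded at the origin and the substitution of $k/i$ for $W_i$ needs a uniform (maximal-inequality) estimate rather than a pointwise one; the block-by-block comparison you sketch, together with the bound $\tfrac1k\sum_i i^{-1/2}\log(k/i)=O(k^{-1/2})$, closes this. What the paper's route buys is that the probabilistic work is entirely outsourced to exact distributional identities for Gamma ratios already established elsewhere. One further remark: your direct integration reproduces the constants of Lemmas \ref{lem:r-def2} and \ref{lem:r-def3} exactly (e.g.\ $1-\tfrac{k_0}{k}\bigl(1+\log\tfrac{k}{k_0}\bigr)$ in the case $\xi_-=\rho=0$), which differ from the displayed formula \eqref{e:bias-1-def} in the sign of the $\tfrac{k_0}{k}\log\tfrac{k_0}{k}$ term; since that term is $o(1)$ under $k_0=o(k)$ this is immaterial asymptotically, but your computation sides with the paper's own supporting lemmas, and the display appears to carry a sign typo.
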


\begin{proof}
	$$B_{k_0,k}(n)=\frac{k_0}{k}\Phi_{\xi_{-}, \rho}\Big(\frac{Y_{n-k_0,n}}{Y_{n-k,n}}\Big)+\frac{1}{k}\sum_{i=k_0}^{k-1}\Phi_{\xi_{-}, \rho}\Big(\frac{Y_{n-i,n}}{Y_{n-k,n}}\Big).$$
	
	\noindent{\em In case $\rho<0$,} then
	$$B_{k_0,k}(n)=\frac{1}{\xi_{-}+\rho}\Bigg(\frac{k_0}{k}\Big(\frac{Y_{n-k_0,n}}{Y_{n-k,n}}\Big)^{\xi_{-}+\rho}+\frac{1}{k}\sum_{i=k_0}^{k-1}\Big(\frac{Y_{n-i,n}}{Y_{n-k,n}}\Big)^{\xi_{-}+\rho}-1\Bigg).$$
	By Lemma \ref{lem:r-def1} below, $B_{k_0,k}=1/(1-\xi_{-}-\rho)(1-(k_0/k)^{1-\xi_{-}-\rho)}+o_{\mathbb{P}}(1)$.
	
	\vspace{0.3cm}
	\noindent{\em In case $\xi_{-}< 0,\:\rho=0$, }
	$$B_{k_0,k}(n)=\frac{k_0}{k\xi_{-}}\Big(\frac{Y_{n-k_0,n}}{Y_{n-k,n}}\Big)^{\xi_{-}}\log\frac{Y_{n-k_0,n}}{Y_{n-k,n}}+\frac{1}{k\xi_{-}}\sum_{i=k_0}^{k-1}\Big(\frac{Y_{n-i,n}}{Y_{n-k,n}}\Big)^{\xi_{-}}\log\frac{Y_{n-i,n}}{Y_{n-k,n}}$$
	By Lemma \ref{lem:r-def2} below, $B_{k_0,k}=1/(\xi_{-}(1-\xi_{-})^2)(1-(k_0/k)^{\xi_{-}})-1/(1-\xi_{-})((k_0/k)^{\xi_{-}}\log( k_0/k))+o_{\mathbb{P}}(1).$
	
	\vspace{0.3cm}
	\noindent{\em In case $\xi_{-}= 0, \rho=0$,}
	$$B_{k_0,k}(n)=\frac{k_0}{2k}\Big(\log\frac{Y_{n-k_0,n}}{Y_{n-k,n}}\Big)^2+\frac{1}{2k}\sum_{i=k_0}^{k-1}\Big(\log\frac{Y_{n-i,n}}{Y_{n-k,n}}\Big)^2.$$
	By Lemma \ref{lem:r-def3}	below, $B_{k_0,k}=1-(k_0/k)-(k_0/k)\log (k_0/k)+o_{\mathbb{P}}(1).$
	
\end{proof}

\noindent The following Proposition concerning $C_{k_0,k}(n)$ in  \eqref{e:xi-trim-y}  ends the proof of Theorem 2.1.
\noindent
\begin{proposition}
	\label{lem:c-conv}
	As $k,n \to \infty$, $k/n \to 0$ and $k_0=o(k)$
	$$C_{k_0,k}(n)=o_{\mathbb{P}}(1).$$
	Thus, in view of  \eqref{e:q0-lim}, 
	$$Q_0(Y_{n-k,n})C_{k_0,k}(n)=o_{\mathbb{P}}(k^{-1/2}).$$
\end{proposition}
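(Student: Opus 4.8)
The plan is to dominate $C_{k_0,k}(n)$ by the explicit remainder bound of \eqref{e:bounds}/\eqref{e:log-u} and then reduce the problem to a first-moment estimate for weighted sums of powers of ratios of Pareto order statistics. Fix once and for all a $\delta\in(0,1/2)$ and write $\beta:=\xi_-+\rho+\delta$; since $\xi_-\le 0$ and $\rho\le 0$ we have $\beta<1/2$. For $\epsilon>0$ let $t_0=t_0(\epsilon,\delta)$ be as in \eqref{e:bounds}. On the event $\{Y_{n-k,n}\ge t_0\}$, applying $|R(t,x)|\le\epsilon\,x^{\beta}$ with $t=Y_{n-k,n}$ and $x=Y_{n-i,n}/Y_{n-k,n}>1$ (valid for $i<k$, a.s.\ since there are no ties), we get
$$
|C_{k_0,k}(n)|\ \le\ \epsilon\,D_{k_0,k}(n),\qquad
D_{k_0,k}(n):=\frac{k_0}{k}\Big(\frac{Y_{n-k_0,n}}{Y_{n-k,n}}\Big)^{\beta}+\frac{1}{k}\sum_{i=k_0}^{k-1}\Big(\frac{Y_{n-i,n}}{Y_{n-k,n}}\Big)^{\beta}.
$$
Because $Y_{n-k,n}\stackrel{\mathbb{P}}{\to}\infty$, the event $\{Y_{n-k,n}\ge t_0\}$ has probability tending to $1$; hence, once we know $D_{k_0,k}(n)=O_{\mathbb{P}}(1)$, the conclusion $C_{k_0,k}(n)=o_{\mathbb{P}}(1)$ follows by choosing constants in the right order: given $\zeta,\eta>0$, first pick $M$ with $\mathbb{P}(D_{k_0,k}(n)>M)<\eta/2$ for all large $n$, then set $\epsilon=\zeta/M$, so that $\mathbb{P}(|C_{k_0,k}(n)|>\zeta)\le\mathbb{P}(Y_{n-k,n}<t_0(\epsilon,\delta))+\eta/2<\eta$ for $n$ large.

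For the bound $D_{k_0,k}(n)=O_{\mathbb{P}}(1)$ I would first pass to the R\'enyi representation exactly as in the proof of Proposition \ref{prop:a-def}: with $Z_1,Z_2,\dots$ i.i.d.\ standard exponential, $Y_{n-i,n}/Y_{n-k,n}\stackrel{d}{=}\exp(\sum_{j=i+1}^{k}Z_j/j)$ for $i<k$, so that
$$
D_{k_0,k}(n)\ \stackrel{d}{=}\ \frac{k_0}{k}\exp\Big(\beta\!\!\sum_{j=k_0+1}^{k}\!\!\tfrac{Z_j}{j}\Big)+\frac{1}{k}\sum_{i=k_0}^{k-1}\exp\Big(\beta\!\!\sum_{j=i+1}^{k}\!\!\tfrac{Z_j}{j}\Big).
$$
If $\beta\le 0$ every exponential is bounded by $1$ and $D_{k_0,k}(n)\le\frac{k_0}{k}+\frac{k-k_0}{k}=1$ deterministically. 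If $0<\beta<1/2$ I would take expectations: $\mathbb{E}\,e^{\beta Z_j/j}=\frac{j}{j-\beta}$, hence $\mathbb{E}\exp(\beta\sum_{j=i+1}^{k}Z_j/j)=\prod_{j=i+1}^{k}\frac{j}{j-\beta}=\frac{\Gamma(k+1)\,\Gamma(i+1-\beta)}{\Gamma(i+1)\,\Gamma(k+1-\beta)}$, which by Stirling is $O((k/i)^{\beta})$, the finitely many small-$i$ terms contributing only $O(k^{\beta-1})=o(1)$ to $\frac1k\sum_i$. Therefore
$$
\mathbb{E}\,D_{k_0,k}(n)\ =\ O\!\Big(\big(\tfrac{k_0}{k}\big)^{1-\beta}\Big)+O\!\Big(k^{\beta-1}\!\!\sum_{i=k_0}^{k-1}i^{-\beta}\Big)\ =\ o(1)+O(1),
$$
using $k_0=o(k)$ and $\sum_{i=k_0}^{k-1}i^{-\beta}\sim k^{1-\beta}/(1-\beta)$. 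Thus $\limsup_n\mathbb{E}\,D_{k_0,k}(n)<\infty$, and Markov's inequality yields $D_{k_0,k}(n)=O_{\mathbb{P}}(1)$. (Equivalently, one can invoke the convergence of sums of this exact shape already established in Lemmas \ref{lem:r-def1}--\ref{lem:r-def3}.)

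Putting these together gives $C_{k_0,k}(n)=o_{\mathbb{P}}(1)$, and the second assertion is then immediate: by \eqref{e:q0-lim} we have $Q_0(Y_{n-k,n})=O_{\mathbb{P}}(k^{-1/2})$, so $Q_0(Y_{n-k,n})\,C_{k_0,k}(n)=O_{\mathbb{P}}(k^{-1/2})\cdot o_{\mathbb{P}}(1)=o_{\mathbb{P}}(k^{-1/2})$. The one genuinely delicate case is $\xi_-+\rho=0$ (in particular $\xi_-=\rho=0$): there $\beta=\delta>0$ no matter how $\delta$ is chosen, so the trivial bound $D_{k_0,k}(n)\le 1$ is unavailable and one really needs the moment computation above, together with $k_0=o(k)$, to keep $\mathbb{E}\,D_{k_0,k}(n)$ bounded. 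Whenever $\xi_-+\rho<0$ one may simply take $\delta$ small enough that $\beta\le 0$ and the argument is elementary.
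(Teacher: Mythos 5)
Your proposal is correct and follows the same skeleton as the paper's proof: dominate $|C_{k_0,k}(n)|$ by $\epsilon$ times the sum of power ratios (your $D_{k_0,k}(n)$ is exactly the paper's $C^*_{k_0,k}(n)$), then show that sum is $O_{\mathbb{P}}(1)$ and let $\epsilon \downarrow 0$. Where you diverge is in how the $O_{\mathbb{P}}(1)$ bound is obtained. The paper splits into cases: for $\xi_-+\rho<0$ it shrinks $\delta$ so that the exponent is negative and invokes Lemma \ref{lem:r-def1} (a convergence-in-probability statement for Gamma-ratio sums), while for $\xi_-+\rho=0$ it bounds $C^*_{k_0,k}$ by $\frac{1}{k}\sum_{i=0}^{k-1}U_{i,k}^{-\delta}$ and applies the law of large numbers. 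You instead pass to the R\'enyi representation (as in Proposition \ref{prop:a-def}), compute $\mathbb{E}\exp(\beta\sum_{j=i+1}^{k}Z_j/j)=\prod_{j=i+1}^{k}\frac{j}{j-\beta}=O((k/i)^{\beta})$ exactly, and conclude via Markov; this handles both sign cases of $\beta$ in one stroke (with the $\beta\le 0$ case being trivial) and yields a quantitative first-moment bound rather than a limit. Your treatment of the quantifier ordering — fixing $M$ with $\mathbb{P}(D_{k_0,k}(n)>M)<\eta/2$ \emph{before} choosing $\epsilon=\zeta/M$ and hence $t_0(\epsilon,\delta)$ — is also more explicit than the paper's, which simply asserts the bound ``with probability tending to 1.'' Both routes are sound; the paper's buys reuse of its Lemmas C.3--C.4 machinery, yours buys a shorter, self-contained and uniform argument. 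The second assertion is handled identically in both via \eqref{e:q0-lim}.
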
 

\begin{proof}
	$$C_{k_0,k}(n)=\frac{k_0}{k}R\Big(Y_{n-k,n},\frac{Y_{n-k_0,n}}{Y_{n-k,n}}\Big)+\frac{1}{k}\sum_{i=k_0}^{k-1}R\Big(Y_{n-k,n},\frac{Y_{n-i,n}}{Y_{n-k,n}}\Big).$$
	Since $Y_{n-k,n} \stackrel{\mathbb{P}}{\longrightarrow}1$,  with probability tending to 1 we have
	$$\Big|R\Big(Y_{n-k,n},\frac{Y_{n-i,n}}{Y_{n-k,n}}\Big)\Big|\leq \epsilon\Big(\frac{Y_{n-i,n}}{Y_{n-k,n}}\Big)^{\xi_{-}+\rho+\delta},\hspace{2mm}i=k_0, \ldots,k-1.$$
	Therefore,
	\begin{align}\label{e:c-def}|C_{k_0,k}(n)|&\leq\epsilon\underbrace{ \Big| \frac{k_0}{k}\Big(\frac{Y_{n-k_0,n}}{Y_{n-k,n}}\Big)^{\xi_{-}+\rho+\delta}+\frac{1}{k}\sum_{i=k_0}^{k-1}\Big(\frac{Y_{n-i,n}}{Y_{n-k,n}}\Big)^{\xi_{-}+\rho+\delta}\Big|}_{C^*_{k_0,k}(n)}.
	\end{align}
	
	\noindent
	If $\xi_{-}+\rho<0$, then for $\delta$ sufficiently small, $\xi_{-}+\rho+\delta<0$.
	
	\noindent By Lemma \ref{lem:r-def1} below,
	$$C^*_{k_0,k}=1+\frac{1}{(1-\xi_{-}-\rho-\delta)}\Big(1-\Big(\frac{k_0}{k}\Big)^{1-\xi_{-}-\rho-\delta}\Big)+o_{\mathbb{P}}(1)$$
	
	\noindent
	If $\xi_{-}+\rho=0$, then for every $1>\delta>0$, 
	$$|C^*_{k_0,k}|\leq \frac{1}{k}\sum_{i=0}^{k-1}\Big(\frac{Y_{n-i,n}}{Y_{n-k,n}}\Big)^{\delta}\stackrel{d}{=}\frac{1}{k}\sum_{i=0}^{k-1}U_{i,k}^{-\delta}=\frac{1}{1-\delta}+o_{\mathbb{P}}(1)$$
	where $U_{1,k}\leq \ldots \leq U_{k,k}$ denote the order statistics of uniform (0,1) i.i.d. sample of length $k$ and the convergence  follows from the law of large numbers.
	\\
	Thus,  $C^*_{k_0,k}=O_{\mathbb{P}}(1)$ which in view of \eqref{e:c-def} implies $C_{k_0,k}=o_{\mathbb{P}}(1)$. 
\end{proof}

\noindent

\noindent
{\bf Proof of Theorem 2.2} 
Note that
\begin{equation}
\label{e:T-trim}
1-T_{k_0,k}=\frac{(k_0+1)\log \frac{X_{n-k_0,n}}{X_{n-k_0-1,n}}}{(k-k_0) H_{k_0,k}}=
\frac{V_{k_0+1}}{(k-k_0) H_{k_0,k}},
\end{equation}
with $V_{k_0+1}$ defined in \eqref{def:V}.\\

\noindent With the same notations as in the proof of Theorem 2.1, we have
\begin{eqnarray}
V_{k_0+1}&=& (k_0+1) \log \frac{U(Y_{n-k_0,n})}{U(Y_{n-k_0-1,n})}  \nonumber \\
&=& (k_0+1) \log \frac{U\Big({Y_{n-k_0,n} \over Y_{n-k_0-1,n}}Y_{n-k_0-1,n}\Big)}{U\Big(Y_{n-k_0-1,n}\Big)}. \label{e:logUdiff}
\end{eqnarray}
Using  \eqref{e:log-u} with $t=Y_{n-k_0-1,n}$ and $x={Y_{n-k_0,n} \over Y_{n-k_0-1,n}}$ we obtain
\begin{eqnarray}
V_{k_0+1}&=& (k_0+1) q_0(Y_{n-k_0-1,n})\left\{
\Psi_{\xi_-}\left({Y_{n-k_0,n} \over Y_{n-k_0-1,n}}\right)+Q_0 (Y_{n-k_0-1,n})\Phi_{\xi_-,\rho}\left({Y_{n-k_0,n} \over Y_{n-k_0-1,n}}\right)\right. \nonumber\\
&& \hspace{5cm}\left.
+ Q_0(Y_{n-k_0-1,n})R\left(Y_{n-k_0-1,n},{Y_{n-k_0,n} \over Y_{n-k_0-1,n}}\right)\right\},
\label{e:Vk0}
\end{eqnarray}
where $|R\left(Y_{n-k_0-1,n},{Y_{n-k_0,n} \over Y_{n-k_0-1,n}}\right)|\leq \epsilon \left( {Y_{n-k_0,n} \over Y_{n-k_0-1,n}}\right)^{\xi_-+\rho+\delta} $.\\

\noindent
Since $Y_{n-k_0,n}/(n/k_0) = O_{\mathbb{P}}(1)$ when $k_0=o(k)$, we have $$\Phi_{\xi_-,\rho}\left({Y_{n-k_0,n} \over Y_{n-k_0-1,n}}\right)=O_{\mathbb{P}}(1) \hspace{5mm} 
R\left(Y_{n-k_0-1,n},{Y_{n-k_0,n} \over Y_{n-k_0-1,n}}\right)=o_{\mathbb{P}}(1).$$\\

\noindent
A representation of the denominator of the right hand side of \eqref{e:T-trim} follows from Theorem 2.1. We now combine \eqref{e:logUdiff} with that result.
\\

\noindent
Since $Q_0$ is regularly varying with index $\rho \leq 0$, i.e. $Q_0 (x)=x^\rho \ell_0 (x)$ for some slowly varying function $\ell_0$, we invoke Potter bounds so that for any $\delta_1,\delta_2 >0$ and $t$ large enough
$$
{\ell_0 (tx) \over \ell_0 (t)} \leq (1+\delta_1)x^{\delta_2}, \text{ with } x>1,
$$ 
(see for instance Proposition B.1.9 in de Haan and Ferreira (2006)) so that for any $\delta >0$
\begin{eqnarray*}
	(k_0+1)\frac{Q_0(Y_{n-k_0-1,n})}{Q_0(Y_{n-k,n})}
	&=&
	(k_0+1)\frac{ Q_0\left({Y_{n-k_0-1,n}\over Y_{n-k,n} } Y_{n-k,n} \right)}{Q_0(Y_{n-k,n})} \\
	&=& (k_0+1)\left( {Y_{n-k_0-1,n}\over Y_{n-k,n} }\right)^\rho
	\; {\ell_0 \left( {Y_{n-k_0-1,n}\over Y_{n-k,n} }Y_{n-k,n}\right) \over \ell_0 (Y_{n-k,n})} \\
	&=& O_{\mathbb{P}}\left(k_0({k \over k_0})^{\rho+\delta} \right).
\end{eqnarray*}
Since $Y_{n-k,n}/(n/k) \to_{\mathbb{P}} 1$, $Q_0(Y_{n-k,n})= O_{\mathbb{P}}(1/\sqrt{k})$ using the assumption $\sqrt{k}Q_0(n/k) \to \lambda$, from which
\begin{equation}
(k_0+1)Q_0(Y_{n-k_0-1,n})= Q_0 (Y_{n-k,n}) O_{\mathbb{P}}\left(k_0({k \over k_0})^{\rho+\delta} \right) = O_{\mathbb{P}}\left(k^{-1/2}\,k_0\,({k \over k_0})^{\rho+\delta} \right).
\label{e:Q0k0}
\end{equation}

\vspace{0.5cm}\noindent
Furthermore, using  (2.3.18) in de Haan and Ferreira (2006) stating that
$$
{q_0 (tx) \over q_0(t)} = x^{\xi_-}\left(1+  Q_0(t) x^{\xi_-}
\Psi_\rho (x)+   Q_0(t)\tilde{R}(t,x)\right) ,
$$
with $|\tilde{R}(t,x)|\leq \epsilon x^{\xi_- +\rho+\delta}$,
we obtain with $t=Y_{n-k,n}$ and $x={Y_{n-k_0-1,n}\over Y_{n-k,n}}$ that
\begin{equation}
\left( {Y_{n-k,n} \over Y_{n-k_0-1,n}}\right)^{\xi_-}
{q_0 (Y_{n-k_0-1,n})\over q_0(Y_{n-k,n}) }
=
1+ Q_0(Y_{n-k,n})\Psi_\rho \left(  {Y_{n-k_0,n} \over Y_{n-k,n}}\right) + Q_0(Y_{n-k,n})\tilde{R}\left(Y_{n-k,n},{Y_{n-k_0-1,n}\over Y_{n-k,n}}\right).
\label{e:fracq0}
\end{equation}
Also 
\begin{equation}
\Psi_\rho \left(  {Y_{n-k_0,n} \over Y_{n-k,n}}\right)
= 
\begin{cases}
O_{\mathbb{P}}(1) \hspace{1.4cm} \mbox{ if } \rho <0, \\
O_{\mathbb{P}}(\log (k/k_0)) \mbox{ if } \rho =0.
\end{cases}
\label{e:Psi}
\end{equation}
Also,
\begin{equation}
|\tilde{R}\left(Y_{n-k,n},{Y_{n-k_0-1,n}\over Y_{n-k,n}}\right)| \leq \epsilon \left( {Y_{n-k_0-1,n} \over Y_{n-k,n}}\right)^{\rho+\delta} =o_{{\mathbb{P}}} \left( (k/k_0)^{\rho +\delta}\right).
\label{e:tR}
\end{equation}
Hence, combining \eqref{e:fracq0}, \eqref{e:Psi} and \eqref{e:tR} we obtain 
\
\begin{equation}
\left( {k_0+1 \over k} \right)^{\xi_-}{q_0 (Y_{n-k_0-1,n})\over q_0(Y_{n-k,n}) } = 1+ o_{{\mathbb{P}}} \left( k^{-1/2}(k/k_0)^{\rho +\delta}\right).
\label{e:q0R}
\end{equation}

\vspace{0.5cm}\noindent
Next, with the method of proof developed in Proposition 5.1,
\begin{equation}
(k_0+1) \Psi_{\xi_-}\left({Y_{n-k_0,n} \over Y_{n-k_0-1,n}}\right) =
\begin{cases}
Z_{k_0}, & \text{ if } \xi_- =0, \\
{k_0 +1 \over \xi_-}\left( e^{\xi_- {Z_{k_0}\over k_0+1}} -1\right), & \text{ if } \xi_- <0.
\end{cases}
\label{e:Psik0}
\end{equation}

\noindent	By Theorem \ref{thm:xi-dist},  	$(k-k_0)H_{k_0,k}/q_0(Y_{n-k,n})\stackrel{d}{=}k \mathbb{Z}_{k_0,k}+O_\mathbb{P}(k^{-1/2})$ where
\begin{equation}
\label{e:Z-}
\mathbb{Z}_{k_0,k}=\begin{cases}
\sum_{j=k_0+1}^kZ_j,& \xi_- =0\\
k_0(\exp(\xi_- \sum_{j=k_0+1}^{k}Z_j/j)-1)/\xi_-+ \sum_{i=k_0}^{k-1}(\exp(\xi_- \sum_{j=i+1}^{k}Z_j/j)-1)/\xi_-& \xi_-<0.
\end{cases}
\end{equation}

\noindent
The result now follows combining \eqref{e:T-trim},
\eqref{e:Vk0}, \eqref{e:Q0k0}, \eqref{e:q0R}, \eqref{e:Psik0} and \eqref{e:Z-}. \hspace{29mm} $\Box$

\begin{lemma}
	\label{lem:r-def1}
	With $Y_1, \ldots, Y_n$ an i.i.d. sample from the standard Pareto distribution, we have for any $\beta<0$,
	as $k,n \to \infty$, $k/n \to 0$ and $k_0=o(k)$
	\begin{equation}
	\label{e:R-conv-2}
	\Big|\frac{k_0}{k}\Big(\frac{Y_{n-k_0,n}}{Y_{n-k,n}}\Big)^{\beta}+\frac{1}{k}\sum_{i=k_0}^{k-1}\Big(\frac{Y_{n-i,n}}{Y_{n-k,n}}\Big)^{\beta}-1-\frac{\beta}{1-\beta}\Big(1-\Big(\frac{k_0}{k}\Big)^{1-\beta}\Big)\Big|\stackrel{\mathbb{P}}{\longrightarrow}0.
	\end{equation}
	
\end{lemma}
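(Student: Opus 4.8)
The strategy is to pass from the Pareto order statistics to uniform order statistics, peel off a law-of-large-numbers ``bulk'' term that supplies the $1+\tfrac{\beta}{1-\beta}$ part, and then show that the two remaining boundary terms produce exactly the correction $-\tfrac{\beta}{1-\beta}(k_0/k)^{1-\beta}$.

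\emph{Step 1: reduction to uniform order statistics.} Conditioning on $Y_{n-k,n}$ and using the scaling property $\mathbb{P}(Y>ts\mid Y>t)=\mathbb{P}(Y>s)$ of the standard Pareto law, the vector $(Y_{n-i,n}/Y_{n-k,n})_{i=0}^{k-1}$ is distributed as the order statistics of $k$ i.i.d.\ standard Pareto variables, independently of $Y_{n-k,n}$. Writing such a variable as $1/U$ with $U\sim\mathrm{Unif}(0,1)$, this gives $Y_{n-i,n}/Y_{n-k,n}\stackrel{d}{=}1/U_{i+1,k}$ for $i=0,\dots,k-1$, where $U_{1,k}\le\cdots\le U_{k,k}$ are uniform order statistics. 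Since the bracketed expression in \eqref{e:R-conv-2} is the deterministic shift of $\frac{k_0}{k}(Y_{n-k_0,n}/Y_{n-k,n})^\beta+\frac1k\sum_{i=k_0}^{k-1}(Y_{n-i,n}/Y_{n-k,n})^\beta$ by $1+\frac{\beta}{1-\beta}(1-(k_0/k)^{1-\beta})$, and convergence in probability to a constant coincides with convergence in distribution to it, it suffices to prove
$$S_{k_0,k}:=\frac{k_0}{k}U_{k_0+1,k}^{-\beta}+\frac1k\sum_{j=k_0+1}^{k}U_{j,k}^{-\beta}\ \stackrel{\mathbb{P}}{\longrightarrow}\ 1+\frac{\beta}{1-\beta}\Big(1-\Big(\tfrac{k_0}{k}\Big)^{1-\beta}\Big).$$

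\emph{Step 2: split off the bulk.} Write $\frac1k\sum_{j=k_0+1}^{k}U_{j,k}^{-\beta}=\frac1k\sum_{j=1}^{k}U_{j,k}^{-\beta}-\frac1k\sum_{j=1}^{k_0}U_{j,k}^{-\beta}$. The first average equals $\frac1k\sum_{i=1}^kU_i^{-\beta}$ in distribution, a sample mean of i.i.d.\ variables with finite mean $\int_0^1u^{-\beta}\,du=\frac{1}{1-\beta}=1+\frac{\beta}{1-\beta}$ (finite since $\beta<1$), so by the strong law it converges a.s.\ to that value. Hence the claim reduces to
$$\frac{k_0}{k}U_{k_0+1,k}^{-\beta}-\frac1k\sum_{j=1}^{k_0}U_{j,k}^{-\beta}+\frac{\beta}{1-\beta}\Big(\tfrac{k_0}{k}\Big)^{1-\beta}\ \stackrel{\mathbb{P}}{\longrightarrow}\ 0.$$

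\emph{Step 3: the boundary terms.} Use the joint representation $U_{j,k}\stackrel{d}{=}\Gamma_j/\Gamma_{k+1}$ with $\Gamma_j=E_1+\cdots+E_j$ for i.i.d.\ standard exponentials $E_i$, and $\Gamma_m/m\to1$ a.s. If $k_0\to\infty$, then $\Gamma_j/j\to1$ a.s., so a Toeplitz/Kronecker-type summation lemma with the non-summable weights $j^{-\beta}$ ($-\beta>-1$) gives $\sum_{j=1}^{k_0}\Gamma_j^{-\beta}\sim\sum_{j=1}^{k_0}j^{-\beta}\sim k_0^{1-\beta}/(1-\beta)$ a.s., whence $\frac1k\sum_{j=1}^{k_0}U_{j,k}^{-\beta}=\frac{1}{1-\beta}(k_0/k)^{1-\beta}(1+o_{\mathbb P}(1))$; similarly $\tfrac{k}{k_0}U_{k_0+1,k}\to1$ in probability yields $\frac{k_0}{k}U_{k_0+1,k}^{-\beta}=(k_0/k)^{1-\beta}(1+o_{\mathbb P}(1))$, and since $(k_0/k)^{1-\beta}\to0$ is bounded the difference is $(k_0/k)^{1-\beta}\big(1-\tfrac1{1-\beta}\big)+o_{\mathbb P}(1)=-\tfrac{\beta}{1-\beta}(k_0/k)^{1-\beta}+o_{\mathbb P}(1)$, which is what is needed. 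If instead $k_0$ stays bounded, then $(k_0/k)^{1-\beta}\to0$, while $kU_{j,k}$ (for $j$ in a bounded range) forms a tight family and $x\mapsto x^{-\beta}$ is continuous on $(0,\infty)$, so $\frac{k_0}{k}U_{k_0+1,k}^{-\beta}=k_0k^{\beta-1}O_{\mathbb P}(1)$ and $\frac1k\sum_{j=1}^{k_0}U_{j,k}^{-\beta}=k^{\beta-1}O_{\mathbb P}(1)$ both vanish in probability since $\beta-1<0$. An arbitrary subsequence of $(k_0)$ has a further subsequence that is either unbounded or bounded, so the two cases together give convergence in probability along the full sequence, completing the proof.

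\emph{Expected main obstacle.} All the real difficulty sits in Step 3: extracting the precise constant $\tfrac{\beta}{1-\beta}$ from the two coupled boundary contributions and handling them uniformly across the regimes $k_0\to\infty$ and $k_0$ bounded. The technical heart is the weighted strong law $\sum_{j\le k_0}\Gamma_j^{-\beta}\sim k_0^{1-\beta}/(1-\beta)$ (valid precisely because $-\beta>-1$ makes the weights non-summable); everything else is bookkeeping with Potter-type bounds of the kind already used elsewhere in the paper. The companion lemmas for $\rho=0$ are proved in the same way, now with weights carrying $\log$-factors coming from $\Phi_{\xi_-,\rho}$, and all three feed directly into Propositions~\ref{lem:b-conv} and~\ref{lem:c-conv}.
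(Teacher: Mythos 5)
Your argument is correct and follows essentially the same route as the paper: both pass to the $\Gamma_j/\Gamma_{k+1}$ (uniform order statistic) representation, split off the full average $\frac{1}{k}\sum_{j=1}^{k}U_{j,k}^{-\beta}\to 1/(1-\beta)$, show that the two boundary terms jointly contribute $-\frac{\beta}{1-\beta}(k_0/k)^{1-\beta}+o_{\mathbb{P}}(1)$, and treat the regimes $k_0$ bounded and $k_0\to\infty$ separately. The only difference is that you prove the auxiliary limits directly (law of large numbers for the uniform moments, a Toeplitz argument for $\sum_{j\le k_0}\Gamma_j^{-\beta}$) where the paper instead invokes Lemmas C.3 and C.4 of \cite{bhatt2019}.
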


\begin{proof}
	\begin{eqnarray*}
		\frac{k_0}{k}\Big(\frac{Y_{n-k_0,n}}{Y_{n-k,n}}\Big)^{\beta}+\frac{1}{k}\sum_{i=k_0}^{k-1}\Big(\frac{Y_{n-i,n}}{Y_{n-k,n}}\Big)^{\beta}&=&\frac{1}{k}\sum_{i=0}^{k-1}\Big(\frac{Y_{n-i,n}}{Y_{n-k,n}}\Big)^{\beta}-
		\frac{1}{k}\sum_{i=0}^{k_0-1}\Bigg(
		\Big(\frac{Y_{n-i,n}}{Y_{n-k,n}}\Big)^{\beta}-\Big(\frac{Y_{n-k_0,n}}{Y_{n-k,n}}\Big)^{\beta}\Bigg)\\
		&\stackrel{d}{=}&\underbrace{\frac{1}{k}\sum_{i=1}^{k}\Big(\frac{\Gamma_{i}}{\Gamma_{k+1}}\Big)^{-\beta}}_{\Delta_{k}}-\underbrace{\frac{1}{k}\sum_{i=1}^{k_0}\Bigg(\Big(\frac{\Gamma_{i}}{\Gamma_{k+1}}\Big)^{-\beta}-\Big(\frac{\Gamma_{k_0+1}}{\Gamma_{k+1}}\Big)^{-\beta}\Bigg)}_{\Delta_{k_0,k}},
	\end{eqnarray*}
	where $\Gamma_j= \sum_{i=1}^j E_j$ ($j=1,2,\ldots$).
	\\
	By Lemma C.4 in \cite{bhatt2019},  $\Delta_k\stackrel{\mathbb{P}}{\longrightarrow}1/(1-\beta)$.
	\\
	If $k_0=k_0(n) \leq M$ for all $n$, then $\Delta_{k_0,k}=o_{\mathbb{P}}(1)$ and  hence the proof of  \eqref{e:R-conv-2} follows.
	\\
	If $ k_0=k_0(n) \to \infty$, then 
	\begin{equation}\label{e:del-conv}
	\Delta_{k_0,k}-\frac{\beta}{1-\beta}\Big(\frac{k_0}{k}\Big)^{1-\beta}=o_{\mathbb{P}}(1)
	\end{equation}
	since
	\begin{eqnarray}
	\label{e:b-2}
	\Big|\Delta_{k_0,k}-\frac{\beta}{1-\beta}\Big(\frac{k_0}{k}\Big)^{1-\beta}\Big|&=&   \Big|\frac{1}{k}\sum_{i=1}^{k_0}\Big(\Big(\frac{\Gamma_{i}}{\Gamma_{k+1}}\Big)^{-\beta}-\Big(\frac{\Gamma_{k_0+1}}{\Gamma_{k+1}}\Big)^{-\beta}\Big)-{\beta \over 1-\beta}{\Big(\frac{k_0}{k}\Big)}^{1-\beta}\Big|\\\nonumber
	&=&{\Big(\frac{k_0}{k}\Big)}^{1-\beta}\Big|\Big(\frac{\Gamma_{k_0+1}/k_0}{\Gamma_{k+1}/k}\Big)^{-\beta}\Big(\underbrace{\frac{1}{k_0}\sum_{i=1}^{k_0}\Big(\frac{\Gamma_{i}}{\Gamma_{k_0+1}}\Big)^{-\beta}-1}_{\Delta_{k_0}}\Big)-\frac{\beta}{1-\beta}\Big|.\\\nonumber
	\end{eqnarray}
	where $((k\Gamma_{k_0+1})/(k_0\Gamma_{k+1}))^{-\beta}\stackrel{\mathbb{P}}{\longrightarrow}1$ and $\Delta_{k_0}\stackrel{\mathbb{P}}{\longrightarrow}\beta/(1-\beta)$ and   follows from Lemmas C.3 and C.4 in \cite{bhatt2019} respectively.  This completes the proof.
\end{proof}

\begin{lemma}
	\label{lem:r-def2}
	With $\beta<0$ and  $$B_{k_0,k}(n)=\frac{k_0}{k}\Big(\frac{Y_{n-k_0,n}}{Y_{n-k,n}}\Big)^{\beta}\log\frac{Y_{n-k_0,n}}{Y_{n-k,n}}+\frac{1}{k}\sum_{i=k_0}^{k-1}\Big(\frac{Y_{n-i,n}}{Y_{n-k,n}}\Big)^{\beta}\log\frac{Y_{n-i,n}}{Y_{n-k,n}},$$ we have as $k,n \to \infty$, $k/n \to 0$ and $k_0=o(k)$
	\begin{equation}
	\label{e:b-gam-log}
	\Big|B_{k_0,k}(n)-\frac{1}{(1-\beta)^2}+\frac{\beta}{1-\beta}\Big(\frac{k_0}{k}\Big)^{1-\beta}\Big(\frac{1}{\beta(1-\beta)}+\log \frac{k}{k_0}\Big)\Big|\stackrel{\mathbb{P}}{\longrightarrow}0.
	\end{equation}
\end{lemma}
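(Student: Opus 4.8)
The plan is to mirror the proof of Lemma~\ref{lem:r-def1}. Put $W_i=(Y_{n-i,n}/Y_{n-k,n})^{\beta}\log(Y_{n-i,n}/Y_{n-k,n})$ and use the elementary identity $\tfrac{k_0}{k}W_{k_0}+\tfrac1k\sum_{i=k_0}^{k-1}W_i=\tfrac1k\sum_{i=0}^{k-1}W_i-\tfrac1k\sum_{i=0}^{k_0-1}(W_i-W_{k_0})$ together with the Pareto representation $(Y_{n-i,n}/Y_{n-k,n})_{0\le i\le k-1}\stackrel{d}{=}(\Gamma_{k+1}/\Gamma_{i+1})_{0\le i\le k-1}$, $\Gamma_j=\sum_{\ell=1}^{j}E_\ell$. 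After the same re-indexing as in Lemma~\ref{lem:r-def1} this gives
\begin{equation*}
B_{k_0,k}(n)\stackrel{d}{=}\Delta_k-\Delta_{k_0,k},\qquad \Delta_k=\frac1k\sum_{i=1}^{k}g\Big(\frac{\Gamma_i}{\Gamma_{k+1}}\Big),\qquad \Delta_{k_0,k}=\frac1k\sum_{i=1}^{k_0}\Big[g\Big(\frac{\Gamma_i}{\Gamma_{k+1}}\Big)-g\Big(\frac{\Gamma_{k_0+1}}{\Gamma_{k+1}}\Big)\Big],
\end{equation*}
where $g(x)=x^{-\beta}\log(1/x)$ on $(0,1)$. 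Note $g\ge 0$, $g$ is bounded (its maximum is $-1/(\beta e)$, attained at $x=e^{1/\beta}$), and $g$ is increasing on $(0,e^{1/\beta})$.

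For $\Delta_k$ I would use $(\Gamma_i/\Gamma_{k+1})_{1\le i\le k}\stackrel{d}{=}(U_{i,k})_{1\le i\le k}$, so that $\Delta_k\stackrel{d}{=}\tfrac1k\sum_{j=1}^{k}g(U_j)$ with $U_j$ i.i.d.\ standard uniform; since $\mathbb{E}\,g(U)=\int_0^1 u^{-\beta}(-\log u)\,du=(1-\beta)^{-2}<\infty$ for $\beta<0$, the law of large numbers (the logarithmic analogue of Lemma~C.4 in \cite{bhatt2019}) yields $\Delta_k\stackrel{\mathbb{P}}{\to}(1-\beta)^{-2}$, the leading term of \eqref{e:b-gam-log}.

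For $\Delta_{k_0,k}$ I would split $\log(\Gamma_{k+1}/\Gamma_i)=\log(\Gamma_{k+1}/\Gamma_{k_0+1})+\log(\Gamma_{k_0+1}/\Gamma_i)$, which rewrites
\begin{equation*}
\Delta_{k_0,k}=\log\frac{\Gamma_{k+1}}{\Gamma_{k_0+1}}\cdot D_{k_0,k}+\Big(\frac{\Gamma_{k_0+1}}{\Gamma_{k+1}}\Big)^{-\beta}\frac{k_0}{k}\cdot\frac1{k_0}\sum_{i=1}^{k_0}\Big(\frac{\Gamma_i}{\Gamma_{k_0+1}}\Big)^{-\beta}\log\frac{\Gamma_{k_0+1}}{\Gamma_i},
\end{equation*}
where $D_{k_0,k}=\tfrac1k\sum_{i=1}^{k_0}\big[(\Gamma_i/\Gamma_{k+1})^{-\beta}-(\Gamma_{k_0+1}/\Gamma_{k+1})^{-\beta}\big]$ is exactly the quantity already handled in \eqref{e:b-2}: the factorization there gives $D_{k_0,k}=\tfrac{\beta}{1-\beta}(k_0/k)^{1-\beta}(1+o_{\mathbb{P}}(1))$ when $k_0\to\infty$ and $D_{k_0,k}=O_{\mathbb{P}}(k^{\beta-1})$ when $k_0$ stays bounded, while the last average is $\stackrel{d}{=}\tfrac1{k_0}\sum_{j=1}^{k_0}g(U_j)$, which tends to $(1-\beta)^{-2}$ in probability when $k_0\to\infty$ and is bounded by $-1/(\beta e)$ otherwise. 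Combining these with $\Gamma_{k_0+1}/\Gamma_{k+1}=(k_0/k)(1+o_{\mathbb{P}}(1))$, $\log(\Gamma_{k+1}/\Gamma_{k_0+1})=\log(k/k_0)+o_{\mathbb{P}}(1)$ (Lemma~C.3 in \cite{bhatt2019}), and the fact that $(k_0/k)^{1-\beta}\log(k/k_0)\to 0$ since a fixed positive power dominates a logarithm, one finds $\Delta_{k_0,k}=\tfrac{\beta}{1-\beta}(k_0/k)^{1-\beta}\big(\log(k/k_0)+\tfrac1{\beta(1-\beta)}\big)+o_{\mathbb{P}}(1)$; subtracting from $\Delta_k$ gives \eqref{e:b-gam-log}.

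The step requiring the most care is the treatment of $D_{k_0,k}$: because it is multiplied by the \emph{diverging} factor $\log(k/k_0)$, the additive estimate $D_{k_0,k}=\tfrac{\beta}{1-\beta}(k_0/k)^{1-\beta}+o_{\mathbb{P}}(1)$ used in Lemma~\ref{lem:r-def1} no longer suffices; one must retain the multiplicative form $\tfrac{\beta}{1-\beta}(k_0/k)^{1-\beta}(1+o_{\mathbb{P}}(1))$ and, likewise, control the ratios $\Gamma_{k_0+1}/\Gamma_{k+1}$ and $\log(\Gamma_{k+1}/\Gamma_{k_0+1})$ to relative rather than absolute precision. The elementary properties of $g$ (positivity, boundedness, monotonicity on the relevant initial interval) do the rest, and the split into the cases $k_0$ bounded versus $k_0\to\infty$ is needed only because the law of large numbers for $\tfrac1{k_0}\sum_{j\le k_0}g(U_j)$ requires $k_0\to\infty$ — in the bounded case each summand of $\Delta_{k_0,k}$, and the deterministic correction as well, are individually $o_{\mathbb{P}}(1)$, so \eqref{e:b-gam-log} is immediate.
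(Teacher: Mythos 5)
Your proposal is correct and follows essentially the same route as the paper: the same representation $B_{k_0,k}(n)\stackrel{d}{=}\Delta_k-\Delta_{k_0,k}$ via the $\Gamma_i/\Gamma_{k+1}$ order-statistic representation, the same splitting $\log(\Gamma_{k+1}/\Gamma_i)=\log(\Gamma_{k+1}/\Gamma_{k_0+1})+\log(\Gamma_{k_0+1}/\Gamma_i)$ (your two terms are algebraically identical to the paper's $\Delta_{2,k_0,k}$ and $\Delta_{1,k_0,k}$), the same appeal to Lemmas C.3 and C.4 of \cite{bhatt2019}, and the same case split on $k_0$ bounded versus $k_0\to\infty$. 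The subtlety you flag --- that the diverging factor $\log(k/k_0)$ forces relative rather than additive control of $D_{k_0,k}$ and of the ratios $\Gamma_{k_0+1}/\Gamma_{k+1}$ --- is precisely what the paper handles by factoring out $\delta_{k_0,k}=(k_0/k)^{1-\beta}\log(k/k_0)\leq 1$ and showing the remaining bracket is $o_{\mathbb{P}}(1)$.
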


\begin{proof}
	Here
	\begin{eqnarray*}
		B_{k_0,k}(n)&\stackrel{d}{=}&\frac{k_0}{k}\Big(\frac{\Gamma_{k_0+1}}{\Gamma_{k+1}}\Big)^{-\beta}\log \frac{\Gamma_{k+1}}{\Gamma_{i}}+\frac{1}{k}\sum_{i=k_0+1}^k\Big(\frac{\Gamma_{i}}{\Gamma_{k+1}}\Big)^{-\beta}\log \frac{\Gamma_{k+1}}{\Gamma_{i}}\\
		&=&\underbrace{\frac{1}{k}\sum_{i=1}^k\Big(\frac{\Gamma_{i}}{\Gamma_{k+1}}\Big)^{-\beta}\log \frac{\Gamma_{k+1}}{\Gamma_{i}}}_{\Delta_{k}}-\underbrace{\frac{1}{k}\sum_{i=1}^{k_0}\Big(\Big(\frac{\Gamma_{i}}{\Gamma_{k+1}}\Big)^{-\beta}\log \frac{\Gamma_{k+1}}{\Gamma_{i}}-\Big(\frac{\Gamma_{k_0+1}}{\Gamma_{k+1}}\Big)^{-\beta}\log \frac{\Gamma_{k+1}}{\Gamma_{k_0+1}}\Big)}_{\Delta_{k_0,k}}.
	\end{eqnarray*}
	By Lemma C.4 in \cite{bhatt2019}, $\Delta_k\stackrel{\mathbb{P}}{\longrightarrow}1/(1-\beta)^2$ as $k \to \infty$.
	\\	
	
	\noindent
	We next show that 
	\begin{equation}
	\label{e:del}
	\Big|\Delta_{k_0,k}-\frac{1}{(1-\beta)^2}\Big(\frac{k_0}{k}\Big)^{1-\beta}-\frac{\beta}{1-\beta}\Big(\frac{k_0}{k}\Big)^{1-\beta}\log \frac{k}{k_0}\Big|\stackrel{\mathbb{P}}{\longrightarrow}0,
	\end{equation}
	which will complete the proof of the Lemma. \\
	If $k_0=k_0(n) \leq M$ for all $n$, then using $\Gamma_i/\Gamma_{k}=O_\mathbb{P}(1/k)$, one can show $\Delta_{k_0,k}=o_{\mathbb{P}}(1)$.
	
	\noindent	In case $k_0=k_0(n) \to \infty$, we write
	\begin{align*}
	\Delta_{k_0,k}&=\Big(\frac{\Gamma_{k_0+1}}{\Gamma_{k+1}}\Big)^{-\beta}\frac{1}{k}\sum_{i=1}^{k_0}\Big(\Big(\frac{\Gamma_{i}}{\Gamma_{k_0+1}}\Big)^{-\beta}\log \frac{\Gamma_{k+1}}{\Gamma_{i}}-\log \frac{\Gamma_{k+1}}{\Gamma_{k_0+1}}\Big)\\
	&=\underbrace{\Big(\frac{\Gamma_{k_0+1}}{\Gamma_{k+1}}\Big)^{-\beta}\frac{1}{k}\sum_{i=1}^{k_0}\Big(\frac{\Gamma_{i}}{\Gamma_{k_0+1}}\Big)^{-\beta}\log \frac{\Gamma_{k_0+1}}{\Gamma_{i}}}_{\Delta_{1,k_0,k}}
	+
	\underbrace{\Big(\frac{\Gamma_{k_0+1}}{\Gamma_{k+1}}\Big)^{-\beta}\log \frac{\Gamma_{k+1}}{\Gamma_{k_0+1}}\frac{1}{k}\sum_{i=1}^{k_0}\Big(\Big(\frac{\Gamma_{i}}{\Gamma_{k_0+1}}\Big)^{-\beta}-1\Big)}_{\Delta_{2,k_0,k}},
	\end{align*}
	where 
	$$\Big|\Delta_{1,k_0,k}-\frac{1}{(1-\beta)^2}\Big(\frac{k_0}{k}\Big)^{1-\beta}\Big|=\Big(\frac{k_0}{k}\Big)^{1-\beta}\Big|\Big(\frac{\Gamma_{k_0+1}/k_0}{\Gamma_{k+1}/k}\Big)^{-\beta}\frac{1}{k_0}\sum_{i=1}^{k_0}\Big(\frac{\Gamma_{k_0+1}}{\Gamma_{i}}\Big)^{-\beta}\log \frac{\Gamma_{i}}{\Gamma_{k_0+1}}-\frac{1}{(1-\beta)^2}\Big|.$$
	By Lemma C.3 in \cite{bhatt2019}, $((k\Gamma_{k_0+1})/(k_0\Gamma_{k+1}))^{-\beta}\stackrel{\mathbb{P}}{\longrightarrow}1$. Using ideas similar to proof of Lemma C.4 in \cite{bhatt2019},  $(1/k_0)\sum_{i=1}^{k_0} (\Gamma_{k_0+1}/\Gamma_i)^{-\beta}\log (\Gamma_i/\Gamma_{k_0+1}) \stackrel{\mathbb{P}}{\longrightarrow} 1/(1-\beta)^2$.  Therefore,
	\begin{equation}
	\label{e:del-1}
	\Delta_{1,k_0,k}-\frac{1}{(1-\beta)^2}\Big(\frac{k_0}{k}\Big)^{1-\beta}=o_{\mathbb{P}}(1).
	\end{equation}
	Furthermore,
	$$\Big|\Delta_{2,k_0,k}-\frac{\beta}{1-\beta}\Big(\frac{k_0}{k}\Big)^{1-\beta}\log \frac{k}{k_0}\Big|=\delta_{k_0,k}\Big|\Big(\frac{\Gamma_{k_0+1}/k_0}{\Gamma_{k+1}/k}\Big)^{-\beta}\frac{\log(\Gamma_{k+1}/\Gamma_{k_0+1})}{\log (k/k_0)}\underbrace{\frac{1}{k_0}\sum_{i=1}^{k_0}\Big(\Big(\frac{\Gamma_{i}}{\Gamma_{k_0+1}}\Big)^{-\beta}-1\Big)}_{\Delta_{2,k_0}} -\frac{\beta}{1-\beta}\Big|$$
	where $\delta_{k_0,k}=(k_0/k)^{1-\beta}\log (k/k_0) \leq 1$.
	
	\noindent If $k_0=k_0(n) \leq M$ for all $n$, then $\Delta_{2,k_0,k}=o_{\mathbb{P}}(1)$ since then $\Gamma_{k_0+1}/\Gamma_{k+1}= O(1/k)$.
	
	\noindent In case $k_0=k_0(n) \to \infty$, by Lemma C.4 in \cite{bhatt2019}, $\Delta_{2,k_0}=\beta/(1-\beta)+o_\mathbb{P}(1)$. Using ideas similar to proof of Lemma C.3 in \cite{bhatt2019},  $$\Big(\frac{\Gamma_{k_0+1}/k_0}{\Gamma_{k+1}/k}\Big)^{-\beta}\log\frac{\Gamma_{k+1}}{\Gamma_{k_0+1}}\Big(\log \frac{k}{k_0}\Big)^{-1}\stackrel{\mathbb{P}}{\longrightarrow} 1.$$
	Therefore,
	\begin{equation}
	\label{e:del-2}
	\Delta_{2,k_0,k}-\frac{\beta}{(1-\beta)}\Big(\frac{k_0}{k}\Big)^{1-\beta}\log \frac{k}{k_0}=o_{\mathbb{P}}(1).
	\end{equation}
	In view of  \eqref{e:del-1} and \eqref{e:del-2}, the proof of  \eqref{e:del} follows.
\end{proof}

\begin{lemma}
	\label{lem:r-def3}
	For $\beta<0$ and  $$B_{k_0,k}(n)=\frac{k_0}{2k}\Big(\log\frac{Y_{(n-k_0,n)}}{Y_{n-k,n}}\Big)^{2}+\frac{1}{2k}\sum_{i=k_0}^{k-1}\Big(\log\frac{Y_{(n-i,n)}}{Y_{n-k,n}}\Big)^{2},$$
	we have as $k,n \to \infty$, $k/n \to 0$ and $k_0=o(k)$
	\begin{equation}
	\label{e:b-gam-log}
	\Big|B_{k_0,k}(n)-1+\frac{k_0}{k}\Big(1+\log\frac{k}{k_0}\Big)\Big|\stackrel{\mathbb{P}}{\longrightarrow}0.
	\end{equation}
\end{lemma}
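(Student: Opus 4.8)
\textbf{Proof proposal for Lemma~\ref{lem:r-def3}.}

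The plan is to follow the same three-step scheme used for Lemmas~\ref{lem:r-def1} and~\ref{lem:r-def2}. First I would pass from the Pareto order statistics to partial sums of exponentials via the representation already invoked there: since $\log Y$ is standard exponential, $\big(\log(Y_{n-i,n}/Y_{n-k,n})\big)_{i=0}^{k-1}\stackrel{d}{=}\big(\log(\Gamma_{k+1}/\Gamma_{i+1})\big)_{i=0}^{k-1}$, with $\Gamma_j=\sum_{\ell=1}^{j}E_\ell$ and $E_\ell$ i.i.d.\ standard exponential. Splitting off the first $k_0$ summands exactly as in those proofs gives
$$B_{k_0,k}(n)\stackrel{d}{=}\underbrace{\frac{1}{2k}\sum_{i=1}^{k}\Big(\log\frac{\Gamma_{k+1}}{\Gamma_{i}}\Big)^{2}}_{\Delta_k}-\underbrace{\frac{1}{2k}\sum_{i=1}^{k_0}\Big(\Big(\log\frac{\Gamma_{k+1}}{\Gamma_{i}}\Big)^{2}-\Big(\log\frac{\Gamma_{k+1}}{\Gamma_{k_0+1}}\Big)^{2}\Big)}_{\Delta_{k_0,k}},$$
so the only difference from the earlier lemmas is that the relevant test function is $g(u)=(\log(1/u))^{2}$ rather than $u^{-\beta}$ or $u^{-\beta}\log(1/u)$.

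For $\Delta_k$ I would use $\Gamma_i/\Gamma_{k+1}\stackrel{d}{=}U_{i,k}$ (uniform order statistics) and the law of large numbers — the same argument as Lemma~C.4 in \cite{bhatt2019} — to obtain $\Delta_k\stackrel{\mathbb{P}}{\to}\tfrac12\int_0^1(\log(1/u))^{2}\,du=\tfrac12\cdot 2=1$. It then suffices to show $\big|\Delta_{k_0,k}-\tfrac{k_0}{k}(1+\log\tfrac{k}{k_0})\big|\stackrel{\mathbb{P}}{\to}0$.

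For $\Delta_{k_0,k}$ I would split into the two regimes of $k_0$, as in Lemma~\ref{lem:r-def2}. If $k_0\le M$ is bounded, then $\Gamma_i/\Gamma_{k+1}=O_{\mathbb{P}}(1/k)$ for $i\le k_0$ forces each bracketed term to be $O_{\mathbb{P}}((\log k)^{2})$, so $\Delta_{k_0,k}=O_{\mathbb{P}}((\log k)^{2}/k)=o_{\mathbb{P}}(1)$ while the target term is $O((\log k)/k)=o(1)$. If $k_0\to\infty$, factoring the difference of squares and substituting $\log(\Gamma_{k+1}/\Gamma_i)=\log(\Gamma_{k+1}/\Gamma_{k_0+1})+\log(\Gamma_{k_0+1}/\Gamma_i)$ yields
$$\Delta_{k_0,k}=\frac{k_0}{k}\,\log\frac{\Gamma_{k+1}}{\Gamma_{k_0+1}}\cdot\frac{1}{k_0}\sum_{i=1}^{k_0}\log\frac{\Gamma_{k_0+1}}{\Gamma_{i}}+\frac{k_0}{k}\cdot\frac{1}{2k_0}\sum_{i=1}^{k_0}\Big(\log\frac{\Gamma_{k_0+1}}{\Gamma_{i}}\Big)^{2},$$
where the two block averages converge in probability to $\int_0^1\log(1/u)\,du=1$ and $\tfrac12\int_0^1(\log(1/u))^{2}\,du=1$ (uniform order statistics and LLN, as in Lemma~C.4 of \cite{bhatt2019}), and $\log(\Gamma_{k+1}/\Gamma_{k_0+1})=\log(k/k_0)+o_{\mathbb{P}}(1)$ by the strong law applied to $\Gamma_{k+1}/(k+1)$ and $\Gamma_{k_0+1}/(k_0+1)$, which is the estimate of Lemma~C.3 in \cite{bhatt2019}.

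The main obstacle is the familiar bookkeeping of the two competing scales $\log(k/k_0)\to\infty$ and $k_0/k\to0$: after writing each block average as its limit plus an $o_{\mathbb{P}}(1)$ remainder, one must verify that in every product the remainder is multiplied by a factor that is at most $O\big(\tfrac{k_0}{k}\log\tfrac{k}{k_0}\big)+O\big(\tfrac{k_0}{k}\big)=o(1)$, using $x\log(1/x)\to0$ as $x\to0^{+}$. This reproduces the argument already carried out for Lemma~\ref{lem:r-def2}, and combining it with $\Delta_k\stackrel{\mathbb{P}}{\to}1$ gives $B_{k_0,k}(n)\stackrel{\mathbb{P}}{\to}1-\tfrac{k_0}{k}(1+\log\tfrac{k}{k_0})$, i.e.\ the claim; no estimate beyond those of \cite{bhatt2019} and Lemmas~\ref{lem:r-def1}--\ref{lem:r-def2} is required.
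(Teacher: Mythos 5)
Your proposal is correct and follows essentially the same route as the paper: the identical Gamma-process representation, the same split $\Delta_k-\Delta_{k_0,k}$, the same two regimes for $k_0$, and the same expansion of the square into the cross term $\frac{k_0}{k}\log\frac{\Gamma_{k+1}}{\Gamma_{k_0+1}}\cdot\frac{1}{k_0}\sum_{i=1}^{k_0}\log\frac{\Gamma_{k_0+1}}{\Gamma_i}$ plus the pure block term, each handled via Lemmas C.3 and C.4 of \cite{bhatt2019}. (Your statement that the block average $\frac{1}{k_0}\sum_{i=1}^{k_0}\log(\Gamma_{k_0+1}/\Gamma_i)$ tends to $1$ is the correct one; the paper's closing line asserting it is $o_{\mathbb{P}}(1)$ is evidently a typo.)
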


\begin{proof}
	Here
	\begin{eqnarray*}
		B_{k_0,k}(n)&\stackrel{d}{=}&\frac{k_0}{2k}\Big(\log \frac{\Gamma_{k+1}}{\Gamma_{k_0+1}}\Big)^2+\frac{1}{2k}\sum_{i=k_0+1}^k\Big(\log \frac{\Gamma_{k+1}}{\Gamma_{i}}\Big)^2\\
		&=&\underbrace{\frac{1}{2k}\sum_{i=1}^k\Big(\log \frac{\Gamma_{k+1}}{\Gamma_{i}}\Big)^2}_{\Delta_{k}}-\underbrace{\frac{1}{2k}\sum_{i=1}^{k_0}\Big(\Big(\log \frac{\Gamma_{k+1}}{\Gamma_{i}}\Big)^2-\Big(\log \frac{\Gamma_{k+1}}{\Gamma_{k_0+1}}\Big)^2\Big)}_{\Delta_{k_0,k}}.\\
	\end{eqnarray*}
	By using ideas similar to proof of Lemma C.4 in \cite{bhatt2019}, $\Delta_k \stackrel{\mathbb{P}}{\longrightarrow} 1$.
	
	\noindent If $k_0=k_0(n) \leq M$ for all $n$, then using $\Gamma_i/\Gamma_k=O_\mathbb{P}(1/k)$, one can show $\Delta_{k_0,k}=o_{\mathbb{P}}(1)$.
	
	\noindent If $k_0=k_0(n)\rightarrow \infty$,
	\begin{equation}
	\label{e:del-new}
	\Big|\Delta_{k_0,k}-\frac{k_0}{k}\Big(1+\log\frac{k}{k_0}\Big)\Big|\stackrel{\mathbb{P}}{\longrightarrow}0.
	\end{equation}
	To this end we write
	\begin{align*}
	\Delta_{k_0,k}&=\frac{1}{2k}\sum_{i=1}^{k_0}\Big(\Big(\log \frac{\Gamma_{k_0+1}}{\Gamma_{i}}+\log \frac{\Gamma_{k+1}}{\Gamma_{k_0+1}}\Big)^2-\Big(\log \frac{\Gamma_{k+1}}{\Gamma_{k_0+1}}\Big)^2\Big)\\
	&=\underbrace{\frac{1}{2k}\sum_{i=1}^{k_0}\Big(\log \frac{\Gamma_{k_0+1}}{\Gamma_{i}}\Big)^2}_{\Delta_{1,k_0,k}}
	+
	\underbrace{\log \frac{\Gamma_{k+1}}{\Gamma_{k_0+1}}\frac{1}{k}\sum_{i=1}^{k_0}\log \frac{\Gamma_{k_0+1}}{\Gamma_{i}}}_{\Delta_{2,k_0,k}}.
	\end{align*}
	Using ideas of the proof of Lemma C.4 in \cite{bhatt2019}, we have
	\begin{equation}
	\label{e:del-new-1}
	\Big|\Delta_{1,k_0,k}-\frac{k_0}{k}\Big|=\frac{k_0}{k}\Big|\underbrace{\frac{1}{2k_0}\sum_{i=1}^{k_0}\Big(\log \frac{\Gamma_{k_0+1}}{\Gamma_{i}}\Big)^2}_{\Delta_{k_0}}-1\Big|=o_{\mathbb{P}}(1).
	\end{equation}
	Furthermore
	\begin{equation}
	\label{e:del-new-2}
	\Big|\Delta_{2,k_0,k}-\frac{k_0}{k}\log \frac{k}{k_0}\Big|=\frac{k_0}{k}\log \frac{k}{k_0}\Big|\frac{\log (\Gamma_{k+1}/\Gamma_{k_0+1})}{\log (k/k_0)}\frac{1}{k_0}\sum_{i=1}^{k_0}\log \frac{\Gamma_{k_0+1}}{\Gamma_{i}}-1\Big|=o_{\mathbb{P}}(1)
	\end{equation}
	where $\log (\Gamma_{k+1}/\Gamma_{k_0+1})/\log (k/k_0) \stackrel{\mathbb{P}}{\longrightarrow}1$ follows from Lemma C.3 in \cite{bhatt2019}. 
	
	\noindent Using ideas similar to proof of Lemma C.4 in \cite{bhatt2019}. $(1/k_0)\sum_{i=1}^{k_0}\log(\Gamma_{k_0+1}/\Gamma_{i})=o_{\mathbb{P}}(1)$.
	
	\noindent In view of  \eqref{e:del-new-1} and \eqref{e:del-new-2}, the proof of Relation \eqref{e:del-new} follows.
\end{proof}

\end{document}